\documentclass{article}
\pdfoutput=1

\usepackage{graphicx}
\usepackage{amsmath,amssymb,amsthm,mathtools}
\usepackage{dsfont}
\usepackage{enumitem}
\usepackage[round]{natbib}
\usepackage{tikz}
\usetikzlibrary{arrows.meta}
\usepackage{caption}
\usepackage{booktabs}
\usepackage{longtable}
\usepackage{subcaption}

\usepackage{pgfplots}
\pgfplotsset{compat=1.18}
\graphicspath{{figures/}}

\input{glyphtounicode}
\usepackage[T1]{fontenc}
\usepackage{sourceserifpro}
\usepackage{sourcecodepro}
\usepackage[hyphens]{url}

\usepackage{geometry}
\usepackage[onehalfspacing]{setspace}
\usepackage{microtype}
\AtBeginDocument{\frenchspacing}

\usepackage[x11names]{xcolor}
\usepackage{fancyhdr}
\fancypagestyle{plain}{\fancyhf{}}

\usepackage{titling}
\pretitle{\centering \bfseries\Huge}
\posttitle{\par \vskip 0.5em}

\date{August 2026}

\usepackage{etoolbox}
\AtBeginEnvironment{titlepage}{%
  \pagestyle{empty}%
  \setlength{\parindent}{0pt}%
}

\usepackage{titlesec}
\titleformat*{\section}{\centering\large\bfseries}
\titleformat*{\subsection}{\bfseries}
\titleformat{\paragraph}[runin]{\itshape}{}{}{}[.]
\titlelabel{\thetitle.\quad}

\usepackage[
  pdftex,
  hidelinks,
  hypertexnames=false,
  pdfpagemode=UseNone,
  pdfdisplaydoctitle=true
]{hyperref}

\theoremstyle{plain}
\newtheorem{proposition}{Proposition}
\newtheorem{lemma}{Lemma}
\newtheorem{definition}{Definition}

\newtheorem{assumption}{Assumption}

\newcommand{\E}{\mathbb{E}}

\newcommand{\Var}{\mathrm{Var}}
\newcommand{\Cov}{\mathrm{Cov}}

\newcommand{\ind}{\mathds{1}}

\hypersetup{pdftitle={Shared Bidding Algorithms and Competition: Evidence from Electricity Markets}}

\title{Shared Bidding Algorithms and Competition: Evidence from Electricity Markets}
\author{Nicolas Eschenbaum\thanks{Swiss Economics, Zürich, Switzerland. \href{mailto:nicolas.eschenbaum@swiss-economics.ch}{nicolas.eschenbaum@swiss-economics.ch}.}}

\begin{document}
\maketitle

\begin{abstract}
Competing firms increasingly delegate market decisions to
algorithms supplied by the same third-party providers. We study whether
a shared algorithm leads competitors to internalise one another's
profits, using data from the Australian National Electricity Market,
where batteries' bids are observed at 5-minute frequency and can
be linked to an autobidding provider. Bids constructed by
the same provider co-move, and do so more strongly after a disclosure
reform made the scarcity state easier to observe: the same
information that steers batteries towards efficient arbitrage also
synchronises the bids of competitors who share a provider. To separate
co-movement from joint profit maximisation,
we perform a conduct test by estimating each battery's dynamic value of stored energy and reclearing
the market under counterfactual bids. We find that batteries forgo 
profitable dispatch in the evening peak when
it would lower the profit of same-provider batteries owned by
rival firms. The estimated conduct parameter is close to one. 
But this effect arises only where a provider's share
of near-margin battery capacity exceeds roughly $30\%$. The identified conduct costs
consumers an annualised \$$5.5$~million given the battery fleet from 2025, 
and concentration analysis based on ownership would treat 
these batteries as independent competitors and miss the impact of shared 
autobidding software.
\end{abstract}

\medskip
\noindent{\small
\emph{Keywords}: algorithmic bidding; algorithmic collusion; algorithm
providers; market conduct; battery storage; electricity markets.\\
\emph{JEL codes}: D43; D44; L41; L86; L94.}

\section{Introduction}

Firms increasingly rely on third-party algorithms to choose their prices or
bids. In many markets, firms' actions are now intermediated by 
an upstream layer of algorithm providers -- forecast vendors, data analytics providers, or bidding platforms that serve multiple downstream competitors. 
This has led to antitrust authorities raising concerns that the use of algorithms 
may lead to algorithmic collusion \citep[e.g.,][]{FTC2017, AutoriteBundeskartellamt2019, CMA2021Algorithms} 
and a growing literature studies whether algorithms learn anti-competitive
strategies \citep[e.g.,][]{CalvanoEtAl2020, AskerFershtmanPakes2024}.

In this paper, we provide empirical evidence on the effects that shared 
algorithms and algorithm provider concentration have on downstream market
conduct. We develop a hand-collected mapping of autobidding providers to 
utility-scale batteries in the Australian National Electricity Market (NEM). 
Autobidding software allows battery owners to specify high-level goals and 
constraints (such as their risk appetite) and the autobidding software 
then automatically calculates and submits the optimised bid stacks.\footnote{Very similar autobidding software is also commonly used in online ad auctions, see \cite{Aggarwal2024autobidding}.} 
We combine this with 5-minute bidding, price, and dispatch data of four
bidding zones in the NEM to construct a detailed panel of market behaviour
and upstream provider concentration in 2025/2026.

Electricity markets -- and utility-scale batteries in particular -- have
recently faced regulatory scrutiny stemming from concerns about algorithmic
collusion associated with the rapid expansion of storage
\citep{AEMC2024AlgorithmicCollusion, AFM2024AlgorithmicCollusion}.
Batteries interact across multiple concurrent markets at high
frequency -- features typically associated with elevated collusion
risk. At the same time, the complexity of the physical and market
constraints, together with the speed of bidding, mean that effective
operation requires algorithms, which are typically provided by
third-party firms.

We document a substantial degree of software provider concentration
in the NEM, with two systems -- Tesla Autobidder and Fluence
Mosaic -- operating two-thirds of known-provider battery capacity. The use of common
software produces substantial synchronisation in the level of bids
and in the timing of rebids among same-provider batteries. A July
2025 disclosure reform (ERI) made the aggregate regional state of
charge in the last 5-minute interval, and the per-battery levels
from the previous trading day, public. Combined with the observation
of same-provider batteries operating across multiple bidding regions,
this lets us measure how a stronger public scarcity signal changes
same-provider co-movement. The use of the same architecture, parameters,
and bidding logic may be expected to lead to 
more similar bidding responses as the public signal of the underlying 
scarcity state becomes stronger. We find that improved scarcity
observability indeed increases same-provider synchronisation. Both the 
level of bids and the timing of rebids are strongly correlated among 
same-provider batteries and this correlation increases post-ERI. The same
disclosure also improves the steering of batteries towards dynamic
arbitrage: bids rise in upper bid bands during scarce intervals and
fall during abundant intervals, and battery bids respond more
strongly to their own state of charge. 

We then perform a conduct test of bidding behaviour. We first estimate a 
dynamic Bellman function to determine the opportunity cost of the state of 
charge of a battery. We proceed to construct local deviations to individual 
battery bid stacks and reclear the market to determine the counterfactual 
market outcome.\footnote{The counterfactual reclearing co-optimises
energy and Frequency Control Ancillary Services (FCAS) markets simultaneously,
and we calculate the change in profit across markets.} We focus on deviations that move withheld capacity -- 
i.e., capacity in bid bands above the clearing price -- down to the marginal 
bid band at the given interval. If batteries are engaging in strategic withholding, 
this deviation should be profitable. Each deviation defines a revealed-preference 
moment inequality that tests whether behaviour can be rationalised by 
owner-level profit maximisation or instead requires internalisation of 
profits of same-provider batteries owned by other firms. 
A conduct parameter value of $\lambda=0$ corresponds to owner-level
incentives, while $\lambda=1$ corresponds to full internalisation of
same-provider, different-owner battery profits. Because the test
requires same-provider batteries operated by different owners with
relevant near-margin MW in the clearing interval, we can only run
the test for batteries served by the two large commercial providers
in our sample: Tesla Autobidder and Fluence Mosaic.

We estimate the conduct parameter as a continuous function of the
focal provider's near-margin share over evening-peak hours, when
battery market power is highest. The test rejects owner-level
incentives in energy-market deviations whenever the same-provider
share lies between roughly $30\%$ and $70\%$. Below about $30\%$ the test retains power but finds no
internalisation: the improvement in fit stays near zero and rises
steeply at $30\%$, pointing to a genuine concentration threshold.
Above about $70\%$ the provider's MW is again dominated by a single
owner and the multi-owner sample the test relies on becomes too
thin.\footnote{In finite samples
$\widehat\lambda$ in a low-power band can sit at the $[0,1]$
boundary because the objective is essentially flat. We treat
$\widehat\lambda=1$ as evidence of same-provider conduct only when
the moment objective falls materially below $Q_n(0)$ and the
subsampling confidence set excludes zero.} FCAS deviations show no
conduct signal in either provider.

We then provide a series of robustness exercises that rule out 
alternative interpretations of the finding. First, unobserved
owner-level costs cannot account for the results, because the required costs
would have to exceed the maximum movement we see in our Bellman
opportunity-cost estimates across numerical sensitivity checks, and
in two Tesla bands no finite per-MWh charge can rationalise the data.
Second, the results do not reflect hindsight. We date each operative
quantity bid to its submission timestamp and find that the conduct
signal is concentrated in rebids submitted within fifteen minutes of
dispatch, when five-minute predispatch forecasts are accurate, and
vanishes for bids lodged more than an hour ahead. Third, the results
are specific to the provider coalition. For each focal deviation we
construct placebo coalitions of different-provider batteries near the
margin in the same interval, matched to the same-provider coalition
on battery count and installed capacity so that they carry the same
mechanical price exposure. None of $500$ matched placebo coalitions
improves the fit of the conduct objective as much as the actual
same-provider coalition ($p\le0.010$). Finally, a positive control
shows that the estimator recovers full internalisation of profits
across batteries of the same owner, where joint-profit maximisation should
be recovered by construction.

The remaining explanation is thus that the conduct test identifies 
algorithmic profit internalisation. Two features of the setting 
establish that the identified conduct is economically meaningful. 
First, these batteries are non-negligible price setters. The 
reclearing analysis shows that a single battery's most profitable 
owner-only deviation moves the regional clearing price by $\$1.28$/MWh 
at the median and by more than $\$6$/MWh in the upper decile. These are 
substantial price effects of a single battery owner, given that the 
market also includes other dispatchable assets such as gas or hydro plants.
Second, the internalisation arises at the software provider rater than 
ownership level. 
In the identifying intervals the focal provider's near-margin capacity 
is spread across $2.2$ separately owned firms on average, and $83\%$ 
of the internalised profit spillover accrues to batteries owned by a 
different firm than the one deviating -- i.e., firms that ownership-based 
concentration analysis treats as independent competitors. To the best 
of our knowledge, we are the first to provide empirical evidence on 
the effect of algorithm-provider concentration on downstream market conduct

To assess the dollar size of the identified conduct, we use the
reclearing engine to compute, for each focal battery-interval in
the identifying range, the regional clearing-price difference the
owner-only counterfactual deviation would have generated. The
implied price difference is about \$$1.28$/MWh at the median and
\$$2.70$/MWh on average, corresponding to an annualised consumer
cost of approximately \$$5.5$~million ($\$3.1$m attributable to
Tesla Autobidder and \$$2.4$m to Fluence Mosaic). This is a local
partial-equilibrium estimate that holds all other bid stacks fixed
and does not solve for a new market equilibrium under alternative
conduct.\footnote{The utility-scale battery fleet has since grown
substantially in the NEM and continues to grow, so the same
identifying signal applied to a larger fleet would imply a
correspondingly larger cost.}

\paragraph{Related Literature} There is a large and rapidly growing literature on algorithmic collusion.
Seminal works have shown how reinforcement-learning algorithms may converge to
supra-competitive pricing and policies that show patterns of collusive behaviour
\citep{CalvanoEtAl2020,AskerFershtmanPakes2024,BanchioMantegazza2024}.
Related work has documented the emergence of algorithmic collusion in simulation
studies across various scenarios and settings.\footnote{See, for example,
\citet{Eschenbaumetal2022,Klein2021,WaltmanKaymak2008,KimbroughMurphy2009,
CooperHomemDeMelloKleywegt2015,HansenMisraPai2021a,KastiusSchlosser2022,
SanchezCartasKatsamakas2022,MeylahnDenBoer2022,AbadaLambinTchakarov2024}.}
\citet{AbadaLambin2023} in particular study the operation of storage assets in electricity
markets and consider the Tesla Autobidder software -- one of the two main players
in our setting -- as their motivating example.
Recent papers further emphasize that the relevant issue is not only
autonomous learning, but also the delegation of decision-making, information
processing, and implementation to a common third-party algorithm.\footnote{See \citet{Harrington2021,Harrington2025hub,HarringtonOrtner2025,
SugayaWolitzky2025disclosure,AleksenkoMiklosThal2026}.}

Our work is most closely related to \citet{CalderWangKim2026}, 
who similarly use hand-collected data on
algorithmic pricing adoption in multifamily rental markets and 
combine reduced-form evidence with a structural conduct
test to distinguish own-profit from same-provider joint-profit incentives.
 In contrast to their work, we do not observe algorithm adoption 
over time, because in our setting autobidding is already widespread. 
Instead, we observe bid stacks themselves and exploit the fact that we 
can evaluate counterfactual market outcomes using a very accurate clearing engine. 
Rather than estimating the effect of algorithm adoption, our conduct 
test estimates whether observed battery bids are rationalised by owner-level 
incentives, and whether violations are systematically reduced when 
same-provider profit effects are internalised. We then relate the 
estimated effects to the provider's share of near-margin battery 
capacity in the bidding interval. This allows us to ask whether 
internalisation of same-provider profits becomes systematically 
more informative as provider concentration rises, and whether there 
exists a threshold level of concentration at which this becomes the case.

The only other empirical study of
the effect of algorithm adoption on market conduct that we are aware of is
\citet{AssadClarkErshovXu2024}, who show that adoption of pricing
algorithms by competing gasoline stations leads to increased market
prices. A related empirical literature studies repricing
software in online retail markets, where the algorithm reprices an
existing posted price in response to rivals rather than constructing
the price from primitives, and shows that the use of repricers can
increase market prices \citep{BrownMacKay2023, Musolff2025algorithmic}.
We contribute evidence from a market in which the algorithmic decision
itself -- a submitted bid stack across ten price bands -- is directly
observed, and the first empirical evidence on \emph{bidding}
algorithms rather than pricing or repricing software.

Moreover, two effects from the recent theoretical literature on
common-algorithm providers arise in our setting. 
We show this with a simplified, stylised model in which common
autobidding providers map scarcity signals into battery bid stacks. 
As in \citet{HarringtonOrtner2025},
provider concentration amplifies the aggregate price response to
common scarcity information. And similarly to \citet{AleksenkoMiklosThal2026},
consumer harm from common-information pricing can be decomposed into a
cross-seller correlation channel and an across-state dispersion
channel. 
The intuition is as follows. Because batteries face a dynamic opportunity cost of discharge, a
higher scarcity estimate raises the opportunity cost and thus the
bids, leading to correlated bids of same-provider batteries. As a
result, the market outcome becomes a function of provider
concentration which amplifies the aggregate price impact. The more 
concentrated the software provider layer is, the stronger the  
market price moves with the scarcity signal. Consumers then face the 
highest prices during those time intervals in which their consumption is 
highest, leading to a reduction in consumer welfare. Our simple model
shows how these effects can arise in a setting with bids and
dynamic storage.\footnote{Our model is intentionally kept simple, and 
we abstain from a complete model of the repeated auction setting with capacity 
constraints and dynamic storage.} 

Our paper speaks to an active competition-policy debate. Competition
authorities have warned that pricing and bidding algorithms may facilitate
coordination when competitors rely on a common software intermediary, share
competitively sensitive data through that intermediary, or allow software
defaults and implementation architecture to substitute for independent
decision-making \citep{FTC2017,Monopolkommission2018,
AutoriteBundeskartellamt2019,CMA2018PricingAlgorithms,CMA2021Algorithms,
AFM2024AlgorithmicCollusion,HovenkampSchrepel2026}. 

This issue has become
a concern in electricity markets. The \citet{AEMC2024AlgorithmicCollusion} argues
that algorithmic bidding is already pervasive in the NEM, that the NEM is
increasingly vulnerable to algorithmic collusion, and that policy needs to
balance the efficiency gains from AI against the risk that more information and
more capable algorithms make coordination easier to sustain. We directly contribute 
to this debate by measuring the provider layer, and testing whether 
same-provider profit effects become informative precisely when a provider controls a large
share of the near-margin battery capacity. 
In our setting, battery owners are the nominal decision-makers. 
However, the complexity of calculating the optimal bids under the complex market constraints 
imposed by the market regulator, as well as the physical constraints of the 
battery and high frequency of bidding makes it likely that battery owners rarely 
override the bids that the autobidding software calculates. 

Our conduct test builds on structural conduct testing and
moment-inequality methods. In differentiated-products markets, conduct is often
tested by estimating demand and recovering marginal costs under alternative conduct models, 
and testing whether the resulting cost shocks satisfy exclusion restrictions
\citep{BerryHaile2014,BackusConlonSinkinson2021}. This is the route followed
by \citet{CalderWangKim2026}. In our setting, short-run load is approximately perfectly inelastic, 
so there is no differentiated-products demand system whose price elasticities can be used to infer
markups in the usual way. The central clearing mechanism instead
offers a different possibility: exact reclearing of feasible bid-stack deviations. 
The resulting revealed-preference inequalities builds on the moment-inequality approaches based on optimality or
no-profitable-deviation restrictions \citep{BajariBenkardLevin2007,PakesPorterHoIshii2015,HoPakes2014}.

The object our conduct test estimates -- a weight on rival profits in a firm's effective objective -- 
is also the central object of the common-ownership literature, which asks whether diversified shareholders 
lead firms to internalise their competitors' profits \citep{AzarSchmalzTecu2018,BackusConlonSinkinson2021}. 
In our setting this spillover can be measured directly, because reclearing returns the dollar profit effect of a deviation on 
every rival battery. More importantly, the conduct parameter (or weight) that we estimate is associated 
with a different term: not who owns the asset, but who runs the bidding algorithm.

A related industrial-organisation literature studies strategic bidding and
storage market power. Supply-function competition is the canonical framework
for strategic bidding in uniform-price electricity auctions
\citep{KlempererMeyer1989,Vives2011,Holmberg2008}, and repeated uniform-price
auctions can facilitate tacit collusion by weakening deviation incentives
\citep{Fabra2003}. Empirical work has used detailed bidding data to study
strategic behaviour and dynamic constraints in electricity markets
\citep{Reguant2014}. A newer literature studies how storage changes market
outcomes, investment incentives, and the integration of renewable generation
\citep{Butters2021,AndresCerezoFabra2023}. Our contribution is to study
utility-scale batteries as strategic bid-stack participants whose actions are
increasingly intermediated by common autobidding providers.

Lastly, our analysis of bidding behaviour is also related to the operational literature that models batteries as dynamic optimisers that choose
charge, discharge, and ancillary-service participation subject to
state-of-charge, efficiency, capacity, degradation, and market constraints.
This is directly relevant in the NEM, where batteries co-optimise energy and
FCAS participation at high frequency. Closest to our work is \citet{KarimiArpanahiPourmousaviMahdavi2024}, who estimate battery energy storage system (BESS) operating
characteristics from public NEM data and solve a receding-horizon scheduling
problem across energy and FCAS markets, including the FCAS trapezium, while
\citet{PrakashBruceMacGill2025} study the role of future price information for
storage scheduling in the NEM. We build on their formulation of batteries as constrained
dynamic optimisers for our analysis and the Bellman estimation, but study the submitted bid stacks themselves and the
provider layer that helps construct them.

The paper proceeds as follows. Section~\ref{sec:battery-bidding} describes
battery bid construction in the NEM and discusses a stylised model that guides
the empirical analysis. Section~\ref{sec:empirics} constructs the empirical
panel, documents synchronisation and state-responsive bidding around the ERI
reform, estimates the dynamic value of stored energy, and implements the
counterfactual-reclearing conduct test. The final section concludes.

\section{Battery Bidding and Autobidder Providers}\label{sec:battery-bidding}

This section defines how battery optimisation maps into observed bids and why
provider identity enters that mapping. We proceed in two steps. First, we
describe the dynamic battery problem and the bid stack submitted to the NEM, as 
well as the role of autobidders.
Second, we introduce a stylised model that isolates the main mechanisms studied in
the empirical analysis.

Throughout, we denote batteries by $i,j$, owners by $o$, providers by $k$,
regions by $r$, intervals by $t$, and bid bands by $b$. Upper-case Latin
letters denote stocks, prices, and counts ($S$ stored energy, $P$ price,
$V$ the continuation value, $K$ number of providers); bold
lower-case letters denote vectors of the corresponding scalars ($\mathbf b$
a bid stack). Greek letters denote parameters, unobservables, and profit
objects ($\theta$ scarcity, $\lambda$ the conduct parameter, $\pi$ a
single-battery per-period profit flow, $\Pi$ an owner-portfolio payoff
inclusive of continuation value); calligraphic capitals denote sets and
information structures ($\mathcal B$ the feasible stack, $\mathcal I$ an
information set); and hats denote estimators ($\widehat\lambda$,
$\widehat m_{it}$). 

\subsection{Utility-Scale Batteries and Bid Construction}\label{subsec:battery-optimization}

Batteries' ability to shift generated energy across the trading day, i.e., dynamic
arbitrage, becomes very influential in electricity markets with substantial
photovoltaic (PV) generation. As a result, storage is now one of the fastest-expanding 
power-sector technologies.\footnote{The IEA
reports 108~GW of new battery storage capacity deployed in 2025, around 40\%
more than in 2024, with installed capacity eleven times higher than in 2021
\citep{IEA2026GlobalEnergyReview}. In the United States, developers plan
24~GW of utility-scale battery additions in 2026, with Texas alone accounting
for 12.9~GW of planned additions \citep{EIA2026CapacityAdditions}.} For example, in ERCOT 
(the energy market in Texas), installed battery capacity reached 14.1~GW by July~2025, almost triple its
early-2023 level, and batteries discharged more than 7~GW during the evening
ramp for the first time \citep{ERCOTMonthlyJuly2025}. The NEM is smaller but
follows the same trajectory: AEMO reports a record 64~GW connection pipeline in
the December 2025 quarter, with battery storage accounting for 46\% of projects
and hybrid solar-battery projects for another 19.7\%
\citep{AEMO2026DecemberQuarterPipeline}. As storage becomes a larger share of flexible capacity, 
the way batteries 
construct bid stacks can increasingly affect both arbitrage efficiency and price formation 
in the evening hours when PV production subsides \citep{Butters2021, AndresCerezoFabra2023}.

Formally, a utility-scale battery is a constrained dynamic asset. 
Time is discrete: in each 5-minute NEM dispatch interval the
battery chooses charge, discharge, and FCAS enablement subject to energy,
power, and market-validation constraints.\footnote{FCAS stands for
Frequency Control Ancillary Services. In the NEM, AEMO procures these services
to keep system frequency close to 50~Hz when realised generation and load differ
from expected levels. Raise services increase net injection or reduce load when
frequency is too low; lower services reduce net injection or increase load when
frequency is too high. These services include both continuous regulation and
contingency response over short horizons. Batteries are well suited to FCAS
because inverter controls can change charge or discharge quickly, but FCAS
enablement reserves headroom or footroom and is therefore jointly constrained
with energy dispatch.} Let $S_{it}$ denote stored energy of battery
$i$ at interval $t$.\footnote{$S_{it}\in[0,\bar S_i]$, where $\bar S_i$
is the battery's energy capacity.} The shadow value of $S_{it}$ is the
intertemporal opportunity cost of discharge: using stored energy in
the current interval lowers the option value of using the same energy
later in the peak.

Figure~\ref{fig:battery-role} shows the two features of battery bidding that
matter for the analysis. Panel~A plots the daily arbitrage cycle. Solar generation
compresses residual demand during the day, when batteries tend to charge, while
the evening ramp creates the high-value discharge window. This makes battery
bidding a finite-horizon inventory problem within the trading day. Panel~B
measures the corresponding market-power state. Batteries matter most for price
formation when battery capacity is close to the regional clearing margin, which
occurs primarily during the peak evening intervals.

Figure~\ref{fig:battery-role} focuses on the energy market because
the energy clearing margin is the simplest way to show the dynamic
arbitrage and price-setting logic. Our
empirical analysis, however, evaluates operation and profits across both energy
and FCAS markets. This distinction matters because FCAS enablement is jointly
constrained with energy dispatch: capacity allocated to one market changes the
headroom or footroom available in the other.

\begin{figure}[t]
\centering
\caption{Battery operation and the market-power state}
\label{fig:battery-role}
\begin{subfigure}{0.49\linewidth}
    \centering
    \includegraphics[width=\linewidth]{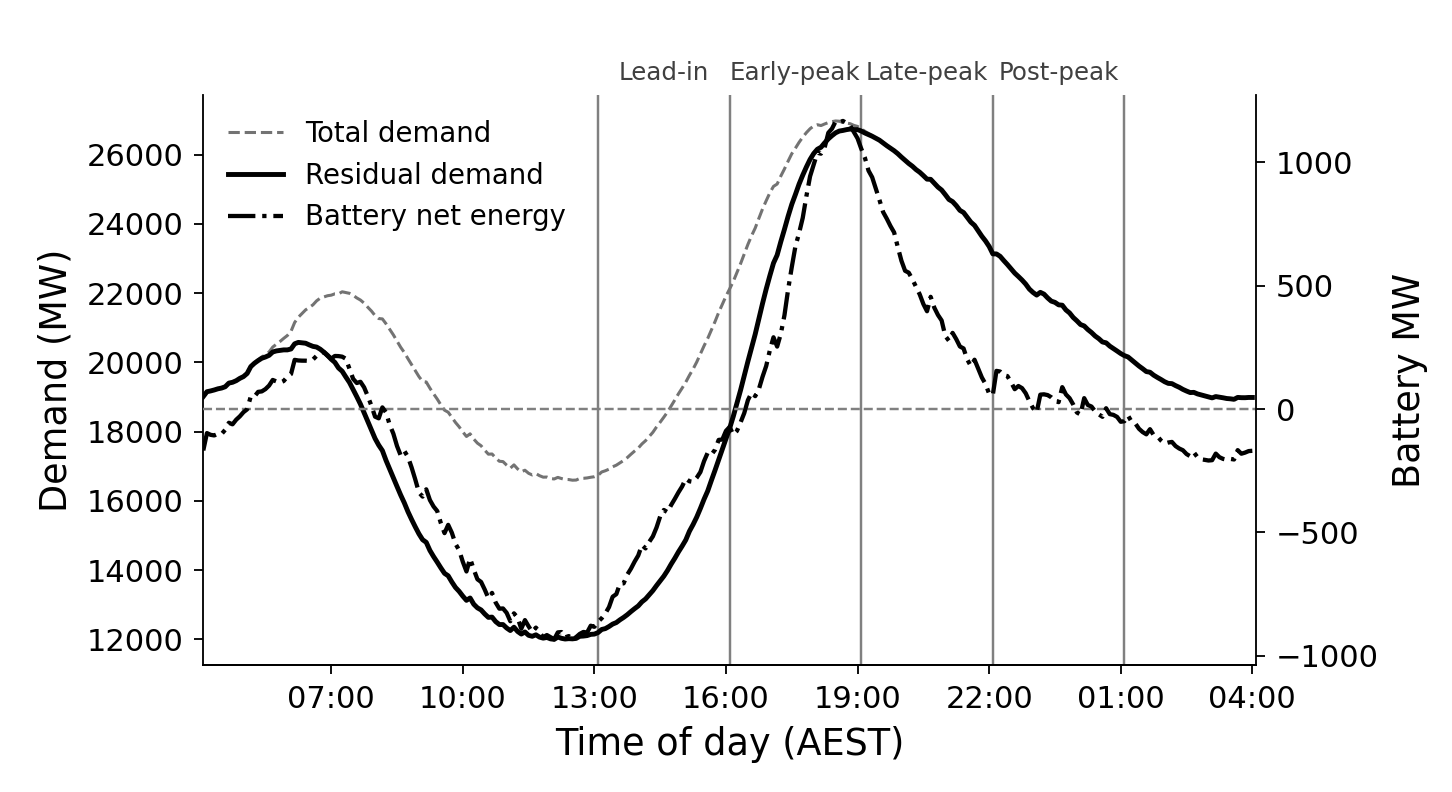}
    \caption{Daily arbitrage}
\end{subfigure}
\hfill
\begin{subfigure}{0.49\linewidth}
    \centering
    \includegraphics[width=\linewidth]{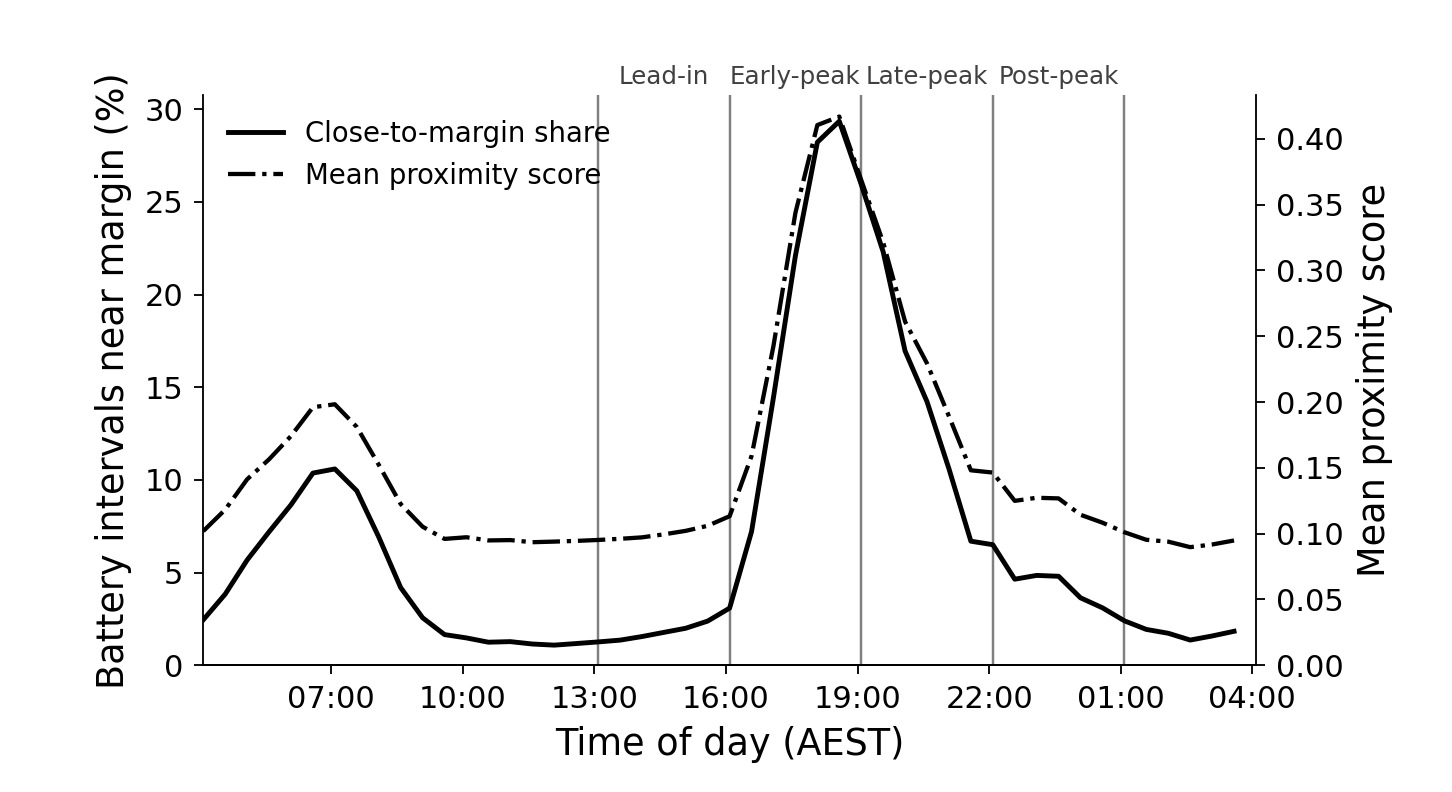}
    \caption{Near-margin battery capacity}
\end{subfigure}
\vspace{2pt}
{\footnotesize \textit{Notes:} Panel~A plots average demand, residual
demand, and net battery output over the day. Panel~B measures when
battery capacity is close to the regional clearing margin.}
\end{figure}

The NEM implements this decision through regional uniform-price auctions that
clear every 5 minutes. Within each region, AEMO dispatches the least-cost
feasible stack and sets the regional reference price (RRP) at the marginal accepted
offer. A NEM energy bid is lodged for a trading day running from 4{:}00~a.m. on
day~$t$ to 4{:}00~a.m. on day~$t+1$ and contains 288 5-minute dispatch
intervals. For each unit, participants set up to ten monotonically
increasing price bands and, for each interval, the MW available in each band
together with technical parameters such as maximum availability and ramp rates.
For scheduled bidirectional batteries, bids are submitted separately for the
load and generation sides, so a battery effectively has ten charge bands and ten
discharge bands. 

The timing of bids separates price choices from quantity choices. Price bands
for trading day~$t$ must be lodged by 12{:}30~p.m.~AEST on day~$t-1$ and are
then fixed for the trading day. Quantities and related technical parameters may
be rebid after that deadline through the relevant dispatch interval, subject to
the NEM rebidding rules: each rebid must provide a brief, specific, and
verifiable reason and the time of the event giving rise to the rebid. For a
given interval, the operative bid is the latest bid captured for dispatch for
that interval.\footnote{For technical details, see AEMO's bidding
specification, spot market operations timetable, and bidirectional-unit
validation material.} The economically relevant object in our application is
the generation-side stack, since it governs discharge offers into the regional
energy auction. Figure~\ref{fig:bid-stack-schematic} illustrates a battery's
bid stack. For a given clearing price $P^*$, the accepted
energy discharge of battery $i$ is the cumulative accepted band quantity.
Moving capacity from high bands towards the clearing band raises the
probability of dispatch, but also uses scarce stored energy and can lower the
market price.

\begin{figure}[t]
\centering
\caption{Schematic generation-side bid stack with ten price bands}
\label{fig:bid-stack-schematic}
\begin{tikzpicture}[x=0.095cm,y=0.045cm]
    \fill[gray!12] (56,55) rectangle (101,114);
    \draw[->] (0,0) -- (106,0) node[right]{MW offered};
    \draw[->] (0,0) -- (0,118) node[above]{Bid price};
    \draw[very thick] (0,10) -- (8,10) -- (8,15) -- (17,15)
        -- (17,22) -- (27,22) -- (27,31) -- (38,31)
        -- (38,42) -- (50,42) -- (50,55) -- (61,55)
        -- (61,70) -- (72,70) -- (72,83) -- (84,83)
        -- (84,96) -- (93,96) -- (93,110) -- (100,110);
    \draw[dashed] (0,55) -- (104,55) node[right]{$P^*$};
    \draw[<->] (0,-10) -- (56,-10) node[midway,below]{accepted discharge};
    \draw[<->] (56,116) -- (100,116) node[midway,above]{parked MW};
\end{tikzpicture}\\
\vspace{2pt}
{\footnotesize \textit{Notes:} The clearing price $P^*$ is set by the
marginal accepted offer. Capacity above $P^*$ is withheld from energy
dispatch in that interval unless rebid or repriced.}
\end{figure}

The feasible stack is joint across markets. The difference between energy and FCAS bids is that the quantity offered in FCAS markets is 
enabled response capacity rather than energy dispatch. For batteries, FCAS
enablement is feasible only conditional on the energy operating point, because
raise and lower services require headroom and footroom. AEMO validates these relationships through
service-specific trapezium constraints, which we describe formally in Appendix~\ref{app:fcas-trapezium}.
The submitted stack is therefore the
output of a complex, high-frequency constrained dynamic optimisation problem.

Finally, an autobidder maps forecasts, telemetry, market data, owner settings,
and risk preferences into this feasible stack. Public descriptions make clear
that ``autobidder'' is a class of bid-construction and dispatch-control tools,
not a single standardised product. AEMO describes auto-bidding systems as
software that automatically carries out bids and rebids under pre-set
parameters, typically using 5-minute pre-dispatch and dispatch data, so that
auto-rebidding often occurs less than one hour ahead of dispatch.\footnote{See
AEMO, \emph{2025 General Power System Risk Review Report -- Draft},
Section~5.9. Public document:
\url{https://www.aemo.com.au/-/media/files/stakeholder_consultation/consultations/nem-consultations/2024/2025-general-power-system-risk-review/2025-draft-gpsrr-report.pdf}.}
Commercial offerings used or marketed for Australian batteries
include Tesla's \emph{Autobidder}, Fluence's \emph{Mosaic}, W\"artsil\"a's
\emph{GEMS}/\emph{IntelliBidder}, HARD Software's \emph{Infolite Auto
Bidding}, Habitat Energy's \emph{EVOLVE} optimisation service, and newer
NEM-specific products such as OptiGrid.

The relevant decision-maker remains the registered market participant, which may
bid internally or contract optimisation services to a software or trading
provider. Providers are therefore heterogeneous: some are stand-alone software
vendors, some are integrated with the plant-control stack, some combine software
with a trading desk, and some are in-house systems run by large asset owners.
What they share is the same basic role in bid construction. The provider takes
telemetry and market data as inputs, combines them with owner-supplied settings
such as SoC limits, degradation or warranty constraints, FCAS participation
choices, contract positions, and risk tolerances, and then forecasts prices and
co-optimises charge, discharge, and offer construction for an asset or customer
portfolio. 

However, the complexity of calculating the optimal bid stacks over a trading
day and across the sequence of 5-minute auctions -- together with the
technical difficulty of submitting them to the market operator -- makes it
difficult for an owner to override what the software has determined and
submitted. An owner's agency over individual bids is therefore limited, and
the bids the software constructs are unlikely to be changed.

This delegation creates the provider-layer object we study. In the empirical
mapping, 42 usable batteries can be linked to 11 identified providers across
21 provider-region combinations; the two largest commercial providers are Tesla
Autobidder and Fluence Mosaic.\footnote{Appendix~\ref{app:provider-id}
describes the provider mapping and the confidence classification.} A common
provider can create correlated bid construction whenever clients share
forecasts, update schedules, or implementation architecture, and can also be the
level at which same-provider profit effects are internalised.

\subsection{A Stylised Model}\label{subsec:mechanisms}

This section outlines a stylised model to explain how common providers affect batteries' bids. The aim is not to reproduce the NEM auction setting, but to isolate the key mechanisms of the setting. We start from the primitive objects of the empirical analysis and discuss three implications. \textit{First}, autobidding makes bids load on provider-driven information, because the provider's algorithm turns the available information (private or public) into the bid stacks. \textit{Second}, this common provider information channel generates bid co-movement, and it has a larger price effect when the provider's share is higher. Thus, even if we observe co-movement of same-provider batteries or stronger price effects in intervals where capacity is more concentrated among providers, we cannot conclude this implies coordinated conduct. \textit{Third}, however, if the provider is implementing owner-level optimisation, then observed bid stacks must satisfy owner-level no-profitable-deviation inequalities; if they do not, same-provider profit internalisation is an alternative conduct hypothesis. This will be the basis for our conduct test.

Formally, consider a single bidding region $r$ -- one of the NEM's regional markets -- and a 5-minute dispatch interval $t$. Batteries in the region are indexed by $i\in I_r$.
Battery $i$ is owned by $o_i$ and uses autobidding provider $k_i$. Let
$N_k:=\{i:k_i=k\}$ denote the set of batteries using provider $k$. Battery
$i$ enters the interval with stored energy $S_{it}$ and submits a feasible
multimarket bid stack \(\mathbf b_{it}\in\mathcal B_i(S_{it},X_{rt})\), where
$X_{rt}$ collects public market states and technical constraints. The set
$\mathcal B_i(\cdot)$ includes the energy bid bands, FCAS bid bands, power
limits, state-of-charge restrictions, and the FCAS trapezium constraints
described in Appendix~\ref{app:fcas-trapezium}. Given the vector of submitted
stacks $\mathbf b_t=(\mathbf b_{it})_{i\in I_r}$, the market clearing engine
maps bids and states into prices, dispatch, FCAS enablement, and current-period
profits.

Let $\{\theta_{r\tau}\}_{\tau\ge t}$ denote the regional scarcity process. The
current realisation $\theta_{rt}$ summarises contemporaneous tightness of the
regional stack, while the future path
$(\theta_{r,t+1},\theta_{r,t+2},\ldots)$ determines the continuation value of
stored energy. Provider $k$ observes an information set $\mathcal I_{krt}$,
including public information and provider-specific forecasts. We write
\(m_{krt}:=\E[\theta_{rt}\mid \mathcal I_{krt}]\) as shorthand for the
provider's current-scarcity assessment. The same information set also induces
beliefs over future scarcity, which enter continuation values below. 

We now discuss how this conditioning on the provider-determined information set leads to co-movement due to a ``responsive pricing'' (or ``responsive bidding'') effect  \citep{CalderWangKim2026}. That is, same-provider batteries will show co-movement even if the provider's algorithm maximises battery-owner profits. 

\paragraph{Owner-level responsive autobidding}
Let $\Pi_{ot}$ denote owner $o$'s current plus continuation payoff,
including current energy and FCAS profits and the continuation value of stored
energy after dispatch. Let $\mathcal I_{ot}$ denote the information available to
the owner's portfolio, including the provider information sets for the owner's
batteries, and let $\E_{ot}$ denote expectation under the owner's beliefs over
scarcity and rival bid strategies.\footnote{Equivalently, one could write an
owner strategy $\sigma_o:\mathcal I_{ot}\to\mathcal B_o(S_t,X_{rt})$ and an
expectation $\E_{\mu_{ot},\sigma_{-o}}[\cdot]$ under owner $o$'s beliefs
$\mu_{ot}$ and rival strategies $\sigma_{-o}$. We keep this notation implicit
because the empirical test only uses the resulting no-profitable-deviation
restrictions.} Then owner-level responsive bidding solves
\begin{equation}
\mathbf b^R_{ot}
\in
\arg\max_{\mathbf b_o\in\mathcal B_o(S_t,X_{rt})}
\E_{ot}\!\left[
\Pi_{ot}(\mathbf b_o,\mathbf b_{-o,t};S_t,X_{rt},\{\theta_{r\tau}\}_{\tau\ge t})
\mid \mathcal I_{ot}
\right].
\label{eq:model-responsive-objective}
\end{equation}
Here $\mathcal B_o$ is the product of the feasible bid-stack sets for the
owner's batteries, and $\mathbf b_{-o,t}$ is shorthand for the rival stacks
induced by the owner's beliefs about rival bidding rules. Note that equation~\eqref{eq:model-responsive-objective}
is not an equilibrium characterisation, but simply states that common software may change information, forecasts, and implementation, but it should not change which profits are maximised.

The responsive channel then follows from the dynamic opportunity cost of discharge.
A higher provider scarcity assessment raises
the expected continuation value of stored energy. Since raising the offer price
reduces the probability or extent of current dispatch, the owner-level bid rule
increases in the provider's scarcity assessment:
\begin{equation}
\frac{\partial b_i^R}{\partial m_{k_i,rt}}>0.
\label{eq:model-positive-response}
\end{equation}
The result is formally derived in Lemma~\ref{lem:app-shadow-cost} in Appendix~\ref{app:formal-model}. 
Assuming monotone storage value and scarcity posterior (Assumptions~\ref{ass:app-monotone-storage}--\ref{ass:app-monotone-posterior}), 
the provider's posterior shadow cost of discharge is weakly increasing in the scarcity assessment $m_{krt}$, 
and the local first-order condition then yields $\partial b_i^R/\partial m_{krt}>0$.

However, this gives rise to co-movement among same-provider batteries. Linearizing the responsive bid around the observed state gives
\begin{equation}
b_{it}^R
\approx
\bar b_{it}
+
\psi_i(S_{it},X_{rt})
+
\phi_{it}m_{k_i,rt},
\qquad
\phi_{it}:=\frac{\partial b_i^R}{\partial m_{k_i,rt}}>0.
\label{eq:model-linearized-policy}
\end{equation}
For two batteries $i$ and $j$ using the same provider $k$, conditional on
public controls, both bids load on the same provider-level object $m_{krt}$.
Batteries using different providers load on different provider-level objects.
Same-provider co-movement is therefore stronger whenever provider forecasts,
preprocessing, or implementation rules create variation shared
within provider but not across providers. This mechanism generates synchronised
and state-contingent bids without any same-provider profit internalisation.

\paragraph{Concentration effects of third-party providers}
This same co-movement effect also leads to the market outcome becoming 
a function of the degree of concentration among third-party providers. 
We develop a simple static
equilibrium benchmark, in the spirit of \citet{HarringtonOrtner2025},
to formally prove this point.
Appendix~\ref{app:formal-model} solves an affine provider-posterior
equilibrium whose aggregate response to a common scarcity
signal is governed by a strategic concentration index $\Lambda(x)$
that depends on the vector of provider shares
$x=(x_1,\ldots,x_K)$, the correlation across provider scarcity
assessments, an own-action slope, and a strategic-complementarity
coefficient (equation~\eqref{eq:app-affine-solution} in the
appendix). Importantly, the aggregate response is increasing in $\Lambda(x)$ and
$\Lambda(x)$ is Schur-convex in provider shares, so concentrating
MW on fewer providers makes the common scarcity component
more pronounced. Formally, Proposition~\ref{prop:app-concentration} in the appendix 
shows for the affine static equilibrium that under any majorization-increasing (concentration-increasing) 
change in the provider-share vector $x$, (i) the unconditional mean action $\alpha_t$ is invariant, 
(ii) the common-response coefficient $\mathcal S_t$ is weakly increasing, 
and (iii) both the common component of aggregate variance $\rho_t\mathcal V_t\mathcal S_t^2$ 
and total variance $\Var(A_t)$ weakly increase, strictly along a dominant-provider family.

Figure~\ref{fig:model-concentration-mechanism} illustrates this 
logic using one scarcity path under three provider structures and the 
simple affine equilibrium structure developed in Appendix~\ref{app:formal-model}. 
It depicts the response to a mean-zero scarcity path with a monopoly provider  
and three or ten symmetric providers, respectively. Because same-provider batteries 
condition on the same posterior, their bids co-move; as more capacity
is attached to the same posterior, the aggregate response becomes steeper.
This leads to stronger conditional response to common scarcity information and 
thus (relatively) higher bids during scarce periods and (relatively) lower bids 
during periods with abundance.

\begin{figure}[t]
\centering
\caption{Response to a common scarcity path under three provider structures}
\label{fig:model-concentration-mechanism}
\begin{tikzpicture}
\begin{axis}[
    width=0.85\linewidth,
    height=6.5cm,
    xmin=0, xmax=20,
    ymin=-1.15, ymax=1.15,
    xlabel={Period $t$},
    ylabel={$\E[A_t\mid \theta_t]-\alpha_t$},
    ytick={-1,-0.5,0,0.5,1},
    xtick={0,5,10,15,20},
    legend style={draw=none, fill=none, font=\small,
      at={(0.02,0.98)}, anchor=north west},
    axis lines=left,
    clip=true,
    every axis plot/.append style={line width=0.9pt, no markers, smooth}
]

\addplot[blue!80!black] coordinates {
    (0.00,0.2403) (0.40,0.5880) (0.80,0.7057) (1.20,0.6258)
    (1.61,0.4691) (2.01,0.3640) (2.41,0.3667) (2.81,0.4281)
    (3.21,0.4295) (3.61,0.2627) (4.08,-0.1644) (4.48,-0.6120)
    (4.88,-0.9542) (5.28,-1.0471) (5.69,-0.8536) (6.09,-0.4629)
    (6.49,-0.0413) (6.89,0.2579) (7.29,0.3723) (7.69,0.3527)
    (8.16,0.3189) (8.56,0.3910) (8.96,0.5589) (9.36,0.7224)
    (9.77,0.7419) (10.17,0.5267) (10.57,0.0985) (10.97,-0.4082)
    (11.37,-0.8097) (11.77,-0.9672) (12.24,-0.8162) (12.64,-0.5162)
    (13.04,-0.2226) (13.44,-0.0484) (13.85,-0.0038) (14.25,-0.0026)
    (14.65,0.0731) (15.05,0.2878) (15.45,0.6006) (15.85,0.8786)
    (16.32,0.9573) (16.72,0.7285) (17.12,0.2830) (17.53,-0.2197)
    (17.93,-0.6003) (18.33,-0.7506) (18.73,-0.6824) (19.13,-0.5074)
    (19.53,-0.3624) (20.00,-0.3278)
};
\addlegendentry{Monopoly provider ($K=1$)}

\addplot[red!75!black, dashed] coordinates {
    (0.00,0.1853) (0.40,0.4533) (0.80,0.5441) (1.20,0.4825)
    (1.61,0.3616) (2.01,0.2806) (2.41,0.2827) (2.81,0.3300)
    (3.21,0.3311) (3.61,0.2025) (4.08,-0.1268) (4.48,-0.4718)
    (4.88,-0.7356) (5.28,-0.8072) (5.69,-0.6580) (6.09,-0.3569)
    (6.49,-0.0318) (6.89,0.1988) (7.29,0.2871) (7.69,0.2719)
    (8.16,0.2458) (8.56,0.3014) (8.96,0.4309) (9.36,0.5569)
    (9.77,0.5719) (10.17,0.4061) (10.57,0.0759) (10.97,-0.3147)
    (11.37,-0.6243) (11.77,-0.7457) (12.24,-0.6292) (12.64,-0.3979)
    (13.04,-0.1716) (13.44,-0.0373) (13.85,-0.0030) (14.25,-0.0020)
    (14.65,0.0563) (15.05,0.2218) (15.45,0.4630) (15.85,0.6774)
    (16.32,0.7380) (16.72,0.5616) (17.12,0.2182) (17.53,-0.1693)
    (17.93,-0.4628) (18.33,-0.5787) (18.73,-0.5261) (19.13,-0.3912)
    (19.53,-0.2794) (20.00,-0.2527)
};
\addlegendentry{Three symmetric providers ($K=3$)}

\addplot[teal!70!black, densely dotted] coordinates {
    (0.00,0.1715) (0.40,0.4197) (0.80,0.5037) (1.20,0.4467)
    (1.61,0.3348) (2.01,0.2598) (2.41,0.2617) (2.81,0.3055)
    (3.21,0.3065) (3.61,0.1875) (4.08,-0.1174) (4.48,-0.4368)
    (4.88,-0.6810) (5.28,-0.7473) (5.69,-0.6092) (6.09,-0.3304)
    (6.49,-0.0295) (6.89,0.1841) (7.29,0.2657) (7.69,0.2517)
    (8.16,0.2276) (8.56,0.2790) (8.96,0.3989) (9.36,0.5156)
    (9.77,0.5295) (10.17,0.3759) (10.57,0.0703) (10.97,-0.2913)
    (11.37,-0.5779) (11.77,-0.6903) (12.24,-0.5825) (12.64,-0.3684)
    (13.04,-0.1589) (13.44,-0.0345) (13.85,-0.0027) (14.25,-0.0019)
    (14.65,0.0522) (15.05,0.2054) (15.45,0.4287) (15.85,0.6271)
    (16.32,0.6832) (16.72,0.5199) (17.12,0.2020) (17.53,-0.1568)
    (17.93,-0.4284) (18.33,-0.5357) (18.73,-0.4871) (19.13,-0.3622)
    (19.53,-0.2586) (20.00,-0.2340)
};
\addlegendentry{Ten symmetric providers ($K=10$)}

\addplot[black!30, thin, forget plot] coordinates {(0,0) (20,0)};
\end{axis}
\end{tikzpicture}\\
\vspace{2pt}
{\footnotesize \textit{Notes:} Simulated in the static affine
equilibrium of Appendix~\ref{app:formal-model}, for a monopoly
provider and three or ten symmetric providers facing an identical
scarcity path. The unconditional mean action \(\alpha_t\) is fixed
across cases.}
\end{figure}

This also implies that concentration can have consumer-welfare implications
even in the absence of coordinated conduct. Although the mean
clearing price is unchanged when concentration rises, the 
second moments of prices are affected. Proposition~\ref{prop:welfare-concentration} in Appendix~\ref{app:welfare} 
proves this formally by decomposing the expected consumer loss into two channels. 
The first is a state-amplification channel. This channel arises as long as consumer demand 
co-moves with scarcity so that higher scarcity states also imply higher consumer demand. 
Because concentration makes the market price track 
the common scarcity state more closely, consumers pay a higher price precisely in the 
scarcer states when they consume more.
The second is a price-variance channel. Concentration raises the variance of the clearing price, 
and in particular its common cross-provider component. This lowers welfare under a
risk-averse consumer welfare standard. Both channels are weakly increasing 
under concentration-increasing changes in provider shares. Moreover, the variance channel can 
be expressed as the direct counterpart of the cross-seller price-correlation channel in 
\citet{AleksenkoMiklosThal2026}: concentration shifts variance 
from diversifying cross-provider dispersion into a single common component.

\paragraph{Coordinated, same-provider conduct}
The preceding steps show that responsive autobidding can lead to bids conditioning on common scarcity, same-provider batteries co-moving, the price effect of common responses increasing when provider capacity is concentrated, and a reduction in consumer welfare as provider concentration rises. But these patterns arise even when providers maximise owners' profits rather than joint profits across the fleet of batteries that use their software.

The conduct distinction we now consider is an objective-function restriction,
not an additional implication of the stylised model. For a feasible unilateral
deviation $d$ from battery $i$'s observed stack, that is, a reallocation of
offered MW across the battery's price bands, let
\(\Delta\Pi_{o_i,t}(d)\) be the owner's deviation gain, a scalar dollar
amount, and let \(\Delta\Pi_{k_i,-o_i,t}(d)\) be the corresponding profit
effect on same-provider batteries owned by other firms. Owner-level responsive
optimisation requires the first inequality below; same-provider conduct
with conduct parameter \(\lambda\in[0,1]\) replaces it with the second:\footnote{This is the analogue of the conduct-matrix step in \citet{CalderWangKim2026}: conduct assumptions determine which units' profit effects enter the no-deviation condition. Here each recleared deviation produces the two relevant entries, own-profit and same-provider, different-owner profit effects, and \(\lambda\) selects the linear combination. In our setting those entries are generated by exact reclearing of bid-stack deviations rather than by a product-demand model.}
\begin{equation}
\E_{it}\!\left[\Delta\Pi_{o_i,t}(d)\mid Z_{it}\right]\le0,
\qquad
\E_{it}\!\left[
\Delta\Pi_{o_i,t}(d)
+\lambda\Delta\Pi_{k_i,-o_i,t}(d)
\mid Z_{it}\right]\le0.
\label{eq:model-conduct-ic}
\end{equation}
The case $\lambda=0$ is owner-level responsive optimisation. Positive
$\lambda$ means that the observed stack behaves as if the decision-maker
internalises part of the profit effect on same-provider, different-owner
batteries. The empirical conduct test therefore asks whether owner-level inequalities fail, and
whether those failures are systematically reduced when same-provider profit
effects are added.

Consider the focal owner $o$ and a same-provider battery owned by a different firm, 
and a marginal change that moves $\epsilon$ MW of the focal's withheld capacity into 
the clearing band. Let
$$g_o := \left.\frac{\partial \Delta\Pi_{o_i,t}}{\partial \epsilon}\right|_{\epsilon=0}, 
\qquad g_k := \left.\frac{\partial \Delta\Pi_{k_i,-o_i,t}}{\partial \epsilon}\right|_{\epsilon=0}$$
denote the marginal effect of the change on the owner's own profit and on the same-provider peer. 
The change raises the focal owner's dispatch at a price above its opportunity cost, 
so $g_o>0$. 
The same change lowers the regional price, so the peer loses inframarginal revenue, 
$g_k = -\zeta q_k < 0$, with $\zeta := -\partial P/\partial\epsilon > 0$ the price impact 
and $q_k$ the peer's dispatch.

Under owner-level conduct the observed stack is not optimal, because $g_o>0$ leaves a profitable 
deviation untaken and violates the first inequality in \eqref{eq:model-conduct-ic}. 
Under same-provider conduct it is optimal once the weighted peer loss offsets the own gain,
$$g_o + \lambda g_k \le 0, \qquad\text{that is,}\qquad \lambda \ge \lambda^\star
:= \frac{g_o}{\zeta q_k}.$$
Observed restraint in dispatch therefore reveals $\lambda \ge \lambda^\star$, and the conduct 
test then recovers the smallest such $\lambda$ across the sampled deviations. The threshold 
$\lambda^\star$ decreases in the same-provider peer's dispatch $q_k$, which 
rises with provider concentration, so the same restraint is rationalised by a smaller 
$\lambda$ where same-provider capacity is more concentrated.

The empirical analysis in Section~\ref{sec:empirics} proceeds in 
three steps. First, we use the ERI reform to study synchronisation and
state-responsive bidding. Second, we estimate a rolling Bellman value function for stored
energy that estimates the dynamic opportunity cost of bid-stack changes.
Third, we construct a panel of bid stack deviations and use counterfactual 
reclearing of the resulting profit
changes to test the owner-level and same-provider no-deviation
inequalities.

\section{Empirical Application}\label{sec:empirics}

\subsection{Data and Measurement}\label{subsec:empirical-panel}

We build the empirical analysis from a battery-by-5-minute panel. The unit
of observation is a battery \(i\) in dispatch interval \(t\), and the panel
covers April~1,~2025 through February~28,~2026.\footnote{We exclude six
trading days from all analyses: 2025-06-11, 2025-06-12, 2025-06-18,
2025-06-26, 2025-07-02, and 2026-01-26. These are the days in the panel
window with at least 300 interval-region observations whose regional
reference price exceeds \$1,\!000/MWh. The \$1,\!000/MWh threshold sits well
above the routine market regime (the 99th percentile of regional reference
prices in the sample is approximately \$311/MWh) and isolates sustained
administered-price scarcity events whose squared-violation contribution
would otherwise dominate the moment-inequality test. The exclusion removes
about 1.7~percent of the panel.}
It combines AEMO market data
with two objects that are not recorded directly in the market files: the
hand-built mapping from batteries to autobidder providers and a reconstructed
asset-level state-of-charge series. Section~\ref{sec:battery-bidding} describes
the institutional details of battery bidding in the NEM.

From AEMO, we observe the market primitives needed to reconstruct submitted and
realised behaviour: bid prices and quantities, SCADA dispatch and charging,
regional clearing prices, battery registration and capacity information, and
post-ERI battery energy-storage fields. We then attach provider identity at the
battery level using the mapping described in Appendix~\ref{app:provider-id}.
Analyses that require provider identity use the subset of batteries with a
reliable provider assignment; analyses that only require bids, dispatch, prices,
and state of charge use the full panel. Table~\ref{tab:analysis-panel-sources}
summarises the information in the panel and the source of each object.

The identification of software providers is based on public sources. For example, 
in some investor-relations documents of the software vendors, the filings list contracted deployments. 
In other cases, the owners have made public statements naming the provider. 
We further make use of combinations of known hardware-software-bundles. Of the 52 usable batteries in the panel, 42 can be
linked to an autobidder provider.\footnote{In the appendix, we distinguish between 
confirmed and probable provider assignments based on the source. Note also that we cannot observe 
changes in the provider used over time.} These known-provider batteries account for
6.64~GW of discharge capacity and are served by 11 providers across
21 provider-region combinations. The two largest commercial providers are Tesla
Autobidder, with 14 batteries and 2.32~GW, and Fluence Mosaic, with
10 batteries and 2.19~GW. Together they account for about two thirds of the
known-provider capacity.

Provider concentration also varies substantially across regions. Among
known-provider batteries, NSW1 has 11 batteries across 5 providers and a
provider HHI of $4{,}620$ points, with Fluence the largest provider
(63~percent). QLD1 has 10 batteries across 4 providers and an HHI of
$4{,}180$, with Tesla the largest (59~percent). VIC1 is similarly
concentrated, with 12 batteries across 5 providers and an HHI of $4{,}140$,
again with Tesla the largest (57~percent). SA1 is more fragmented: 9 batteries
across 7 providers and an HHI of $2{,}010$, with AGL in-house the largest
(29~percent). This regional variation
matters because the provider concentration relevant for bidding and prices is
local to the dispatch region and the clearing interval, not only a national
fleet share.

\begin{table}[t]
\centering
\caption{Analysis panel: information sources and sample}
\label{tab:analysis-panel-sources}
\small
\begin{tabular}{p{0.26\linewidth}p{0.28\linewidth}p{0.36\linewidth}}
\toprule
Information & Source & Construction \\
\midrule
Battery units and capacities & AEMO/NEMOSIS registration data & Battery mapping and technical characteristics. \\
Bid quantities & AEMO MMS via NEMOSIS & Latest five-minute energy bid quantities. \\
Bid prices & AEMO MMS via NEMOSIS & Daily price bands matched to bid quantities. \\
Clearing prices & AEMO MMS via NEMOSIS & Regional five-minute clearing prices. \\
Dispatch & AEMO MMS via NEMOSIS & Charging, discharging, and net dispatch. \\
State of Charge & AEMO MMS via NEMOSIS + calibrated model & Public SoC plus reconstructed SoC. \\
Autobidding providers & Hand-built mapping from public sources & Provider identity and mapping confidence. \\
\midrule
\multicolumn{3}{l}{\emph{Panel summary}} \\
\multicolumn{2}{l}{Battery-by-five-minute observations} & 4,121,035 \\
\multicolumn{2}{l}{Batteries} & 52 \\
\multicolumn{2}{l}{Trading days} & 328 \\
\multicolumn{2}{l}{Time frame} & April 1, 2025-February 28, 2026 \\
\bottomrule
\end{tabular}
\end{table}

Beginning on July~1,~2025, AEMO expanded public reserve information by
releasing additional battery and generator energy-state information, including
the previous trading day's battery energy availability, regional daily energy
limits, and regional battery state of charge. This disclosure reform is the
Enhanced Reserve Information reform (ERI). For our purposes, ERI is an
information-set change for batteries. Before ERI, market participants could
observe prices, dispatch, and bids, but rival battery energy positions had to be
inferred from market data. After ERI, aggregate battery energy-state information
became part of the public market data. We use this change as a shock to the
precision of the public signal about battery, and therefore market, scarcity.

Interpreting battery behaviour requires an estimate of each battery's current state of charge.
State of charge is central for the analysis because it is the dynamic state that
links current bidding to future opportunity cost. A battery with little stored
energy has a high cost of additional discharge; a battery with ample stored
energy can discharge at lower opportunity cost or preserve energy for later
high-price intervals. However, asset-level SoC is not observed
before ERI. We therefore reconstruct it from the physical stock-flow law that
governs a battery, using the observed SCADA charge and discharge quantities as
flows into and out of storage. Recovering battery operating characteristics
such as round-trip efficiency from public NEM SCADA data has been done before in 
\citet{KarimiArpanahiPourmousaviMahdavi2024}. Let \(S_{it}\) denote stored energy, \(r_{it}\)
charging MW, \(q_{it}\) discharging MW, and \(\Delta=1/12\) hours. For each
battery we propagate\footnote{The recursion allows battery-specific charge and
discharge efficiencies, self-discharge, and auxiliary load, and clips the
state to the physical interval \([0,\bar S_i]\). We calibrate these parameters
separately by battery using the post-ERI public energy-storage fields,
excluding the first week of public observations to reduce sensitivity to
initial conditions.}
\begin{equation}
S_{i,t+1}
=
\operatorname{clip}\!\left(
\rho_i S_{it}
+\eta_i^{\mathrm{ch}}r_{it}\Delta
-\frac{q_{it}\Delta}{\eta_i^{\mathrm{dis}}}
-\kappa_i\Delta,\;0,\;\bar S_i
\right),
\label{eq:main-soc-reconstruction}
\end{equation}
where \(\bar S_i\) is energy capacity, \(\eta_i^{\mathrm{ch}}\) and
\(\eta_i^{\mathrm{dis}}\) are charge and discharge efficiencies, \(\rho_i\) is
a self-discharge factor, \(\kappa_i\) captures auxiliary load, and the clipping
operator enforces the physical storage bounds.

The calibrated model fits the public post-ERI storage data closely. In the
unanchored validation exercise, which mimics a pre-ERI backcast, the median
battery has a Pearson correlation of 0.956 with the public SoC series, an
\(R^2\) of 0.910, and an RMSE of 7.2~percent of energy capacity. In a separate
multi-month holdout propagation, the median correlation is 0.979 and the median
RMSE is 6.9~percent of capacity. Appendix~\ref{app:soc-model} describes the
calibration and validation procedure, and Table~\ref{tab:soc-validation}
reports the full distribution of fit statistics.

All empirical SoC variables are built from this single reconstructed
series.\footnote{Before July~1,~2025, the series is the model propagation in
\eqref{eq:main-soc-reconstruction}, initialised at half of the asset's energy
capacity because no public anchor exists. After ERI, the same model is
re-anchored each day to the most recent public battery energy-storage field and
then propagated forward with observed charge and discharge. This prevents
mechanical drift while preserving the publication lag in public energy-state
information.} Own SoC is the battery's reconstructed stored-energy share.
Common battery scarcity is measured as one minus the regional battery SoC
share, using public SoC where it is available and reconstructed SoC otherwise.
We standardise this common-state measure within region-by-trading-period groups
using the pre-ERI window. The common-state regressions below therefore compare
how bids load on the same regional scarcity state before and after the
disclosure reform, while the own-SoC regressions use the battery's private
dynamic state.

\subsection{Synchronisation and State-Responsive Bidding}\label{subsec:aem-synchronization}

We use ERI to study how a common information shock changes bid construction
across batteries using the same autobidding provider. The stylised model in
Section~\ref{subsec:mechanisms} implies that common software routines can make
same-provider batteries move together, especially when those batteries operate
in the same local market and face the same scarcity state. This subsection
presents two pieces of evidence. The first is pairwise bid synchronisation. The
second is that after ERI same-provider bids condition more strongly on the
public common scarcity state, consistent with the disclosure improving the
precision of the signal that a provider's algorithm maps into bids. A companion check that bids
also load more on each battery's own dynamic state is reported in
Appendix~\ref{app:state-responsive-checks}.

For each battery pair \((i,j)\), trading day \(s\), and market family
\[
f\in\{\text{energy discharge},\ \text{energy charge},\ \text{raise FCAS},\ \text{lower FCAS}\},
\]
we compute, for each bid-stack outcome
\(Y\in\{\text{bid-band index},\,\text{rebid indicator}\}\)
(as defined in Table~\ref{tab:sync-energy-discharge}), the raw
within-day correlation
\[
\rho^{Y,f}_{ijs}
=
\operatorname{corr}_{t\in s}(Y_{it},Y_{jt}),
\]
using the 5-minute intervals in which both batteries have the outcome
observed. We then classify pairs into four groups: same provider
and same region (SPSR), same provider and different region (SPDR), different
provider and same region (DPSR), and different provider and different region
(DPDR). Let \(\bar\rho^{Y,f}_{gs}\) denote the average pair-day correlation for
pair type \(g\) on day \(s\). Separately for each market family and outcome,
we estimate group-by-day difference-in-differences regressions of the form
\begin{equation}
\bar\rho^{Y,f}_{gs}
=
\alpha_g+\gamma_s
+\sum_{c\in\{\mathrm{SPSR},\mathrm{SPDR},\mathrm{DPSR}\}}
\theta_c\,\ind\{g=c\}\mathrm{Post}_s
+\varepsilon_{gs},
\label{eq:sync-did}
\end{equation}
where \(\ind\{\cdot\}\) is the indicator function, \(\alpha_g\) are pair-type
fixed effects, and \(\gamma_s\) are trading-date fixed effects. DPDR is the
omitted group. Our coefficient of interest is
\(\theta_{\mathrm{SPSR}}-\theta_{\mathrm{SPDR}}\). Same-provider batteries in 
different regions (SPDR) should co-move only through provider-wide channels such as 
a shared algorithm, common forecasts, or synchronised update routines, 
because they face different local scarcity states which should cause them to 
set different bid levels and show different rebid timing. Same-provider batteries in 
the same region (SPSR) on the other hand share the same local state. The difference 
$\theta_{\mathrm{SPSR}}-\theta_{\mathrm{SPDR}}$ therefore nets out the 
provider-wide component and isolates the extra synchronisation that arises 
when same-provider batteries respond to the same local scarcity.

\begin{table}[!htbp]\centering
\caption{Energy-discharge synchronisation before and after ERI}
\label{tab:sync-energy-discharge}
\small
\setlength{\tabcolsep}{5pt}
\begin{tabular}{lcccc}
\toprule
\multicolumn{5}{l}{\textbf{Panel A: Energy-discharge bid-band index}} \\
 & \multicolumn{4}{c}{Mean daily raw correlation} \\
\cmidrule(lr){2-5}
Pair type & Pre-ERI & Post-ERI & Difference & Regression \\
\midrule
Same provider, same region (SPSR) & $0.369$ & $0.371$ & $+0.002$ & $+0.039^{***}$ \\
 & & & & $(0.014)$ \\
Same provider, different region (SPDR) & $0.221$ & $0.178$ & $-0.043$ & $-0.006$ \\
 & & & & $(0.009)$ \\
Different provider, same region (DPSR) & $0.271$ & $0.226$ & $-0.045$ & $-0.008$ \\
 & & & & $(0.007)$ \\
Different provider, different region (DPDR) & $0.217$ & $0.180$ & $-0.037$ & \multicolumn{1}{c}{omitted} \\
\midrule
$\theta_{\mathrm{SPSR}}-\theta_{\mathrm{SPDR}}$ (SPSR $-$ SPDR) & & & $+0.045$ & $+0.045^{***}$ \\
 & & & & $(0.015)$ \\
\addlinespace[4pt]
\multicolumn{5}{l}{\textbf{Panel B: Energy rebid timing}} \\
 & \multicolumn{4}{c}{Mean daily raw correlation} \\
\cmidrule(lr){2-5}
Pair type & Pre-ERI & Post-ERI & Difference & Regression \\
\midrule
Same provider, same region (SPSR) & $0.216$ & $0.354$ & $+0.138$ & $+0.153^{***}$ \\
 & & & & $(0.017)$ \\
Same provider, different region (SPDR) & $0.196$ & $0.252$ & $+0.057$ & $+0.072^{***}$ \\
 & & & & $(0.018)$ \\
Different provider, same region (DPSR) & $0.044$ & $0.035$ & $-0.010$ & $+0.006$ \\
 & & & & $(0.004)$ \\
Different provider, different region (DPDR) & $0.050$ & $0.035$ & $-0.015$ & \multicolumn{1}{c}{omitted} \\
\midrule
$\theta_{\mathrm{SPSR}}-\theta_{\mathrm{SPDR}}$ (SPSR $-$ SPDR) & & & $+0.081$ & $+0.081^{***}$ \\
 & & & & $(0.017)$ \\
\midrule
Pair-type FE & & & & Yes \\
Trading-date FE & & & & Yes \\
$N$ (group $\times$ day) & & & & 1,336 \\
Trading dates & & & & 334 \\
\bottomrule
\end{tabular}
\begin{minipage}{0.94\linewidth}
\vspace{0.15cm}
\footnotesize \textit{Notes:} Entries are raw Pearson correlations, not Fisher-transformed correlations. For each battery pair and trading date, we compute the within-day correlation of the listed outcome across 5-minute intervals. Pre/post columns are simple averages of daily pair-type means. The Regression column reports \(\hat\theta_g\) from \eqref{eq:sync-did}; standard errors, in parentheses, are clustered by trading date. The bottom row in each panel reports the Wald test for \(\theta_{\mathrm{SPSR}}-\theta_{\mathrm{SPDR}}\). $^{***}p<0.01$, $^{**}p<0.05$, $^{*}p<0.10$.
\end{minipage}
\end{table}

Table~\ref{tab:sync-energy-discharge} shows the main pattern for
energy-discharge offers. Panel~A uses the bid-band index, a quantity-weighted
measure of where the battery places its offered MW in the ten bid bands.
Same-provider batteries in the same region become more synchronised relative to
same-provider batteries in different regions: the SPSR--SPDR gap rises by
0.045 after ERI. Panel~B shows a related but distinct pattern for rebid
timing. Rebid timing is already correlated for same-provider batteries in
different regions, and rises from 0.196 to 0.252 for SPDR pairs after ERI. It
rises more for SPSR pairs, from 0.216 to 0.354, producing a relative increase
of 0.081. Thus, bid-update timing has a provider-wide component, while the
bid-stack level itself becomes most synchronised when same-provider batteries
operate in the same local market.

\begin{table}[!htbp]\centering
\caption{Synchronization effects across market families}
\label{tab:sync-market-effects}
\small
\setlength{\tabcolsep}{4.2pt}
\begin{tabular}{lcccc}
\toprule
Market family & Bid-band index & Upper-band price & Lower-band price & Rebid timing \\
\midrule
Energy discharge & +0.045$^{***}$ (0.015) & +0.147$^{***}$ (0.030) & +0.051 (0.045) & +0.081$^{***}$ (0.017) \\
Energy charge & +0.085$^{***}$ (0.016) & +0.087 (0.070) & +0.091$^{***}$ (0.025) & +0.081$^{***}$ (0.017) \\
Raise FCAS & +0.151$^{***}$ (0.019) & +0.081$^{***}$ (0.021) & +0.107$^{***}$ (0.022) & +0.087$^{***}$ (0.027) \\
Lower FCAS & +0.224$^{***}$ (0.017) & +0.295$^{***}$ (0.021) & +0.092$^{***}$ (0.020) & +0.072$^{***}$ (0.027) \\
\bottomrule
\end{tabular}
\begin{minipage}{0.94\linewidth}
\vspace{0.15cm}
\footnotesize \textit{Notes:} Entries report the SPSR--SPDR DiD coefficient from group-by-day regressions of raw within-day pairwise correlations. For each pair, trading date, market family, and outcome, we compute the Pearson correlation across 5-minute intervals in which both batteries have the outcome observed. Regressions include pair-type and trading-date fixed effects; standard errors, in parentheses, are clustered by trading date. Table~\ref{tab:sync-energy-discharge} reports the detailed pre/post correlations for the two main energy-discharge outcomes; Appendix Table~\ref{tab:sync-market-details-appendix} reports the corresponding detail for the other market families. The maximum number of trading dates is 334. $^{***}p<0.01$, $^{**}p<0.05$, $^{*}p<0.10$.
\end{minipage}
\end{table}

Table~\ref{tab:sync-market-effects} reports the same SPSR-minus-SPDR estimate
across market families and bid-stack outcomes. For energy discharge, the
effect is largest in upper bid prices, with an estimate of 0.147; lower-band
prices do not show a comparable effect. In energy charge, synchronisation rises
in the bid-band index and lower-band prices. In FCAS markets, the effects are
larger across the bid stack: the bid-band-index effect is 0.151 for raise FCAS
and 0.224 for lower FCAS. Rebid-timing synchronisation is positive in every
market family, with estimates between 0.072 and 0.087. This relative effect
combines two patterns. In energy, same-provider different-region timing rises
after ERI, consistent with provider-wide bid-update routines. In FCAS, the
SPDR timing correlation is already high before ERI and changes little after the
reform. Appendix Table~\ref{tab:sync-market-details-appendix} reports the
corresponding pre/post correlations outside energy discharge.

The same reform also increases how strongly bids condition on the common scarcity state.
We measure regional battery scarcity as one minus the regional state-of-charge
share, split it into a tight and a loose component, and estimate how strongly
quantity-weighted discharge bids load on the tight relative to the loose state
before and after ERI, separately by evening window; Appendix~\ref{app:state-responsive-checks}
gives the specification. Table~\ref{tab:common_state_window} reports the result.
In the upper discharge bands, which govern withholding, the post-ERI
tight-minus-loose loading rises by $1{,}136$ AUD/MWh in the early-peak window
and by a comparable amount in the lead-in window, while
the lower bands, which secure baseline dispatch, show no such shift. 
The effect is concentrated in the lead-in
and early-peak windows with the later windows showing no stable common-state response.
Same-provider bids thus track the common scarcity signal more closely once that
signal becomes more precise, rising in the upper bands during scarce intervals
and easing during abundant ones, exactly the responsive-bidding channel of
Section~\ref{subsec:mechanisms}. A similar specification shows that bids also
load more strongly on each battery's own reconstructed state of charge after
ERI (Appendix~\ref{app:state-responsive-checks}).

\begin{table}[!htbp]\centering
\caption{Common-state responsiveness by evening window}
\label{tab:common_state_window}
\begin{tabular}{lccc}
\toprule
Window & Upper bands & Lower bands & All bands \\
\midrule
Lead-in & 1,515$^{*}$ (819) & -33 (75) & 792 (826) \\
Early peak & 1,136$^{***}$ (368) & -11 (22) & 516 (479) \\
Late peak & -1,587 (1,077) & 121$^{*}$ (62) & -3 (698) \\
Post-peak & 324 (1,199) & 106 (78) & 1,230 (1,193) \\
\bottomrule
\end{tabular}
\begin{minipage}{0.82\linewidth}
\vspace{0.15cm}
\scriptsize \textit{Notes:} Entries are post-ERI average changes in the tight-minus-loose loading of bid prices on the public regional scarcity signal. Each row is estimated in a separate regression restricted to the indicated trading-period window. Outcomes are quantity-weighted energy-discharge price bands. All specifications include battery fixed effects, region-by-date fixed effects, region-by-period fixed effects, and own-SoC event-study controls. Standard errors, in parentheses, are two-way clustered by battery and trading date. $^{***}p<0.01$, $^{**}p<0.05$, $^{*}p<0.10$.
\end{minipage}
\end{table}

We next relate this common-provider bid behaviour to provider concentration.
The synchronisation results show that autobidding providers induce co-movement
in bid construction, especially when same-provider batteries face the same
local market conditions. The model suggests that this channel should be more
important when provider capacity is concentrated in the relevant market.

We begin with a regional concentration measure. For each region, we compute the
installed provider HHI using battery capacity shares among known autobidder
providers. We then form a region-day
panel with two outcomes. Local synchronisation is the same-region
SPSR correlation minus the same-region DPSR correlation.
Market-action variance measures how much the region's bids move over
the day: for each 5-minute interval we compute the offer-weighted
average upper-band bid price across the region's batteries, and take
the variance of this regional average across the intervals of the
trading day.

\begin{table}[!htbp]\centering
\caption{Provider concentration, bid synchronisation, and market-action variance}
\label{tab:concentration-sync-variance}
\small
\setlength{\tabcolsep}{5.5pt}
\begin{tabular}{lcccc}
\toprule
 & (1) & (2) & (3) & (4) \\
 & \multicolumn{2}{c}{Synchronisation} & \multicolumn{2}{c}{Market-action variance} \\
\cmidrule(lr){2-3}\cmidrule(lr){4-5}
 & \begin{tabular}{c}Local\\SPSR $-$ DPSR\end{tabular} & \begin{tabular}{c}SPSR\\correlation\end{tabular} & \begin{tabular}{c}on local\\sync\end{tabular} & \begin{tabular}{c}on provider\\HHI\end{tabular} \\
\midrule
Provider HHI (10 p.p.) & +0.161$^{***}$ & +0.169$^{***}$ & & +1.084$^{***}$ \\
 & (0.021) & (0.018) & & (0.378) \\
\addlinespace[2pt]
Local sync & & & +0.183 & \\
 & & & (0.882) & \\
\midrule
Unit & Region day & Region day & Region day & Region day \\
Region FE & No & No & No & No \\
Date FE & Yes & Yes & Yes & Yes \\
$N$ & 738 & 738 & 738 & 738 \\
\bottomrule
\end{tabular}
\begin{minipage}{0.94\linewidth}
\vspace{0.15cm}
\footnotesize \textit{Notes:} The outcome is the upper-band energy-discharge price. Provider HHI is the installed capacity-share HHI across known autobidder providers, scaled so one unit equals a ten percentage point HHI increase. Local sync is SPSR minus DPSR mean pairwise correlation within region. Market-action variance is the within-day variance of the offer-weighted regional average bid price, reported in millions of squared AUD/MWh. Installed HHI is fixed within region, so specifications include date fixed effects but not region fixed effects. Columns (1)--(2) are weighted by SPSR pair-days; columns (3)--(4) by market intervals. Standard errors are clustered by trading date. $^{***}p<0.01$, $^{**}p<0.05$, $^{*}p<0.10$.
\end{minipage}
\end{table}

Table~\ref{tab:concentration-sync-variance} reports the reduced-form link for
upper-band energy-discharge prices. A ten percentage point increase in
installed provider HHI is associated with a 0.161 increase in the local
synchronisation premium and a 0.169 increase in SPSR synchronisation. The same
increase in HHI is also associated with a larger within-day variance
of the regional average upper-band bid price. Appendix
Table~\ref{tab:concentration-sync-variance-markets-appendix} repeats the same
upper-band-price exercise across market families. The positive relationship
between provider HHI and synchronisation is concentrated in energy discharge, while
the relationship between provider HHI and bid-price variance is broader and is also present in FCAS. However, these estimates are descriptive: installed provider HHI is
effectively cross-sectional, so the specifications include
trading-date fixed effects but not region fixed effects. The result
can therefore not be considered a causal estimate of the effect of provider
concentration, but the signs and magnitudes are in line with the
mechanism in Section~\ref{subsec:mechanisms}. Regions with more
concentrated providers show both more same-provider synchronisation
and more within-day variance of the regional average bid price.

\subsection{Dynamic Value of Stored Energy}\label{subsec:bellman-value}

The synchronisation evidence in
Section~\ref{subsec:aem-synchronization} shows that common providers
produce co-movement in bid construction, and
Table~\ref{tab:concentration-sync-variance} relates that co-movement
to provider concentration. The conduct test in
Section~\ref{subsec:conduct-test} asks whether the observed bids can
be rationalised by owner-level dynamic incentives, or whether they
internalise same-provider, different-owner profit effects. To run
that test we first need an estimate of the dynamic opportunity cost
of moving battery energy between intervals.
For each battery $i$ we estimate a single-battery value function
$V_{i,\tau}(S,x)$ that maps own state of charge $S$ and a market state $x$
into the continuation value of stored energy. The value function solves
the Bellman recursion
\begin{equation}
V_{i,\tau}(S,x)
=
\max_{S'\in\Gamma_i(S)}
\Bigl\{
\pi_i(S,S',x,\tau)
+
\delta\,\E\!\left[V_{i,\tau+1}(S',x')\,\big|\,x\right]
\Bigr\},
\label{eq:main-bellman}
\end{equation}
where $\pi_i$ collects current energy and FCAS profits net of throughput
cost, $\delta$ is the discount factor, $\Gamma_i(S)$ enforces the SoC bounds
and ramp limits, and the expectation is taken under a transition kernel for
the market state $x$ that we estimate non-parametrically from the trailing
data window. The market state aggregates four conditioning variables: the
time-of-day block, the residual-price bin (price relative to the regional
mean profile, discretised into $N_p=11$ empirical-quantile bins), the
regional residual-demand state, and an aggregate-SoC signal that captures
whether the regional storage fleet enters the interval close to empty,
mid-range, or close to full. Conditioning on these variables lets the
continuation value reflect both intraday arbitrage (price within the day)
and dynamic scarcity (how much regional storage is available).

To keep the value function estimation tractable, we assume a finite horizon 
of a single trading day from $\tau=1$ (06:00) to $T$ (22:00). We fix the terminal value at
22:00 as the expected revenue from any remaining stored energy if it were
discharged at 22:00 at the average overnight price of a 90-day trailing-window. 
Because overnight prices are stable and low relative to the daytime peak, 
and vary little over the 90-day period, this terminal value is small and is nearly the same across states, so the
shadow values we use during the evening peak depend little
on this estimated terminal value. The assumption of a finite horizon is important, as it 
allows us to avoid any infinite-horizon solve. $\pi_i$ then co-optimises energy and
FCAS using a hierarchical estimate of the FCAS headroom value (revenue
per MWh of physical headroom, varying with block, residual demand, and
residual price), so that deviations that reallocate capacity between the
energy and FCAS sides of the stack are priced consistently.

We then calculate the implied shadow value of stored energy $\widehat m_{it}$,
\begin{equation}
\widehat m_{it}
:=
\frac{\max\bigl\{\partial V_{i,\tau}/\partial S,\,0\bigr\}}{\eta_i^{\mathrm{dis}}}+c_i,
\label{eq:main-mhat}
\end{equation}
which yields the per-MWh dynamic opportunity cost of discharge at interval $t$, with
$\eta_i^{\mathrm{dis}}$ the discharge efficiency and $c_i$ the throughput cost. The
non-negativity floor reflects the option value of waiting.

The Bellman is re-estimated each week on a 90-day trailing window so the
price profile, transition kernel, and FCAS headroom values can absorb
seasonal variation in market conditions. Within each window we construct
the regional mean price profile, the residual-price bin edges, the
non-parametric transition matrix (with a hierarchical fallback chain
across conditioning groups when within-group support is thin), the FCAS
headroom-value coefficients, and the overnight terminal value. The DP is
then solved by backward induction on a $41$-point SoC grid with $201$
candidate next-period SoC controls per interval, separately by signal and
residual-demand state. Asset parameters (energy capacity, discharge and
charge power limits, charge and discharge efficiencies, self-discharge,
and a parasitic-drain decomposition) are calibrated from the same SoC
reconstruction model as in Section~\ref{subsec:empirical-panel}. 
Appendix~\ref{app:bellman-estimation} reports the full
state-space specification, the FCAS headroom-value construction, the
rolling-pipeline coverage, the asset/SoC calibration parameters, the
shadow-value profile across time blocks and SoC bins, and transition,
FCAS, and deviation-valuation diagnostics.

We then use our estimated Bellman values to determine profits under 
counterfactual market outcomes. 
To turn a bid-stack edit into a counterfactual market outcome we use the
open-source NEMDE replication \texttt{nempy}
\citep{Prakash2022nempy}, which reproduces AEMO's regional dispatch
optimisation from the same MMS-DB and NEMDE-XML inputs that AEMO uses at
clearing time. For each 5-minute interval, \texttt{nempy} solves the
constrained least-cost dispatch problem
\begin{equation}
\min_{u,v,f,F}\ \
\sum_{i,b} p^{E,\mathrm{gen}}_{ib}\, u_{ib}
-\sum_{i,b} p^{E,\mathrm{load}}_{ib}\, v_{ib}
+\sum_{i,j,b} p^{F,j}_{ib}\, f^{\,j}_{ib}
+\Phi(\text{slacks})
\label{eq:nempy-objective}
\end{equation}
subject to per-region energy balance with inter-regional flow and losses,
inter-regional transmission limits, raise- and lower-FCAS service
requirements at the regional and global levels, unit-level ramp rates and
availability, the FCAS trapezium constraints that couple energy operating
points to enabled FCAS quantities (Appendix~\ref{app:fcas-trapezium}),
fast-start unit constraints, and AEMO's set of generic network and
security constraints. Decision variables are energy generation bid
acceptances $u_{ib}$ (units $i$, price bands $b$), energy load bid
acceptances $v_{ib}$, FCAS service enablements $f^{\,j}_{ib}$ across the
ten regulation and contingency services, and aggregate interconnector
flows $F$. The slack penalties $\Phi(\cdot)$ replicate AEMO's
constraint-violation pricing so that infeasibilities are resolved at
AEMO's preference ordering rather than producing infeasible solves.
Regional reference prices recover as Lagrange multipliers on the
regional energy-balance constraints. 

The \texttt{nempy} library has been validated
against historical AEMO dispatch outcomes in \citet{Prakash2022nempy}, which 
shows high accuracy in replicating market clearing results. We report an end-to-end
replication check against AEMO's published regional reference prices on
our own panel in Appendix~\ref{app:reclearing-validation}. 
We find a median absolute discrepancy of \$$0.05$/MWh. More than $97\%$ of intervals
are reproduced within \$$20$ of AEMO's published RRP.\footnote{Replication degrades when RRP approaches the price cap. This is expected, as this represents extreme scarcity intervals during which AEMO's full constraint set is difficult to reproduce. These intervals however are not concentrated in the evening-peak windows the conduct test focuses on.}

As a final validation, we construct a randomly selected panel of broad counterfactual bid-stack
deviations. Each deviation modifies the focal battery's submitted stack
while holding all other batteries' stacks fixed, and the cleared interval
is recomputed using \eqref{eq:nempy-objective} including the FCAS
co-optimisation. We consider three types of deviations, illustrated in
Figure~\ref{fig:deviation-operators}:
within-market downward and upward reallocations across the bid bands of
a single market (energy or FCAS), and cross-market reallocations between
energy and FCAS. Based on observed usage patterns we simplify by 
considering deviations between three sections of the ten bid bands. Top bands $8$--$10$ are used to park energy, middle
bands $4$--$7$ are used near the clearing price, and low bands $1$--$3$ are used
to ensure dispatch in most intervals. The deviations always shift between 
two of these three sections. Shift sizes are fixed shares of
the relevant offered MW and are capped by the available MW in the source
band. We then randomly select $800$ candidate rows per deviation type. Appendix
Table~\ref{tab:deviation-menu} lists the full set of deviations with
their shift sizes.

Each recleared deviation produces effects in three channels. Let
$\Delta\Pi^E_{itd}$ and $\Delta\Pi^F_{itd}$ denote the changes in current
energy and FCAS profit, and let
\begin{equation}
\Delta V_{itd}
:=
\E\!\left[V_{i,t+1}\bigl(S^{cf}_{i,t+1},x'\bigr)\,\big|\,x_t\right]
-
\E\!\left[V_{i,t+1}\bigl(S^{base}_{i,t+1},x'\bigr)\,\big|\,x_t\right]
\label{eq:main-deltav}
\end{equation}
denote the change in continuation value induced by the deviation's effect
on next-period state of charge, where $S^{cf}_{i,t+1}$ and
$S^{base}_{i,t+1}$ are the recleared and observed next-period SoC and the
expectation is taken under the same matrix that appears in
\eqref{eq:main-bellman}. The focal owner's deviation gain on battery
$i$ at interval $t$ is the sum of the three channels,
\begin{equation}
\Delta\Pi^{\mathrm{own}}_{itd}
:=
\Delta\Pi^E_{itd}
+\Delta\Pi^F_{itd}
+\Delta V_{itd}.
\label{eq:main-gain-decomp}
\end{equation}
A battery-interval is incentive compatible if no sampled deviation
yields a strictly positive own-profit gain. We report the share of
battery-intervals with at least one deviation yielding a positive 
profit deviation gain above a given threshold ranging from
\$$1$ to \$$1{,}000$ as a robustness check on Bellman
measurement error.\footnote{Appendix~\ref{app:bellman-estimation} also reports
the diagnostic under an alternative current-bin convention for
$\Delta V_{itd}$.}

\begin{figure}[t]
\centering
\caption{Schematic bid-stack deviations}
\label{fig:deviation-operators}
\begin{subfigure}[t]{0.26\linewidth}
\centering
\begin{tikzpicture}[scale=0.70,transform shape,font=\scriptsize,>=Stealth,line cap=round,line join=round]
  \draw[->,black!50] (-0.15,-0.15) -- (3.25,-0.15) node[right] {MW};
  \draw[->,black!50] (-0.15,-0.15) -- (-0.15,3.05) node[above] {Price};
  \foreach \y/\lab in {0/{$1$--$3$},1/{$4$--$7$},2/{$8$--$10$}} {
    \filldraw[fill=white,draw=black!45] (0,\y) rectangle (2.75,\y+0.65);
    \node[anchor=west] at (0.12,\y+0.325) {\lab};
  }
  \filldraw[fill=black!12,draw=black!55] (0,2) rectangle (2.75,2.65);
  \node[anchor=west] at (0.12,2.325) {$8$--$10$};
  \node[anchor=east] at (2.63,2.325) {source};
  \filldraw[fill=black!55,draw=black!55] (0,0) rectangle (2.75,0.65);
  \node[anchor=west,text=white] at (0.12,0.325) {$1$--$3$};
  \node[anchor=east,text=white] at (2.63,0.325) {target};
  \draw[->,thick,black!80] (2.95,2.35) to[out=5,in=5,looseness=1.15] (2.95,0.35);
\end{tikzpicture}
\caption{Move down}
\end{subfigure}
\hfill
\begin{subfigure}[t]{0.26\linewidth}
\centering
\begin{tikzpicture}[scale=0.70,transform shape,font=\scriptsize,>=Stealth,line cap=round,line join=round]
  \draw[->,black!50] (-0.15,-0.15) -- (3.25,-0.15) node[right] {MW};
  \draw[->,black!50] (-0.15,-0.15) -- (-0.15,3.05) node[above] {Price};
  \foreach \y/\lab in {0/{$1$--$3$},1/{$4$--$7$},2/{$8$--$10$}} {
    \filldraw[fill=white,draw=black!45] (0,\y) rectangle (2.75,\y+0.65);
    \node[anchor=west] at (0.12,\y+0.325) {\lab};
  }
  \filldraw[fill=black!12,draw=black!55] (0,1) rectangle (2.75,1.65);
  \node[anchor=west] at (0.12,1.325) {$4$--$7$};
  \node[anchor=east] at (2.63,1.325) {source};
  \filldraw[fill=black!55,draw=black!55] (0,2) rectangle (2.75,2.65);
  \node[anchor=west,text=white] at (0.12,2.325) {$8$--$10$};
  \node[anchor=east,text=white] at (2.63,2.325) {target};
  \draw[->,thick,black!80] (2.95,1.35) to[out=-5,in=-5,looseness=1.15] (2.95,2.35);
\end{tikzpicture}
\caption{Move up}
\end{subfigure}
\hfill
\begin{subfigure}[t]{0.43\linewidth}
\centering
\begin{tikzpicture}[scale=0.70,transform shape,font=\scriptsize,>=Stealth,line cap=round,line join=round]
  \begin{scope}
    \draw[->,black!50] (-0.15,-0.15) -- (3.25,-0.15) node[right] {MW};
    \draw[->,black!50] (-0.15,-0.15) -- (-0.15,3.05) node[above] {Price};
    \node[anchor=south] at (1.25,3.02) {Market A};
    \foreach \y/\lab in {0/{$1$--$3$},1/{$4$--$7$},2/{$8$--$10$}} {
      \filldraw[fill=white,draw=black!45] (0,\y) rectangle (2.75,\y+0.65);
      \node[anchor=west] at (0.12,\y+0.325) {\lab};
    }
    \filldraw[fill=black!12,draw=black!55] (0,2) rectangle (2.75,2.65);
    \node[anchor=west] at (0.12,2.325) {$8$--$10$};
    \node[anchor=east] at (2.63,2.325) {source};
  \end{scope}
  \begin{scope}[xshift=4.25cm]
    \draw[->,black!50] (-0.15,-0.15) -- (3.25,-0.15) node[right] {MW};
    \draw[->,black!50] (-0.15,-0.15) -- (-0.15,3.05) node[above] {Price};
    \node[anchor=south] at (1.25,3.02) {Market B};
    \foreach \y/\lab in {0/{$1$--$3$},1/{$4$--$7$},2/{$8$--$10$}} {
      \filldraw[fill=white,draw=black!45] (0,\y) rectangle (2.75,\y+0.65);
      \node[anchor=west] at (0.12,\y+0.325) {\lab};
    }
    \filldraw[fill=black!55,draw=black!55] (0,0) rectangle (2.75,0.65);
    \node[anchor=west,text=white] at (0.12,0.325) {$1$--$3$};
    \node[anchor=east,text=white] at (2.63,0.325) {target};
  \end{scope}
  \draw[->,thick,black!80] (2.92,2.325) to[out=25,in=155,looseness=1.08] (4.08,0.325);
\end{tikzpicture}
\caption{Move across markets}
\end{subfigure}
\vspace{2pt}
{\footnotesize \textit{Notes:} Each panel uses three price-band groups:
low bands \(1\)--\(3\), middle bands \(4\)--\(7\), and high bands
\(8\)--\(10\). Light grey marks the source band from which quantity is
removed; dark grey marks the target band into which the same quantity is
added. Panel (a) illustrates within-market downward reallocations, Panel (b)
illustrates within-market upward reallocations such as clearing-to-above
withholding, and Panel (c) illustrates cross-market reallocations between
energy and FCAS stacks.}
\end{figure}

Table~\ref{tab:ic-violations-broad} reports the pre/post-ERI share of
focal battery-intervals with at least one sampled deviation that increases 
profits relative to the observed bid stack by more than the listed threshold. 
We find that the shares of intervals with positive deviation gains post-ERI
are below pre-ERI shares in every column of every deviation
type. For example, for within energy deviations, we observe 21.7\% of our sample with a \$1 difference in total profit (after accounting for the change in the 
continuation value) pre-ERI, which drops to 8.8\% post-ERI. This is consistent with the reduced-form evidence that ERI led to better and 
more precise targeting of the underlying scarcity state and thus efficiency 
gains. We consistently observe that remaining profitable deviations are 
fewer but also, on average, economically smaller. The generally small differences 
that we find support the reliability of the Bellman estimation. If our Bellman 
values were imprecise, we should observe large deviation gains and many intervals 
with substantial differences between profit from observed bids and from the 
counterfactual deviation.

\begin{table}[h]
\centering
\caption{Incentive-compatibility violations, pre/post-ERI, by deviation
panel and dollar tolerance}
\label{tab:ic-violations-broad}
\small
\begin{tabular}{@{} l l r r r r r @{}}
\toprule
 & & & \multicolumn{4}{c}{Share above} \\
\cmidrule(lr){4-7}
Deviation family & Period & Intervals & \$1 & \$10 & \$100 & \$1{,}000 \\
\midrule
Within energy & Pre-ERI & 702 & 21.7\% & 17.4\% & 13.0\% & 10.8\% \\
 & Post-ERI & 1,114 & 8.8\% & 3.7\% & 0.9\% & 0.3\% \\
Within FCAS & Pre-ERI & 388 & 19.1\% & 1.5\% & 0.3\% & 0.0\% \\
 & Post-ERI & 853 & 6.0\% & 0.2\% & 0.0\% & 0.0\% \\
Cross-market & Pre-ERI & 1,035 & 23.8\% & 14.0\% & 6.0\% & 3.5\% \\
 & Post-ERI & 1,506 & 10.2\% & 4.1\% & 1.3\% & 0.5\% \\
Cross-FCAS & Pre-ERI & 372 & 10.8\% & 0.5\% & 0.0\% & 0.0\% \\
 & Post-ERI & 920 & 2.1\% & 0.0\% & 0.0\% & 0.0\% \\
\bottomrule
\end{tabular}
\normalsize

\vspace{4pt}
{\footnotesize \textit{Notes:} Pre/post-ERI share of battery-intervals in which at least one feasible sampled deviation yields a strictly positive own-profit gain exceeding the listed dollar tolerance. The own-profit gain combines the recleared change in current energy plus FCAS profit and the change in the Bellman continuation value, with continuation values evaluated under the kernel-expectation convention.}
\end{table}

We now employ the state-of-charge calibration, Bellman value estimation, and 
validated reclearing methodology to perform our conduct test.

\subsection{Conduct Test by Counterfactual Reclearing}\label{subsec:conduct-test}

We now turn to the conduct test itself. The starting point is the
standard view of market power in uniform-price electricity auctions.
In supply-function models, firms exercise market power by making their
supply schedules less elastic or by offering less effective capacity near
the clearing margin \citep{KlempererMeyer1989,Vives2011,Holmberg2008}.
In repeated uniform-price auctions, collusive bidding similarly sustains
high prices by weakening the incentive to expand supply, and deviations
from a collusive arrangement are typically undercutting or output-expansion
deviations that win more dispatch while rivals continue to bid less
aggressively \citep{Fabra2003}. Storage adds an intertemporal layer:
strategic storage owners internalise the price impact of when they buy
and sell electricity, so market power can appear as smoothed or withheld
discharge rather than purely as a static markup
\citep{AndresCerezoFabra2023}. This logic also arises with 
independent learning algorithms that can generate seemingly collusive battery decisions in a
dynamic storage environment \citep{AbadaLambin2023}.
We therefore consider deviations that move part of the MW
parked just above the regional clearing price down into the clearing band,
thereby increasing the focal battery's dispatch and own profit while
lowering the price and profits earned by other same-provider batteries in
the region.  

We formalise the test as follows. For a focal battery $i$ in region
$r$ at interval $t$ and a candidate deviation $d$, the reclearing
engine returns, for every battery $j$ in $r$, a change
$\Delta\pi_{jtd}$ in current energy plus FCAS profit and a change
$\Delta V_{jtd}$ in the continuation value of stored energy where each
$\Delta V_{jtd}$ is obtained by applying $j$'s Bellman value function
from~\eqref{eq:main-bellman} to the next-period state of charge
implied by $j$'s recleared dispatch under $d$. For the focal $j=i$,
$\Delta\pi_{itd}+\Delta V_{itd}$ coincides with the focal own-profit
gain $\Delta\Pi^{\mathrm{own}}_{itd}=\Delta\Pi^E_{itd}+\Delta\Pi^F_{itd}+\Delta V_{itd}$
defined in~\eqref{eq:main-gain-decomp}.

We then compute the focal owner's total deviation gain on its
regional portfolio: $i$'s own gain plus the gains on any other
batteries the same firm operates in $r$,
\begin{equation}
\Delta\Pi_{o_i,t}(d)
:=
\sum_{\substack{j:\,o_j=o_i\\ r_j=r_i}}
\bigl[\Delta\pi_{jtd}+\Delta V_{jtd}\bigr].
\label{eq:conduct-owner-gain}
\end{equation}
We also calculate the profit change for same-provider,
different-owner batteries that share $i$'s regional clearing, or
\begin{equation}
\Delta\Pi_{k_i,-o_i,t}(d)
:=
\sum_{\substack{j:\,k_j=k_i,\,o_j\neq o_i\\ r_j=r_i}}
\bigl[\Delta\pi_{jtd}+\Delta V_{jtd}\bigr].
\label{eq:conduct-spillover-gain}
\end{equation}
By construction, the deviations we consider can deliver positive
$\Delta\Pi_{o_i,t}(d)$ -- the focal owner gains by capturing
additional dispatch at a price still above marginal cost -- and
negative $\Delta\Pi_{k_i,-o_i,t}(d)$ -- same-provider peers see the
recleared price fall and lose revenue on their inframarginal MW, and
their stored energy depreciates in value at the lower realised
price.

Under \textit{owner-level} conduct ($\lambda=0$), the observed bid is
optimal at the focal owner's information set $\mathcal I_{ot}$ if no
recleared deviation leaves a positive expected own-portfolio gain,
\begin{equation}
\E_{it}\!\left[\Delta\Pi_{o_i,t}(d)\,\middle|\,Z_{it}\right]\le 0
\quad\text{for every $d$ and every observable $Z_{it}\subseteq\mathcal I_{ot}$.}
\label{eq:conduct-hypothesis-own}
\end{equation}
Under \textit{same-provider} conduct with weight
$\lambda\in(0,1]$, the bidder also values the recleared profit
effect on same-provider peers,
\begin{equation}
\E_{it}\!\left[
\Delta\Pi_{o_i,t}(d)
+\lambda\,\Delta\Pi_{k_i,-o_i,t}(d)
\,\middle|\,Z_{it}\right]\le 0
\quad\text{for every $d$ and every $Z_{it}$.}
\label{eq:conduct-hypothesis-jpm}
\end{equation}
Because $\Delta\Pi_{k_i,-o_i,t}(d)$ is typically negative on the
deviations the test samples, a positive $\lambda$ shrinks
the left-hand side of~\eqref{eq:conduct-hypothesis-jpm}: the bidder
is rationalised in leaving a deviation on the table
whenever the own-portfolio gain is more than offset by the recleared
loss on same-provider peers, weighted by $\lambda$. Equivalently, a
positive $\lambda$ identifies bidding behaviour that looks like
sub-optimal parking of high-band capacity unless the
bidder is also pricing in the price-suppressing effect on same-provider
peers.

To test these inequalities, we fix a candidate $\lambda$ and calculate
which observed deviations the corresponding hypothesis cannot
rationalise. A deviation $(i,t,d)$ contradicts hypothesis $\lambda$
whenever its $\lambda$-weighted gain
$\Delta\Pi_{o_i,t}(d)+\lambda\,\Delta\Pi_{k_i,-o_i,t}(d)$ is strictly
positive -- under that hypothesis the bidder would have left a
positive expected payoff on the table by not deviating. Following
\citet{PakesPorterHoIshii2015}, we accumulate the contradictions
as follows
\begin{equation}
Q_n(\lambda)
:=
\frac{1}{n}\sum_{(i,t,d)}
w_{itd}\,
\max\!\left\{
\Delta\Pi_{o_i,t}(d)+\lambda\,\Delta\Pi_{k_i,-o_i,t}(d),\,0
\right\}^{2},
\label{eq:conduct-objective}
\end{equation}
where the sum is over the test panel described shortly and $w_{itd}$
are stratification weights to ensure balance across
coalitions and ERI periods. Deviations the hypothesis already
rationalises contribute zero by construction, so $Q_n(\lambda)$ is also 
the squared dollar size of the contradictions left under hypothesis
$\lambda$. The conduct estimator is the value of $\lambda$ on $[0,1]$
that minimises this contradiction count,
\begin{equation}
\widehat\lambda
:=
\arg\min_{\lambda\in[0,1]}Q_n(\lambda),
\label{eq:conduct-estimator}
\end{equation}
and we report a $95\%$ confidence set obtained by cluster
subsampling at the $(i,t)$ level \citep{RomanoShaikh2008}, which
respects the within-interval dependence between deviations drawn
from the same focal battery. A value $\widehat\lambda=0$ means that
the owner-level hypothesis~\eqref{eq:conduct-hypothesis-own}
already minimises the unrationalised contradictions and no positive
$\lambda$ improves on it. A value $\widehat\lambda>0$ means that
adding the same-provider spillover with weight $\widehat\lambda$
leaves fewer contradictions than $\lambda=0$, and the magnitude of
$\widehat\lambda$ measures how much of the spillover the observed
bid behaves as if it internalises. In finite samples the minimiser
can sit at a boundary even when the objective is nearly flat. We
therefore treat $\widehat\lambda=1$ as evidence of same-provider
conduct only when the reduction in $Q_n(\widehat\lambda)$ relative
to $Q_n(0)$ is economically meaningful and the subsampling
confidence set excludes zero.

We estimate the conduct parameter as a function of the focal
provider's near-margin capacity share in the clearing interval. Let
$s^{\mathrm{prov}}_{k_i,r_i,t}$ denote provider $k$'s share of
near-margin battery MW in region $r$ at interval $t$, defined as
the MW within $\pm 1$ price band of the clearing price operated by
batteries served by $k_i$, divided by total near-margin battery MW
in $(r_i,t)$. This is a standard single-firm share but applied to
the near-margin capacity of batteries only. Hence, this is not 
equivalent to the total market share which would need to account for 
competing capacity from gas, hydro, and other dispatchable plants. The
per-interval near-margin share tracks the installed-capacity
provider share reported in Section~\ref{subsec:empirical-panel}
on average: across the regions and providers in our sample, the
median per-interval near-margin share lies within a few percentage
points of the installed share, with per-interval dispersion that
reflects which batteries are dispatchable at the margin in that
interval (Appendix~\ref{app:installed-vs-nearmargin}). The
near-margin window is the same concentration object measured at a
finer (per-interval) resolution, not a separate measure. For each
window centre $s_0$ on a grid in $[0.20, 0.85]$ we fit a local
kernel-weighted version of~\eqref{eq:conduct-estimator},
\begin{equation}
\widehat\lambda(s_0)
:=
\arg\min_{\lambda\in[0,1]}
\sum_{(i,t,d)} K_h\!\left(s^{\mathrm{prov}}_{k_i,r_i,t}-s_0\right)
\,w_{itd}\,
\max\!\left\{
\Delta\Pi_{o_i,t}(d)+\lambda\,\Delta\Pi_{k_i,-o_i,t}(d),\,0
\right\}^{2},
\label{eq:conduct-local-window}
\end{equation}
where $K_h$ is a triangular kernel with bandwidth $h=0.10$.
Confidence sets are constructed by cluster subsampling at the
$(i,t)$ level as for the unweighted moment objective, with the
same kernel weights applied inside each subsample draw. The
estimator $\widehat\lambda(s_0)$ traces out a continuous profile
of the conduct parameter across the identifying range of
$s^{\mathrm{prov}}$ without committing to a discrete bin
specification. Appendix~\ref{app:conduct-table} reports the
corresponding fit on $10$-percentage-point
$s^{\mathrm{prov}}$ bins as a complementary discrete view of the
same estimand.

The test requires that the focal provider's near-margin set
contain at least two distinct owners; otherwise the same-provider,
different-owner partner set is empty and
$\Delta\Pi_{k_i,-o_i,t}(d)\equiv 0$. We therefore restrict the
sample to battery-intervals with
$\mathrm{HHI}^{\mathrm{own}}_{k_i,r_i,t}<1$, where
$\mathrm{HHI}^{\mathrm{own}}_{k,r,t}=\sum_{o}(s^{\mathrm{own}}_{o,k,r,t})^{2}$
is the Herfindahl index of owner shares inside the provider's
near-margin set.
We further restrict the deviation sample along three dimensions.
First, we only keep focal batteries whose provider has at least one
same-region battery operated by a different firm, so that the
two-owner condition above can be met; in practice this confines the
test to batteries served by the two largest commercial providers,
$12$ Tesla Autobidder batteries spanning NSW, QLD, SA, and VIC and
$7$ Fluence Mosaic batteries spanning NSW, QLD, and VIC. Second, we
keep evening-peak intervals between 16:00 and 22:00 -- the window in
which Section~\ref{subsec:aem-synchronization} locates the
ERI-driven upper-band responsiveness and in which regional prices
are highest and most often set by batteries. Third, we drop
intervals in which the focal battery offers no movable MW just
above the clearing band, since we cannot construct a deviation 
for these intervals. For each interval we then construct nine deviations: three deviation
sizes ($10\%$, $25\%$, and $50\%$ of the relevant offered MW) for each
of three markets (energy, raise-FCAS, lower-FCAS). We retain only deviations
that are feasible in the realised stack, meaning that the source band has
movable MW and the bid-stack edit passes the reclearing-engine constraints.
The final conduct-test panel consists of $49{,}481$ feasible
deviations -- $46{,}677$ in energy and $2{,}804$ in FCAS -- over
$15{,}882$ focal battery-intervals.

\begin{figure}[h!]
\centering
\caption{Conduct-test sample balance}
\label{fig:conduct-sample-balance}
\includegraphics[width=\textwidth]{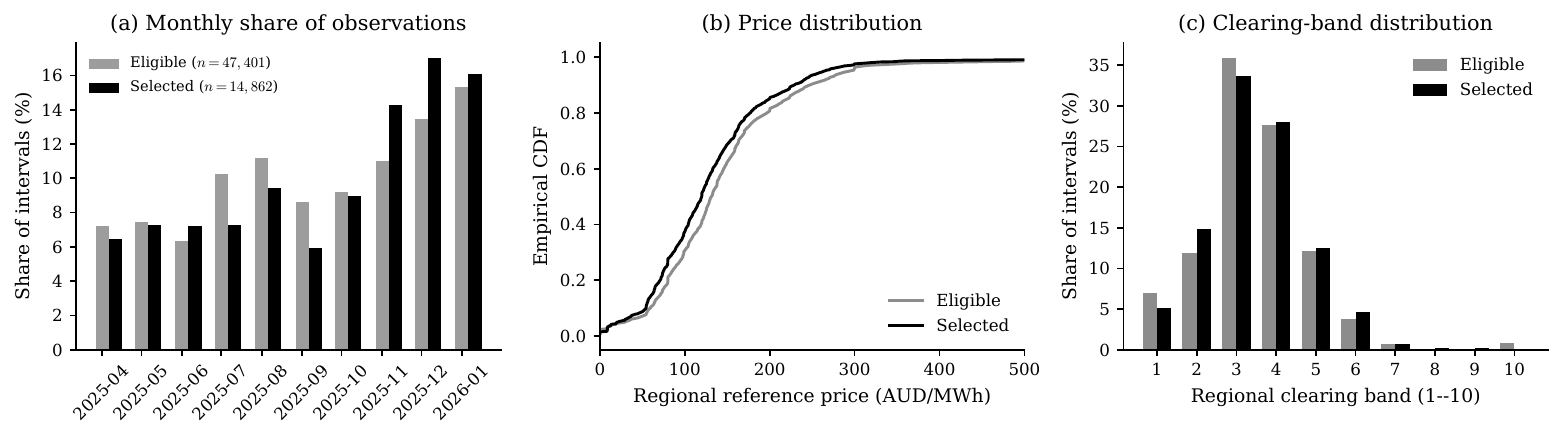}

\vspace{2pt}
{\footnotesize \textit{Notes:} Comparison of the selected conduct-test
panel with the eligible-but-not-selected evening-peak intervals.
Panel (a): share of selected and eligible intervals by
trading month. Panel (b): empirical CDF of the regional
reference price, capped at \$$500$/MWh for readability.
Panel (c): share of intervals by regional clearing band.}
\end{figure}

Figure~\ref{fig:conduct-sample-balance} compares the distribution
of intervals over time, of regional reference prices, and of
regional clearing-band placements between the deviation panel and
the universe of eligible evening-peak intervals. The two
distributions track each other closely on each dimension, so the
panel is not selectively over-sampling outcome-relevant states
within the test's identifying universe.
Appendix~\ref{app:conduct-sample-balance} reports the corresponding
numerical balance table together with the full evening-peak set of
all Tesla- and Fluence-managed battery-intervals for completeness.

Figure~\ref{fig:conduct-local-window} shows the main results of 
the conduct test using the local kernel-weighted estimator~\eqref{eq:conduct-local-window} as a
function of the focal-provider share on the
deviation panel. The top row shows the point estimate
$\widehat\lambda(s_0)$ at each window centre $s_0$ and the 
$95\%$ cluster-subsampling confidence set. The bottom row plots
the corresponding $Q$-reduction
$1-Q_n(\widehat\lambda)/Q_n(0)$, measuring the share of the
unrationalised deviation evidence at $\lambda=0$ that the conduct
hypothesis at $\widehat\lambda$ accounts for. Appendix
Figure~\ref{fig:conduct-Q-curves} shows the underlying
$Q_n(\lambda)/Q_n(0)$ curves on the discrete share-band view, together with
the FCAS panels, which we omit here. The same estimator applied to FCAS deviations 
finds no evidence of joint profit maximisation. However, this is because FCAS accounts
for about $0.1\%$ of these batteries' revenue in the evening peak, and enabled 
capacity clears at a median of well under one cent per MW per interval, 
so an equivalent deviation in the FCAS stack moves peer profits by almost nothing -- 
the total same-provider spillover across all $2{,}804$ FCAS deviations is under $\$5$, 
compared to $\$559{,}000$ in energy discharge. Thus, the conduct signal is present where 
the spillover is large and absent where it is near-zero, which is the same dose-response 
pattern that produces the concentration threshold.

\begin{figure}[h!]
\centering
\caption{Conduct parameter $\widehat\lambda$ with $95\%$
confidence set}
\label{fig:conduct-local-window}
\includegraphics[width=\textwidth]{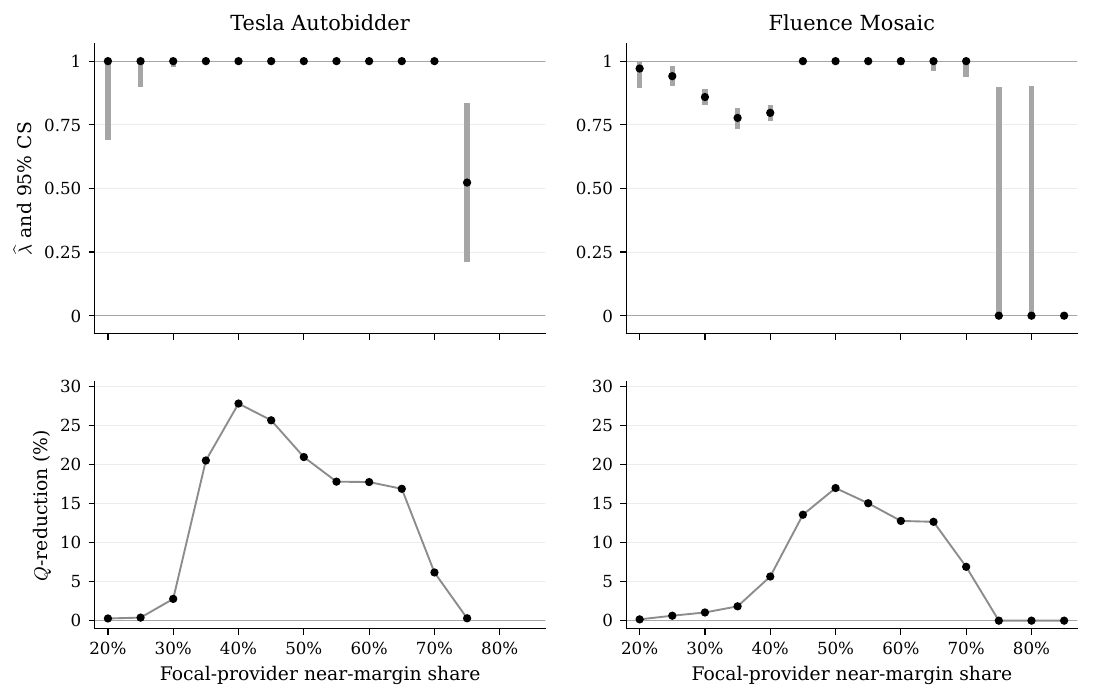}

\vspace{2pt}
{\footnotesize \textit{Notes:} Local-window kernel-weighted fit of
the two-tier conduct estimator on the deviation
sample, separately for Tesla Autobidder (left) and Fluence Mosaic
(right). Top row: point estimate $\widehat\lambda(s_0)$ at each
window centre $s_0$ with a vertical $95\%$ cluster-subsampling
confidence set; dotted reference lines mark 
$\lambda=0$ and $\lambda=1$. Bottom row: $Q$-reduction
$1-Q_n(\widehat\lambda)/Q_n(0)$ at the same window centres. Kernel
is triangular with bandwidth $h=0.10$. Windows with fewer than
$50$ multi-owner $(i,t)$ clusters are omitted.}
\end{figure}

For both Tesla Autobidder and Fluence Mosaic, $\widehat\lambda$ is
above $0.75$ and mostly at the corner $\widehat\lambda=1$ for provider
shares between $20\%$ and $70\%$. The estimates are informative where accounting
for the profit change in same-provider, different-owner batteries
meaningfully reduces $Q_n$, which holds for provider shares of
roughly $30$--$70\%$ for both providers. The $95\%$ subsampling
confidence sets exclude the owner-only null across this range.
Outside this range the estimates are uninformative, for different reasons at the two ends. Below $30\%$ the identifying sample is large enough that an
improvement in fit of the size found above $30\%$ would be detected if
same-provider internalisation were operating, yet the $Q$-reduction stays near
zero and then rises discontinuously at $30\%$. This points to a genuine
concentration threshold. Internalisation becomes detectable only once a
provider controls a sufficiently large share of near-margin capacity. 
Above $70\%$ the multi-owner identifying sample becomes too thin to
estimate reliably.

The results therefore consistently show that joint-profit
maximisation across same-provider batteries is a statistically
significant improvement in fit over owner-level profit maximisation
alone. Two features of the estimates are worth considering. First, 
the batteries in question are non-negligible price setters. Because 
the reclearing engine returns the counterfactual clearing price for 
every deviation, we can measure each battery's price impact directly 
rather than inferring it from an estimated residual demand curve. A 
single battery's most profitable owner-only deviation moves the regional 
price by $\$1.28$/MWh at the median, $\$2.70$/MWh on average and 
$\$6.24$/MWh in the upper decile, reaching roughly $\$59$/MWh at the maximum 
(Appendix~\ref{app:conduct-cost}). Structural concentration screens 
exist to proxy for exactly this effect that we can directly calculate 
in our setting. In fact, the conduct test
rejects the owner-only null at focal-provider near-margin shares
between $30\%$ and $70\%$. Because the per-interval near-margin
share tracks the installed share on average, this identifying range
corresponds to provider installed-capacity shares of approximately
$20$--$60\%$, which is inside the range the classic
wholesale-market-power literature treats as material unilateral
conduct~\citep{BorensteinBushnellWolak2002,Mansur2007}
Second, the profit internalisation arises across ownership. In 
the identifying intervals the focal provider's near-margin capacity is 
held by $2.2$ separately owned firms on average, and $83\%$ of the 
same-provider spillover that the test finds internalised accrues 
to batteries owned by a different firm than the deviating one -- 
precisely the pairs that ownership-based concentration analysis 
treats as independent competitors. The two partitions also differ in 
level: in QLD1 and VIC1, which together supply $73\%$ of the 
identifying intervals, provider concentration among batteries is 
more than twice owner concentration. The identifying range of 
$30$--$70\%$ near-margin share corresponds to roughly $20$--$60\%$ 
of installed battery capacity, of the same order as the single-firm 
shares that trigger scrutiny under standard screens \citep{FERCOrder697},
though we stress that our shares are computed within the battery segment 
and the comparison is a yardstick rather than a market-power finding

We now investigate possible alternative explanations for the
results of the conduct test, and rule them out. All results of the robustness exercises 
are documented in Appendix~\ref{app:conduct-robustness}.
First, the conduct test treats a positive recleared owner-portfolio gain
as a contradiction of owner-level optimisation. A natural
alternative explanation is that the estimated owner payoff omits private
contract positions, tolling arrangements, degradation shadow costs,
or other unobserved owner-level costs that make the owner-only
deviations privately unattractive. We bound this omitted cost for
each identifying band by solving for the smallest constant private
cost $c$ (in AUD per MWh of focal discharge) such that adding
$c\cdot\mathrm{exposure}_{itd}$ to the focal owner's deviation gain
reduces the moment objective as much as the conduct hypothesis at
$\widehat\lambda$.

The implied charges are well above the Bellman opportunity-cost
benchmark of about \$$11$/MWh, which is the largest movement in
$\widehat m_{it}$ across our numerical sensitivity checks. For
Tesla Autobidder, two of the four identifying bands ($40$--$50\%$ and
$60$--$70\%$) even require an infinite cost -- the
irreducible part of the moment objective from positive own-gain
rows with no corresponding dispatch exposure already exceeds
$Q_n(\widehat\lambda)$, so no finite per-MWh cost can rationalise
the data -- and the other two require \$$16$ and \$$59$/MWh, respectively. For
Fluence, the three corner intervals at $\widehat\lambda=1$ require
\$$11$, \$$15$, and \$$28$/MWh; the $30$--$40\%$ interior interval, the
weakest in the identifying set, requires \$$8$/MWh. This is the only
interval that sits below the Bellman benchmark. In every other
interval the implied cost is at least a factor above the
benchmark or infinite.

Second, the deviation panel we construct and the reclearing calculation 
determine deviations and their value at the realised market outcome, instead 
of the expected market outcome prior to the 5-minute interval. An apparent owner-level
violation could therefore in principle reflect hindsight -- a deviation that looks
profitable ex post but not under the information the bidder held when the
operative bid was submitted. To rule out this possibility, we date each interval's operative
quantity bid to its submission timestamp in the data and split the test by the lead time
to dispatch. If the results are driven by hindsight, then they should be strongest for
bids submitted well ahead of dispatch and weakest for bids submitted shortly 
before the dispatch interval. We find the opposite. The conduct test is strongest 
for rebids submitted within the final fifteen minutes prior to dispatch -- when five-minute
predispatch forecasts can be expected to be accurate and the no-profitable-deviation logic
applies. 

Third, a positive $\widehat\lambda$ could reflect a purely mechanical price 
externality rather than same-provider internalisation. Any deviation that lowers 
the clearing price reduces the inframarginal revenue of every battery near 
the margin, so same-provider profits may simply fall 
because same-provider batteries happen to be near the margin. Because the 
reclearing engine returns the profit change each deviation imposes on every 
regional battery, we can test this possibility directly. For each focal deviation we build 
placebo coalitions from the different-provider, different-owner batteries near 
the margin in the same region and interval, matched to the actual same-provider 
coalition on the number of batteries and on installed capacity, so that they 
carry the same mechanical price exposure by construction. Across $500$ capacity-matched 
placebo coalitions -- different-provider batteries carrying the same near-margin 
price exposure -- not one improves the fit of the conduct objective as much as 
the actual same-provider coalition ($p=0.002$ for both providers), and the 
same-provider improvement exceeds the $95$th percentile of the placebo 
distribution. Moreover, if we do not match on capacity and construct a simpler 
count-matched placebo, we continue to obtain the same result ($p\le0.010$).
This comparison also bounds owner-level objectives more complex than 
the per-MWh constant cost bound above. Some owners may hold 
generation portfolios and contract positions whose value rises with 
the regional price, an incentive denominated in price exposure 
rather than in discharge volume. Because any price-denominated 
objective scales with the recleared price change, a capacity-matched 
placebo coalition -- which carries the same price exposure by 
construction -- would rationalise the observed restraint as well as 
the same-provider coalition does. It does not.
The conduct test results are therefore specific to the provider coalition, 
not a generic consequence of near-margin batteries losing revenue when the price falls.

Finally, we verify that the estimator recovers internalisation where theory requires it:
we estimate the weight the focal owner places on its own other batteries,
instead of imposing it. We indeed find a substantial improvement in fit and an 
estimated weight on same-owner batteries of approximately one,
confirming that the method is able to detect joint-profit incentives when they should be
present by construction. 

Taken together, these robustness exercises  
leave same-provider profit internalisation as the most likely interpretation of the 
observed bidding behaviour and results of the conduct test. We now close 
by calculating the magnitude of the impact this conduct has. We estimate it 
to be around \$$5.5$m per year. We obtain this from the deviation
panel: for each focal battery-interval in the identifying range, we
take the owner-only best deviation $d^0_{it}$ -- the deviation that
maximises the focal owner's recleared portfolio gain -- and record
the counterfactual regional price the reclearing engine returns. The
gap between the observed and counterfactual price is the price increase that the owner's restraint sustains. We multiply it by regional
demand for the 5-minute interval, sum across identified $(r,t)$, and
re-weight by inverse inclusion probability, giving \$$4.6$m over the
sample window and \$$5.5$m annualised, of which Tesla Autobidder
accounts for \$$3.1$m and Fluence Mosaic for \$$2.4$m.

This is likely a conservative bound on the cost of the identified
conduct. The counterfactual is partial-equilibrium and unilateral:
we compute the price increase from a single owner-only best
deviation while holding all other bid stacks fixed, whereas fully
coordinated same-provider bidding would move more capacity and raise
prices further. It also counts only discrete deviations on the
$\{10\%,25\%,50\%\}$ MW grid, and only within the identifying
concentration bands. Finally, the magnitude is tied to the current
fleet. Batteries account for $46\%$ of AEMO's record $64$~GW 
December-2025 connection pipeline \citep{AEMO2026DecemberQuarterPipeline}, 
several times the $6.6$~GW of known-provider capacity in our sample, 
so if provider shares persist as this capacity is commissioned, 
batteries will set the regional price in more intervals and the 
same per-interval conduct would scale the annual cost accordingly.

\section{Conclusion}\label{sec:conclusion}

Whether and how algorithms may facilitate collusion is a central question in current
competition policy debate, but direct empirical evidence is still scarce.
\citet{AssadClarkErshovXu2024} show that the adoption of pricing algorithms
raised margins among competing retail fuel stations, and
\citet{CalderWangKim2026} find that a common pricing algorithm in rental housing
internalises competitors' profits.\footnote{A related empirical literature studies simpler repricing
software that adjusts a posted price in response to rivals
\citep{BrownMacKay2023,Musolff2025algorithmic}.} We provide, to our knowledge, the first evidence 
of the effects of common third-party \emph{bidding} algorithms, which construct an
entire offer schedule from primitives rather than repricing a single posted
price, and of concentration in the upstream layer
of algorithm providers. We argue that the mechanism is not specific
to batteries or electricity markets: where competitors delegate high-frequency decisions
to a shared algorithmic provider, that provider maps common information into
correlated conduct. We develop a simple stylised model to capture our setting  
of common third-party providers with dynamic storage and show that previously documented 
price and welfare effects from the literature arise in our setting as 
well \citep{HarringtonOrtner2025, AleksenkoMiklosThal2026}.

To conduct our analysis, we hand-collect a mapping
of autobidding providers to utility-scale batteries in the
Australian National Electricity Market and combine it
with public 5-minute bid, price, and dispatch data, as well as 
a reclearing engine that reproduces the
regional clearing outcome from public market inputs and allows us to 
construct counterfactual market outcomes. We document substantial 
provider concentration, and then examine how this concentration affects 
downstream market conduct.

First, we find that bids constructed by the same provider move
together more than bids constructed by different providers, and
that the gap widens after the July~2025 ERI disclosure reform. The
pattern is provider-specific: rebid timing has a provider-wide
component, while bid-stack co-movement is strongest when
same-provider batteries operate in the same region and face the
same local scarcity state. The result is robust to alternative
samples and specifications and is consistent with a common
provider's algorithm converting shared scarcity information into
correlated bids. We also find that the disclosure reform led to
higher bids during periods of scarcity and lower bids during
periods of abundance, showing the effect of the improved information
about the market state.

Second, we relate this co-movement to regional provider
concentration and find that both local co-movement and the
within-day variance of the regional average bid price rise with the
installed provider HHI.
These reduced-form patterns are consistent with responsive bidding
but cannot, on their own, separate that channel from coordinated
conduct. We therefore turn to a structural test of the latter.

Third, to test for conduct, we estimate a Bellman opportunity-cost value
for each battery, generate local feasible deviations from observed
bid stacks, and reclear the market under each. The resulting profit
changes define a revealed-preference moment inequality whose
conduct parameter $\lambda\in[0,1]$ measures the weight a battery's
bidder places on same-provider, different-owner profits in its
effective objective. For the dominant autobidding providers
Tesla Autobidder and Fluence Mosaic, $\widehat\lambda$ is at or
above $0.75$ across focal-provider near-margin shares of roughly
$30$--$70\%$, and the $95\%$ subsampling confidence sets exclude
the owner-only null throughout this range. Below $30\%$ the test
retains power but finds no internalisation, so the conduct emerges
only above a concentration threshold; above $70\%$ the multi-owner
sample the test requires becomes too thin. 
Our conduct test shows that these batteries move prices: a single 
battery's best owner-only deviation shifts the regional clearing 
price by $\$1.28$/MWh at the median and by more than $\$6$/MWh 
in the upper decile. And the internalisation crosses ownership
boundaries -- $83\%$ of the spillover it captures accrues to 
batteries owned by other firms, which ownership-based concentration 
analysis treats as independent competitors. FCAS deviations, by contrast, show
no conduct signal.

A set of robustness exercises shows that this is not an artefact
of measurement or of the test's construction. One, the unobserved
owner-level cost required to rationalise the observed restraint under
owner-only incentives exceeds our Bellman opportunity-cost benchmark
in every identifying band. Two, the 
conduct signal is concentrated in rebids submitted within fifteen
minutes of dispatch, when the realised outcome closely approximates
the bidder's information set, so it does not reflect hindsight. Three, it is
specific to the provider coalition: none of $500$ placebo coalitions
of different-provider batteries, matched on battery count and
installed capacity and therefore carrying the same mechanical price
exposure, improves the conduct fit as much as the actual
same-provider coalition. Four, we confirm that the
estimator correctly recovers full internalisation of profits across batteries
of the same owner. We conclude by calculating that the identified withholding 
implies an annualised consumer-side cost of
approximately \$$5.5$~million.

Our findings have implications beyond this market. The NEM is one
of several wholesale electricity markets in which utility-scale
batteries are expanding rapidly and a small number of third-party
autobidder providers operate a growing share of installed capacity.
As algorithmic bidding spreads, the relevant unit of concentration may
become the provider whose software constructs the bid, not only the
asset owner. Our results also speak to the disclosure
debate in these markets: the July~2025 reform improved the scarcity
information batteries can condition on, but the same richer signal
let a concentrated provider's bids track the common state more
tightly. Improved disclosure and stronger same-provider co-movement
are two sides of one channel, and weighing them is a policy
trade-off. 

\clearpage

\bibliographystyle{plainnat}
\bibliography{bib}

\newpage
\appendix
\counterwithin{figure}{section}
\counterwithin{table}{section}
\renewcommand\thefigure{\thesection\arabic{figure}}
\renewcommand\thetable{\thesection\arabic{table}}

\section{Formal Model Details}\label{app:formal-model}

This appendix provides the formal details behind the stylised model in
Section~\ref{subsec:mechanisms}. We define the
strategy and belief objects suppressed in the main text, show the local
responsive-bidding comparative static, and record a simple static affine
equilibrium that illustrates the concentration channel.

\subsection{Beliefs, Strategies, and Local Responsive Bidding}

Let an owner-level bidding strategy be a map
\[
\sigma_o:\mathcal I_{ot}\to\mathcal B_o(S_t,X_{rt}),
\]
and let $\mu_{ot}$ denote owner $o$'s belief over current and future scarcity,
rival information, and rival bid strategies. The compact expectation
$\E_{ot}[\cdot]$ in the main text is shorthand for expectation under
$(\mu_{ot},\sigma_{-o})$. Thus the owner-level responsive benchmark is
\[
\sigma_o(\mathcal I_{ot})
\in
\arg\max_{\mathbf b_o\in\mathcal B_o(S_t,X_{rt})}
\E_{\mu_{ot},\sigma_{-o}}
\left[
\Pi_{ot}(\mathbf b_o,\sigma_{-o}(\mathcal I_{-o,t});
S_t,X_{rt},\{\theta_{r\tau}\}_{\tau\ge t})
\mid\mathcal I_{ot}
\right].
\]

For the responsive-bidding comparative static below we require
two regularity conditions: monotonicity of the continuation-value gap in the
scarcity state, and monotonicity of the conditional distribution of
next-period scarcity in the provider's current posterior.

\begin{assumption}[Monotone storage value]\label{ass:app-monotone-storage}
For a fixed public state \(z\) and provider \(k\), let
\[
\Delta V_k(z,\theta)
:=
V_k^F(z,\theta)-V_k^E(z,\theta)
\]
denote the continuation-value gain from retaining stored energy rather than
depleting the battery. The map
\(\theta\mapsto\Delta V_k(z,\theta)\) is weakly increasing.
\end{assumption}

\begin{assumption}[Monotone scarcity posterior]\label{ass:app-monotone-posterior}
For each provider $k$ and date $t$, the conditional distribution of
$\theta_{r,t+1}$ given $m_{krt}=m$ is weakly increasing in $m$ in the
first-order stochastic dominance sense.
\end{assumption}

The provider's posterior shadow cost of discharge is
\begin{equation}
\widehat c_k(z,m)
:=
c_d
+
\delta\,
\E\!\left[
\Delta V_k(z,\theta_{r,t+1})
\,\middle|\,
z_t=z,\ m_{krt}=m
\right],
\label{eq:app-shadow-cost}
\end{equation}
where $c_d$ is the marginal throughput cost of discharge and $\delta$ is the
discount factor.

Given these two assumptions, we can then state the following Lemma, which directly delivers the co-movement of same-provider batteries.

\begin{lemma}[Posterior shadow cost monotonicity]\label{lem:app-shadow-cost}
Under Assumptions~\ref{ass:app-monotone-storage}--\ref{ass:app-monotone-posterior},
the posterior shadow cost $\widehat c_k(z,m)$ is weakly increasing in $m$.
\end{lemma}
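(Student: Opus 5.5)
The plan is to reduce the lemma to the standard fact that expectations of weakly increasing functions preserve first-order stochastic dominance. Write the posterior shadow cost in \eqref{eq:app-shadow-cost} as
\[
\widehat c_k(z,m)=c_d+\delta\int \Delta V_k(z,\theta)\,dF_k(\theta\mid z,m),
\]
where $F_k(\cdot\mid z,m)$ is the conditional distribution of $\theta_{r,t+1}$ given $z_t=z$ and $m_{krt}=m$. The terms $c_d$ and $\delta\ge 0$ do not depend on $m$. It therefore suffices to show that the integral is weakly increasing in $m$.

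\textbf{Step 1 (increasing integrand).} Fix $z$. By Assumption~\ref{ass:app-monotone-storage}, the map $g(\theta):=\Delta V_k(z,\theta)$ is weakly increasing in $\theta$.

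\textbf{Step 2 (FOSD in $m$).} By Assumption~\ref{ass:app-monotone-posterior}, if $m'\ge m$ then $F_k(\cdot\mid z,m')$ first-order stochastically dominates $F_k(\cdot\mid z,m)$. One caveat: the assumption is stated conditioning on $m_{krt}=m$ only. I will read it as holding conditionally on the public state $z_t=z$ as well, which is the natural interpretation given that $z$ is held fixed throughout.

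\textbf{Step 3 (integrate).} Apply the characterisation of FOSD: $F'\succeq_{FOSD}F$ if and only if $\int g\,dF'\ge\int g\,dF$ for every weakly increasing $g$ for which the integrals exist. The one-line proof is the quantile coupling. Set $\theta'=F'^{-1}(U)\ge \theta=F^{-1}(U)$ pointwise for $U\sim\mathrm{Unif}(0,1)$. Then monotonicity of $g$ gives $g(\theta')\ge g(\theta)$, and taking expectations yields the inequality. Multiplying by $\delta\ge0$ and adding $c_d$ gives $\widehat c_k(z,m')\ge\widehat c_k(z,m)$.

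The main obstacle is technical: integrability of $\Delta V_k(z,\theta_{r,t+1})$. I will handle it by noting that stored energy is bounded in $[0,\bar S_i]$ and prices are capped, so the value functions, and hence $\Delta V_k$, are bounded. Bounded $g$ makes both expectations finite and the coupling argument valid without further conditions. If unbounded $\theta$ is allowed, the fallback is to assume $\E|\Delta V_k|<\infty$ and apply monotone convergence to truncations $g\wedge n\vee(-n)$, each of which is still increasing.
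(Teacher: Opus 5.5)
Your proof is correct and follows the paper's argument. First-order stochastic dominance of the posterior in $m$ (Assumption~\ref{ass:app-monotone-posterior}) combined with the weakly increasing integrand $\Delta V_k(z,\cdot)$ (Assumption~\ref{ass:app-monotone-storage}) makes the conditional expectation weakly increasing, and multiplying by $\delta\ge0$ and adding $c_d$ preserves monotonicity. Your quantile coupling, integrability remark, and reading of Assumption~\ref{ass:app-monotone-posterior} as also holding conditionally on $z_t=z$ fill in details that the paper's proof leaves implicit; the paper tacitly relies on that same reading of the conditioning on $z$.
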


\begin{proof}
For any $m'\ge m$, Assumption~\ref{ass:app-monotone-posterior} implies that the
conditional distribution of $\theta_{r,t+1}$ given $m_{krt}=m'$ first-order
stochastically dominates the conditional distribution given $m_{krt}=m$. Since
$\theta\mapsto\Delta V_k(z,\theta)$ is weakly increasing by
Assumption~\ref{ass:app-monotone-storage}, the conditional expectation in
\eqref{eq:app-shadow-cost} is weakly increasing in $m$. Adding $c_d$ and
multiplying by $\delta\ge 0$ preserve monotonicity.
\end{proof}

To derive the local response to scarcity, take one near-margin generation block
of battery $i$ and hold the rest of the stack fixed. Let $b_i$ denote the offer
price for this block and let
\(q_i(b_i;\theta_{rt},\mathbf b_{-i},X_{rt})\) denote dispatched MW. Around the
clearing margin, a higher offer price weakly lowers dispatch, so
\(q_{i,b}\le0\). The local owner payoff can be written as
\begin{equation}
U_i(b_i;m_{krt})
=
\E\!\left[
\pi_i(b_i;\theta_{rt},\mathbf b_{-i},X_{rt})
\mid \mathcal I_{krt}
\right]
+
\delta\,
\E\!\left[
V_{i,t+1}\!\left(
S_{it}-\Delta q_i(b_i;\theta_{rt},\mathbf b_{-i},X_{rt}),
\theta_{r,t+1}
\right)
\mid \mathcal I_{krt}
\right].
\label{eq:app-local-payoff}
\end{equation}
At an interior optimum,
\[
U_{i,b}(b_i^R;m_{krt})=0,
\qquad
U_{i,bb}(b_i^R;m_{krt})<0.
\]
The implicit-function theorem gives
\[
\frac{\partial b_i^R}{\partial m_{krt}}
=
-
\frac{U_{i,bm}}{U_{i,bb}}.
\]
Lemma~\ref{lem:app-shadow-cost} gives the storage part of the cross-partial:
a higher provider scarcity posterior raises the expected shadow cost of
discharge. Since \(-\Delta q_{i,b}\ge0\), a higher bid preserves energy by
reducing current dispatch. When this dynamic force is not overturned by the
current-profit term, \(U_{i,bm}>0\). Hence
\[
\frac{\partial b_i^R}{\partial m_{krt}}>0.
\]

The same notation also gives the covariance logic in the main text. If
\[
b_{it}^R\approx \bar b_{it}+\psi_i(S_{it},X_{rt})+\phi_{it}m_{k_i,rt},
\qquad \phi_{it}>0,
\]
then same-provider pairs load on the same provider-level object, whereas
different-provider pairs load on distinct provider-level objects. Same-provider
bid covariance exceeds cross-provider covariance whenever the provider-level
object contains variation shared within provider but not across providers.

\subsection{A Static Affine Benchmark}

This subsection develops a static equilibrium benchmark.
This closed-form benchmark uses Gaussian dynamics and signals. Under
these assumptions the regularity conditions of Appendix~A.1 hold automatically
and the affine equilibrium has a closed form.

Let $(\theta_{r\tau})_{\tau\ge t}$ be a stationary mean-zero Gaussian process
with $\Var(\theta_{r\tau})=\tau_\theta^{-1}$ and
$\Cov(\theta_{rt},\theta_{r,t+1})=\rho_\theta/\tau_\theta$ for some serial
correlation $\rho_\theta\in[0,1)$. In each interval, all providers observe the
public signal
\[
g_t=\theta_{rt}+\eta_{0t},
\]
and provider $k$ additionally observes the private signal
\[
y_{kt}=\theta_{rt}+\eta_{kt},
\]
where $\eta_{0t}\sim\mathcal N(0,\tau_g^{-1})$,
$\eta_{kt}\sim\mathcal N(0,\tau_y^{-1})$, and all noise terms are mutually
independent and independent of $(\theta_{r\tau})$. Provider $k$'s
contemporaneous posterior mean is
\begin{equation}
m_{kt}:=\E[\theta_{rt}\mid g_t,y_{kt}].
\label{eq:short-posterior}
\end{equation}

\begin{lemma}[Gaussian posterior structure]\label{lem:app-posterior}
Let $T:=\tau_\theta+\tau_g+\tau_y$. Then
\begin{equation}
m_{kt}
=
\frac{\tau_g g_t+\tau_y y_{kt}}{T},
\qquad
\mathcal V_t:=\Var(m_{kt})
=
\frac{\tau_g+\tau_y}{\tau_\theta T},
\label{eq:app-posterior-mean}
\end{equation}
and, for $j\neq k$,
\begin{equation}
\rho_t
:=
\frac{\Cov(m_{kt},m_{jt})}{\Var(m_{kt})}
=
1-\frac{\tau_\theta\tau_y}{(\tau_g+\tau_y)T}
\in(0,1),
\qquad
\E[m_{jt}\mid m_{kt}]=\rho_t m_{kt}.
\label{eq:app-V-rho}
\end{equation}
There exist mutually independent Gaussian random variables $m_{gt}$,
$\widetilde h_t$ (both common across providers), and $u_{kt}$
(provider-specific) such that
\[
m_{kt}=m_{gt}+\widetilde h_t+u_{kt}.
\]
The conditional distributions
\[
\theta_{rt}\mid m_{krt}=m\;\sim\;\mathcal N(m,\,1/T),
\qquad
\theta_{r,t+1}\mid m_{krt}=m\;\sim\;\mathcal N(\rho_\theta m,\,\sigma_+^2),
\]
have variances that do not depend on $m$, so both are weakly first-order
stochastically increasing in $m$ (strictly so for $\theta_{r,t+1}$ when
$\rho_\theta>0$). In particular,
Assumption~\ref{ass:app-monotone-posterior} holds.
\end{lemma}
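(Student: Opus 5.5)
The plan is to prove everything by direct Gaussian computation, using the linear structure of the signals. First I would write the Bayesian update. Given the prior $\theta_{rt}\sim\mathcal N(0,\tau_\theta^{-1})$ and two conditionally independent signals with precisions $\tau_g$ and $\tau_y$, the standard precision-weighted formula gives $m_{kt}=(\tau_g g_t+\tau_y y_{kt})/T$, with posterior variance $1/T$. Substituting the signals gives
\[
m_{kt}=\frac{(\tau_g+\tau_y)\theta_{rt}+\tau_g\eta_{0t}+\tau_y\eta_{kt}}{T}.
\]
From this I compute
\[
\Var(m_{kt})=\frac{(\tau_g+\tau_y)^2/\tau_\theta+\tau_g+\tau_y}{T^2}=\frac{\tau_g+\tau_y}{\tau_\theta T}.
\]
For $j\neq k$ only $\theta_{rt}$ and $\eta_{0t}$ are shared, so
\[
\Cov(m_{kt},m_{jt})=\frac{(\tau_g+\tau_y)^2/\tau_\theta+\tau_g}{T^2}.
\]
Dividing by the variance and simplifying, using $(\tau_g+\tau_y)T-(\tau_g+\tau_y)^2-\tau_g\tau_\theta=\tau_y\tau_\theta$, gives $\rho_t=1-\tau_\theta\tau_y/((\tau_g+\tau_y)T)$. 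This lies in $(0,1)$ because all precisions are positive and $\tau_\theta\tau_y<(\tau_g+\tau_y)T$. Since $(m_{kt},m_{jt})$ is jointly Gaussian, mean zero, with equal variances, the linear projection $\E[m_{jt}\mid m_{kt}]=\rho_t m_{kt}$ is exact.

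For the decomposition, I split $m_{kt}$ into a common part $C_t:=((\tau_g+\tau_y)\theta_{rt}+\tau_g\eta_{0t})/T$ and a provider part $u_{kt}:=\tau_y\eta_{kt}/T$. These are independent because $\eta_{kt}$ is independent of $(\theta_{rt},\eta_{0t})$, and the $u_{kt}$ are mutually independent across $k$. I then set $m_{gt}:=\E[C_t\mid g_t]$, the component carried by the public signal, and $\widetilde h_t:=C_t-m_{gt}$. Projection orthogonality under joint Gaussianity makes $\widetilde h_t$ independent of $g_t$, hence of $m_{gt}$. Because $\widetilde h_t$ is a function of $(\theta_{rt},\eta_{0t})$ only, it is also independent of every $u_{kt}$. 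Both $m_{gt}$ and $\widetilde h_t$ are common across providers, which gives the claimed representation.

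For the conditional laws, I use that $\theta_{rt}-m_{kt}$ is the Gaussian projection error of $\theta_{rt}$ on $(g_t,y_{kt})$. It is therefore independent of $(g_t,y_{kt})$, and hence of $m_{kt}$, with variance $1/T$, which gives $\theta_{rt}\mid m_{kt}=m\sim\mathcal N(m,1/T)$. For next period I write $\theta_{r,t+1}=\rho_\theta\theta_{rt}+\varepsilon_{t+1}$, where $\varepsilon_{t+1}$ is the projection residual.

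The step I expect to be the main obstacle is justifying that $\varepsilon_{t+1}$ is independent of $(\theta_{rt},\eta_{0t},\eta_{kt})$. Independence from the current noises follows from the stated independence of all noise terms from the $\theta$ process. Independence from $\theta_{rt}$ follows from orthogonality of the projection residual under joint Gaussianity, and no Markov assumption is needed because we condition only on $m_{kt}$, a function of date-$t$ variables.

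Given that independence, $\theta_{r,t+1}\mid m_{kt}=m\sim\mathcal N(\rho_\theta m,\sigma_+^2)$ with
\[
\sigma_+^2=\frac{\rho_\theta^2}{T}+\frac{1-\rho_\theta^2}{\tau_\theta},
\]
which does not depend on $m$. Finally, a normal family with constant variance and mean nondecreasing in $m$ is a location family and is ordered by first-order stochastic dominance. The dominance is strict when the mean is strictly increasing, i.e.\ when $\rho_\theta>0$. This verifies Assumption~\ref{ass:app-monotone-posterior}.
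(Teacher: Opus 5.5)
Your proposal is correct and follows essentially the paper's route: the same precision-weighted posterior and the same decomposition (by the tower property your $\E[C_t\mid g_t]=\E[m_{kt}\mid g_t]=\E[\theta_{rt}\mid g_t]$, so your $m_{gt}$ and $\widetilde h_t$ coincide with the paper's $m_{gt}$ and $\tfrac{\tau_y}{T}(\theta_{rt}-m_{gt})$), plus Gaussian projection for the conditional laws. The differences are cosmetic. You compute $\Var(m_{kt})$ and $\Cov(m_{kt},m_{jt})$ by direct substitution rather than via $\Var(\theta_{rt})-1/T$ and the decomposition. You also obtain the next-period law through $\theta_{r,t+1}=\rho_\theta\theta_{rt}+\varepsilon_{t+1}$ rather than projecting directly on $m_{kt}$; your $\sigma_+^2=\rho_\theta^2/T+(1-\rho_\theta^2)/\tau_\theta$ equals the paper's $1/\tau_\theta-\rho_\theta^2\mathcal V_t$.
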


\begin{proof}
Gaussian updating gives the posterior mean in
\eqref{eq:app-posterior-mean}. Since the posterior variance is $1/T$,
$\Var(m_{kt})=\Var(\theta_{rt})-1/T=(\tau_g+\tau_y)/(\tau_\theta T)$.

Let
\[
m_{gt}:=\E[\theta_{rt}\mid g_t],
\qquad
r_t:=\theta_{rt}-m_{gt},
\qquad
\widetilde h_t:=\frac{\tau_y}{T}r_t,
\qquad
u_{kt}:=\frac{\tau_y}{T}\eta_{kt}.
\]
Then $m_{kt}=m_{gt}+\widetilde h_t+u_{kt}$. The residual $r_t$ is orthogonal to
$g_t$, and hence independent of $m_{gt}$ by Gaussianity; the
provider-specific term $u_{kt}$ depends only on $\eta_{kt}$, which is
independent of all other primitives. This proves the independent decomposition.
For $j\neq k$, $\Cov(m_{kt},m_{jt})=\Var(m_{gt}+\widetilde h_t)$, which gives
\eqref{eq:app-V-rho}. Joint normality and zero means imply
$\E[m_{jt}\mid m_{kt}]=\rho_t m_{kt}$.

For the conditional distributions, write $\theta_{rt}=m_{kt}+r_{kt}$ where the
posterior residual $r_{kt}\sim\mathcal N(0,1/T)$ is independent of $m_{kt}$;
this gives the first conditional distribution as a location family in $m$. For
$\theta_{r,t+1}$, joint normality of $(\theta_{r,t+1},m_{kt})$ gives
\[
\Cov(\theta_{r,t+1},m_{kt})
=
\frac{\tau_g+\tau_y}{T}\Cov(\theta_{r,t+1},\theta_{rt})
=
\frac{(\tau_g+\tau_y)\rho_\theta}{\tau_\theta T}
=
\rho_\theta\mathcal V_t.
\]
Linear projection then yields $\E[\theta_{r,t+1}\mid m_{kt}=m]=\rho_\theta m$
with conditional variance
$\sigma_+^2=\Var(\theta_{r,t+1})-\rho_\theta^2\mathcal V_t$ independent of $m$.
Both conditional distributions are location families in $m$ with $m$-invariant
spread, hence weakly first-order stochastically increasing in $m$.
\end{proof}

Now consider one region and one interval. Providers are indexed by $k=1,\dots,K$,
and provider $k$ controls share $x_k\ge0$, with $\sum_k x_k=1$.
All batteries using provider $k$ receive the same posterior $m_{kt}$ from
Lemma~\ref{lem:app-posterior} and choose the same scalar local action
$a_{kt}$. Let
\[
A_t:=\sum_{k=1}^K x_k a_{kt}
\]
denote the share-weighted aggregate action. We use the posterior variance
$\mathcal V_t$ and the cross-provider posterior correlation $\rho_t$ from
Lemma~\ref{lem:app-posterior} throughout.

The reduced payoff of a provider-\(k\) client from choosing action \(a\) is
\begin{equation}
\Pi_{kt}(a\mid m_{kt})
=
(a-c_t)
\Bigl(
\psi_t+\xi_t m_{kt}-\beta_t a
+d_t\,\E[A_t\mid m_{kt}]
\Bigr),
\label{eq:short-payoff}
\end{equation}
where \(\beta_t>0\), \(0<d_t<\beta_t\), and \(\xi_t>0\). The coefficient
\(\xi_t\) summarises the local scarcity loading of the reduced bid problem,
including the dynamic opportunity-cost channel. The coefficient \(d_t\) captures
strategic complementarity in local bid construction.

\begin{definition}[Affine delegated equilibrium]\label{def:app-affine}
An affine delegated equilibrium is a vector
$(\alpha_{kt},\gamma_{kt})_{k=1}^{K}$ such that every
client of provider $k$ chooses
\begin{equation}
a_{kt}=\alpha_{kt}+\gamma_{kt}m_{kt}
\label{eq:app-affine-rule}
\end{equation}
and this action is optimal under \eqref{eq:short-payoff}.
\end{definition}

\begin{proposition}[Affine provider-posterior benchmark]
\label{prop:short-model}
There exists a unique affine delegated equilibrium. The intercepts coincide
across providers, $\alpha_{kt}=\alpha_t$ for all $k$, with
\begin{equation}
\alpha_t=\frac{\psi_t+\beta_t c_t}{2\beta_t-d_t}.
\label{eq:app-affine-intercept}
\end{equation}
Let
\[
\mathcal S_t:=\sum_{k=1}^K x_k\gamma_{kt},
\qquad
\Lambda_t(x)
:=
\sum_{k=1}^K
\frac{x_k}{2\beta_t-d_t(1-\rho_t)x_k}.
\]
Then
\begin{equation}
\mathcal S_t
=
\frac{\xi_t\Lambda_t(x)}{1-d_t\rho_t\Lambda_t(x)},
\qquad
\gamma_{kt}
=
\frac{\xi_t}{1-d_t\rho_t\Lambda_t(x)}
\frac{1}{2\beta_t-d_t(1-\rho_t)x_k}.
\label{eq:app-affine-solution}
\end{equation}
\end{proposition}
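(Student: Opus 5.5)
The plan is a guess-and-verify argument. I posit the affine rule \eqref{eq:app-affine-rule} for every provider, derive each client's best response, and match coefficients.

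Since the payoff \eqref{eq:short-payoff} is strictly concave in $a$ (its coefficient on $a^2$ is $-\beta_t<0$), the first-order condition characterises the optimum. One subtlety must be handled here. The client of provider $k$ takes $\E[A_t\mid m_{kt}]$ as given, but $A_t$ contains its own provider's action $x_k a_{kt}$. I will treat the own-provider component as internalised, so that $\E[A_t\mid m_{kt}]=x_k a+\sum_{j\ne k}x_j\E[a_{jt}\mid m_{kt}]$. This is what produces the $-d_t(1-\rho_t)x_k$ term in the denominator; a pure price-taking reading would not give $\Lambda_t$ as stated.

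To evaluate the cross-provider expectations I use Lemma~\ref{lem:app-posterior}. It gives $\E[m_{jt}\mid m_{kt}]=\rho_t m_{kt}$, and hence $\E[a_{jt}\mid m_{kt}]=\alpha_{jt}+\gamma_{jt}\rho_t m_{kt}$. With this, the first-order condition becomes linear in $a$ and $m_{kt}$, and matching the constant and the $m_{kt}$ coefficients gives two linear systems, one for the intercepts and one for the slopes.

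For the intercepts, the candidate is common across providers, $\alpha_{kt}=\alpha_t$. Substituting it and using $\sum_k x_k=1$ gives $(2\beta_t-d_t)\alpha_t=\psi_t+\beta_t c_t$, which is \eqref{eq:app-affine-intercept}. For uniqueness I write the intercept system as $(2\beta_t-d_tx_k)\alpha_k-d_t\sum_{j\ne k}x_j\alpha_j=\text{const}$. Its matrix is strictly diagonally dominant after weighting by $x$, because $d_t<\beta_t$ and $\sum_j x_j=1$, so the system is nonsingular.

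For the slopes, the first-order condition gives $\gamma_k\bigl(2\beta_t-d_tx_k\bigr)=\xi_t+d_t\rho_t\sum_{j\ne k}x_j\gamma_j$. I rewrite the right-hand side as $\xi_t+d_t\rho_t\mathcal S_t-d_t\rho_t x_k\gamma_k$, which gives $\gamma_k\bigl(2\beta_t-d_t(1-\rho_t)x_k\bigr)=\xi_t+d_t\rho_t\mathcal S_t$. This yields the stated form of $\gamma_{kt}$, provided $\mathcal S_t$ is known. Multiplying by $x_k$ and summing over $k$ gives the scalar fixed point $\mathcal S_t=(\xi_t+d_t\rho_t\mathcal S_t)\Lambda_t(x)$, which solves to \eqref{eq:app-affine-solution}. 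Substituting back then gives $\gamma_{kt}$.

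I expect the main obstacle to be existence and uniqueness, which requires $1-d_t\rho_t\Lambda_t(x)>0$. Since $\rho_t\in(0,1)$ and $x_k\le1$, each denominator satisfies $2\beta_t-d_t(1-\rho_t)x_k\ge 2\beta_t-d_t>\beta_t>0$. Hence $\Lambda_t(x)<\sum_k x_k/\beta_t=1/\beta_t$, so $d_t\rho_t\Lambda_t<d_t/\beta_t<1$. The scalar fixed point therefore has a unique solution, and that solution determines every $\gamma_{kt}$ uniquely.
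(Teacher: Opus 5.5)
Your computation follows the paper's proof: posit the affine rule, take the first-order condition, match the constant and $m_{kt}$ coefficients, reduce the slopes to the scalar fixed point $\mathcal S_t=(\xi_t+d_t\rho_t\mathcal S_t)\Lambda_t(x)$, and check that $1-d_t\rho_t\Lambda_t(x)>0$. Your intercept and slope equations are the paper's $2\beta_t\alpha_{kt}=\psi_t+\beta_tc_t+d_t\bar\alpha_t$ and $2\beta_t\gamma_{kt}=\xi_t+d_t[\rho_t\mathcal S_t+(1-\rho_t)x_k\gamma_{kt}]$, rearranged. Your bound $2\beta_t-d_t(1-\rho_t)x_k\ge 2\beta_t-d_t>\beta_t$ is cruder than the paper's but valid. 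For intercept uniqueness you do not need diagonal dominance "after weighting." In the paper's form the right-hand side does not depend on $k$, so every solution has equal intercepts. In any case your matrix is row-dominant as it stands, since $2\beta_t-d_tx_k-d_t(1-x_k)=2\beta_t-d_t>0$.

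What needs fixing is your explanation of where the $-d_t(1-\rho_t)x_k$ term comes from, which is backwards. The client does not internalise its provider's weight in $A_t$. The first-order condition treats $\E[A_t\mid m_{kt}]$ as given. That is why the coefficient on $a^2$ is $-\beta_t$, as you note, and why the condition reads $2\beta_t a=\psi_t+\beta_tc_t+\xi_tm_{kt}+d_t\E[A_t\mid m_{kt}]$.

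The own-provider term enters only afterwards, through equilibrium consistency. Provider $k$'s action is $m_{kt}$-measurable, so $\E[a_{kt}\mid m_{kt}]=\alpha_{kt}+\gamma_{kt}m_{kt}$. By contrast, $\E[a_{jt}\mid m_{kt}]=\alpha_{jt}+\rho_t\gamma_{jt}m_{kt}$ for $j\neq k$.

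Suppose instead you substituted $x_ka$ into the payoff before differentiating, as ``internalised'' suggests. The condition would become $2(\beta_t-d_tx_k)a=\psi_t+(\beta_t-d_tx_k)c_t+\xi_tm_{kt}+d_t\sum_{j\neq k}x_j\E[a_{jt}\mid m_{kt}]$. That gives slope denominators $2\beta_t-d_t(2-\rho_t)x_k$ and intercepts that generically vary with $x_k$, contradicting the statement.

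So the price-taking reading is exactly what delivers $\Lambda_t(x)$. Your written equations already use it; only the surrounding description, including the remark that a price-taking reading would fail, needs correcting.
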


\begin{proof}[Proof of Proposition~\ref{prop:short-model}]
Fix \(t\) and suppress time subscripts. Under an affine conjecture
\[
a_k=\alpha_k+\gamma_km_k,
\qquad
A=\sum_{j=1}^Kx_j(\alpha_j+\gamma_jm_j).
\]
Let \(\bar\alpha:=\sum_jx_j\alpha_j\) and
\(S:=\sum_jx_j\gamma_j\). From Lemma~\ref{lem:app-posterior},
\[
\E[m_j\mid m_k]=\rho m_k
\qquad (j\neq k).
\]
Therefore
\begin{equation}
\E[A\mid m_k]
=
\bar\alpha+\bigl[\rho S+(1-\rho)x_k\gamma_k\bigr]m_k.
\label{eq:app-conditional-A}
\end{equation}

For a provider-\(k\) client, the objective in \eqref{eq:short-payoff} is
strictly concave in \(a\). The first-order condition is
\[
2\beta a
=
\psi+\beta c+\xi m_k+d\,\E[A\mid m_k].
\]
Substituting \eqref{eq:app-conditional-A} and matching constant and slope terms
gives
\begin{align}
2\beta\alpha_k&=\psi+\beta c+d\bar\alpha,
\label{eq:app-alpha-system}\\
2\beta\gamma_k&=\xi+d\bigl[\rho S+(1-\rho)x_k\gamma_k\bigr].
\label{eq:app-gamma-system}
\end{align}
The right-hand side of \eqref{eq:app-alpha-system} is common across providers,
so all intercepts coincide. With \(\bar\alpha=\alpha\),
\[
(2\beta-d)\alpha=\psi+\beta c,
\]
which proves \eqref{eq:app-affine-intercept}.

For the slopes, define
\[
D_k:=2\beta-d(1-\rho)x_k,
\qquad
\Lambda(x):=\sum_{k=1}^K\frac{x_k}{D_k}.
\]
Rearranging \eqref{eq:app-gamma-system} gives
\[
D_k\gamma_k=\xi+d\rho S,
\qquad
\gamma_k=\frac{\xi+d\rho S}{D_k}.
\]
Multiplying by \(x_k\) and summing over \(k\) yields
\[
S=(\xi+d\rho S)\Lambda(x),
\]
so
\[
S=\frac{\xi\Lambda(x)}{1-d\rho\Lambda(x)}.
\]
Substitution back into the expression for \(\gamma_k\) gives
\eqref{eq:app-affine-solution}.

It remains only to verify that the denominator is positive. Since
\(0<d<\beta\) and \(x_k\le1\),
\[
D_k\ge 2\beta-d(1-\rho)>0,
\qquad
\Lambda(x)\le \frac{1}{2\beta-d(1-\rho)}.
\]
Hence
\[
d\rho\Lambda(x)
\le
\frac{d\rho}{2\beta-d(1-\rho)}
<1,
\]
because \(2\beta-d(1-\rho)-d\rho=2\beta-d>0\). The affine system is therefore
well defined and unique.
\end{proof}

Define
\[
B_{2t}:=\sum_{k=1}^K x_k\gamma_{kt}^2,
\qquad
Q_t:=\sum_{k=1}^K x_k^2\gamma_{kt}^2,
\qquad
\Delta_t^2:=\sum_{k=1}^K x_k(a_{kt}-A_t)^2.
\]

In the affine delegated equilibrium, the aggregate action moments are
\begin{align}
\Cov(A_t,\theta_t)&=\mathcal V_t \mathcal S_t,\label{eq:app-cov}\\
\Var(A_t)&=\rho_t\mathcal V_t \mathcal S_t^2+(1-\rho_t)\mathcal V_t Q_t,\label{eq:app-var}\\
\E[\Delta_t^2]&=\mathcal V_t\bigl[B_{2t}-\rho_t \mathcal S_t^2-(1-\rho_t)Q_t\bigr].
\label{eq:app-disp}
\end{align}
These identities follow by substituting the affine rule into \(A_t\) and using
the posterior decomposition in Lemma~\ref{lem:app-posterior}.

\begin{proposition}[Concentration and responsiveness]\label{prop:app-concentration}
Fix \((\tau_g,\tau_y)\). Under any majorization-increasing perturbation of the
provider-share vector \(x\):
\begin{enumerate}[label=(\roman*)]
\item the unconditional mean action \(\E[A_t]=\alpha_t\) is invariant;
\item the common-response object \(\mathcal S_t\), and therefore \(\Cov(A_t,\theta_t)\),
is weakly increasing;
\item the common component \(\rho_t\mathcal V_t\mathcal S_t^2\) and total variance
\(\Var(A_t)\) are weakly increasing.
\end{enumerate}
Along the dominant-provider family
\[
x(x_D)=
\left(x_D,\frac{1-x_D}{K-1},\dots,\frac{1-x_D}{K-1}\right),
\qquad
x_D\in[1/K,1],
\]
the increases in parts (ii) and (iii) are strict for interior \(x_D>1/K\).
\end{proposition}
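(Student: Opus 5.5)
The plan is to reduce every part of Proposition~\ref{prop:app-concentration} to the Schur-convexity of a few separable functions of the share vector $x$, using the closed form in Proposition~\ref{prop:short-model}. Since $(\tau_g,\tau_y)$ and $\tau_\theta$ are fixed, Lemma~\ref{lem:app-posterior} shows that $\mathcal V_t$ and $\rho_t$ do not depend on $x$. The primitives $(\beta_t,d_t,\xi_t,\psi_t,c_t)$ do not depend on $x$ either. Part~(i) is then immediate, because \eqref{eq:app-affine-intercept} gives $\alpha_t=(\psi_t+\beta_tc_t)/(2\beta_t-d_t)$, which is free of $x$, and the posterior means have mean zero, so $\E[A_t]=\alpha_t$.

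For part~(ii), write $\Lambda_t(x)=\sum_k f(x_k)$ with
\[
f(s):=\frac{s}{2\beta_t-d_t(1-\rho_t)s},\qquad s\in[0,1].
\]
On this interval the denominator is bounded below by $2\beta_t-d_t(1-\rho_t)>0$, as shown in the proof of Proposition~\ref{prop:short-model}. A direct computation gives $f'(s)=2\beta_t/D(s)^2>0$ and $f''(s)=4\beta_td_t(1-\rho_t)/D(s)^3>0$, so $f$ is strictly increasing and strictly convex. A symmetric separable sum of a convex function is Schur-convex (Hardy--Littlewood--P\'olya/Karamata), so $\Lambda_t$ weakly increases under any majorization-increasing perturbation. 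Moreover,
\[
\mathcal S_t=\frac{\xi_t\Lambda}{1-d_t\rho_t\Lambda}
\]
is strictly increasing in $\Lambda$ on the admissible range $d_t\rho_t\Lambda<1$, which was established in the proof of Proposition~\ref{prop:short-model}. Hence $\mathcal S_t$ weakly increases, and so does $\Cov(A_t,\theta_t)=\mathcal V_t\mathcal S_t$ by \eqref{eq:app-cov}.

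For part~(iii), $\mathcal S_t>0$ implies that $\rho_t\mathcal V_t\mathcal S_t^2$ increases together with $\mathcal S_t$. In \eqref{eq:app-var} the other term is $(1-\rho_t)\mathcal V_tQ_t$. Here I would use the slope formula from the proof of Proposition~\ref{prop:short-model}, $\gamma_{kt}=(\xi_t+d_t\rho_t\mathcal S_t)/D_k$, which gives
\[
Q_t=(\xi_t+d_t\rho_t\mathcal S_t)^2\sum_k f(x_k)^2 .
\]
The prefactor is positive and weakly increasing by part~(ii). The function $f^2$ is convex, since it is the square of a nonnegative increasing convex function. Therefore $\sum_kf(x_k)^2$ is Schur-convex, and $Q_t$, being a product of two nonnegative weakly increasing factors, weakly increases. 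It follows that $\Var(A_t)$ weakly increases. I expect this step to be the main obstacle: one must notice the factorisation of $Q_t$, because $\gamma_{kt}$ itself decreases for small providers and no termwise monotonicity argument is available.

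For strictness along the dominant-provider family, I would differentiate $\Lambda(x(x_D))=f(x_D)+(K-1)f\!\left(\frac{1-x_D}{K-1}\right)$ with respect to $x_D$. This gives
\[
\frac{d}{dx_D}\Lambda(x(x_D))=f'(x_D)-f'\!\left(\frac{1-x_D}{K-1}\right),
\]
which is strictly positive for $x_D>1/K$ because $x_D>(1-x_D)/(K-1)$ and $f'$ is strictly increasing. This makes $\mathcal S_t$, the common component, and $\Cov(A_t,\theta_t)$ strictly increasing. The same argument applied to $f^2$, whose derivative $2ff'$ is strictly increasing, makes $\sum_kf(x_k)^2$ strictly increasing, and hence $\Var(A_t)$ is strictly increasing as well.
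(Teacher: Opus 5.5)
Your proof is correct and follows the paper's argument essentially step for step. Both rely on Karamata/Schur-convexity of $\Lambda_t=\sum_k f(x_k)$ and of $\sum_k f(x_k)^2$, on monotonicity of $\Lambda\mapsto \xi_t\Lambda/(1-d_t\rho_t\Lambda)$, and on the same factorisation of $Q_t$: your prefactor $(\xi_t+d_t\rho_t\mathcal S_t)^2$ equals the paper's $\xi_t^2/(1-d_t\rho_t\Lambda_t(x))^2$. Your derivative along $x(x_D)$, giving $f'(x_D)-f'\bigl((1-x_D)/(K-1)\bigr)>0$, together with its analogue for $f^2$, makes explicit the strictness that the paper only asserts from strict convexity.
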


\begin{proof}[Proof of Proposition~\ref{prop:app-concentration}]
The intercept in \eqref{eq:app-affine-intercept} is independent of \(x\), and
\(\E[m_{kt}]=0\), so \(\E[A_t]=\alpha_t\).

Fix \((\tau_g,\tau_y)\), so \(\rho_t\) and \(\mathcal V_t\) are fixed. Let
\[
c_t^{\rho}:=d_t(1-\rho_t),
\qquad
f(x):=\frac{x}{2\beta_t-c_t^{\rho}x}.
\]
Since
\[
f''(x)
=
\frac{4\beta_t c_t^{\rho}}
{(2\beta_t-c_t^{\rho}x)^3}
>0,
\]
\(\Lambda_t(x)=\sum_k f(x_k)\) is Schur-convex by Karamata's inequality. The
map
\[
z\mapsto \frac{\xi_t z}{1-d_t\rho_t z}
\]
is increasing on the admissible domain, so \(\mathcal S_t\) is weakly increasing under
any majorization-increasing perturbation of \(x\). Equation~\eqref{eq:app-cov}
then implies that \(\Cov(A_t,\theta_t)\) is weakly increasing as well.

For the variance term, write
\[
Q_t
=
\frac{\xi_t^2}{(1-d_t\rho_t\Lambda_t(x))^2}
\sum_{k=1}^K
\frac{x_k^2}{(2\beta_t-c_t^{\rho}x_k)^2}.
\]
Let $g(x):=x/(2\beta_t-c_t^{\rho}x)$. Since $g$, $g'$, and $g''$ are positive
on the admissible domain, $(g^2)''=2(g'^2+g\,g'')>0$, so $g^2$ is convex. The
symmetric sum $\sum_k g(x_k)^2$ is therefore Schur-convex by Karamata's
inequality, and weakly increases under any majorization-increasing
perturbation of $x$. The multiplicative factor $(1-d_t\rho_t\Lambda_t(x))^{-2}$
is positive and increasing in $\Lambda_t(x)$, which itself weakly increases
under the same perturbation. A product of two non-negative quantities, each
weakly increasing along a given perturbation, is itself weakly increasing
along that perturbation, so $Q_t$ is weakly increasing.
Equation~\eqref{eq:app-var} then implies that total variance is weakly
increasing; the common component \(\rho_t\mathcal V_t\mathcal S_t^2\) rises because
\(\mathcal S_t\) rises.

Along the dominant-provider family, the convex primitives are strictly convex
and \(x_D\) moves strictly away from the equal-share vector, so the increases in
\(\mathcal S_t\), the common component, and total variance are strict for interior
\(x_D>1/K\).
\end{proof}

\subsection{Consumer Welfare and Provider Concentration}\label{app:welfare}

The concentration result in Proposition~\ref{prop:app-concentration} concerns the
aggregate action. To translate it into a statement about consumers we add a
price-formation map and a consumer-welfare standard. Throughout, $(\tau_g,\tau_y)$
are held fixed, so the posterior variance $\mathcal V_t$ and cross-provider
correlation $\rho_t$ from Lemma~\ref{lem:app-posterior} are fixed, and only the
provider-share vector $x$ varies.

\begin{assumption}[Price formation and consumer standard]\label{ass:welfare}
\leavevmode
\begin{enumerate}[label=(W\arabic*)]
\item \textup{(Affine near-margin pass-through.)} The regional clearing price is
affine and increasing in the aggregate near-margin action,
\[
P_t=\omega_0+\omega_1 A_t,\qquad \omega_1>0.
\]
\item \textup{(Exposure.)} Consumer load is weakly higher in scarcer states,
$D_t=D_0+\upsilon\,\theta_t$ with $\upsilon\ge0$ and $D_0$ large enough that
$D_t>0$ on the relevant range; only the moments
$\E[D_t]=D_0$ and $\Cov(D_t,\theta_t)=\upsilon/\tau_\theta$ enter below.
\item \textup{(Consumer loss.)} Period consumer loss is expected payment plus a
convex price-risk penalty,
\[
\mathcal L_t=P_tD_t+\tfrac{\kappa}{2}\bigl(P_t-\E P_t\bigr)^2,\qquad \kappa\ge0.
\]
\end{enumerate}
\end{assumption}

\begin{lemma}[Aggregate action moments]\label{lem:moments-welfare}
In the affine delegated equilibrium of Proposition~\ref{prop:short-model}, with
$\E[\theta_t]=0$ and $Q_t:=\sum_{k}x_k^2\gamma_{kt}^2$,
\[
\E[A_t]=\alpha_t,\qquad
\Cov(A_t,\theta_t)=\mathcal V_t \mathcal S_t,\qquad
\Var(A_t)=\mathcal V_t\bigl(\rho_t \mathcal S_t^2+(1-\rho_t)Q_t\bigr).
\]
\end{lemma}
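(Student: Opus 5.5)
The plan is to substitute the affine equilibrium rule of Proposition~\ref{prop:short-model} into $A_t$ and then use the independent decomposition of posteriors from Lemma~\ref{lem:app-posterior}. Write $a_{kt}=\alpha_t+\gamma_{kt}m_{kt}$, with the intercept common across providers. Since $\sum_k x_k=1$, this gives
\[
A_t=\alpha_t+\sum_{k}x_k\gamma_{kt}m_{kt}.
\]
Lemma~\ref{lem:app-posterior} gives $m_{kt}=m_{gt}+\widetilde h_t+u_{kt}$, where $c_t:=m_{gt}+\widetilde h_t$ is common to all providers and the $u_{kt}$ are mutually independent and independent of $c_t$. Because $\E[m_{kt}]=0$, the mean statement $\E[A_t]=\alpha_t$ follows at once. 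Collecting terms yields
\[
A_t-\alpha_t=\mathcal S_t\,c_t+\sum_k x_k\gamma_{kt}u_{kt}.
\]

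For the variances, I would identify the two pieces. We have $\Var(c_t)=\Cov(m_{kt},m_{jt})=\rho_t\mathcal V_t$ for $j\neq k$, and $\Var(u_{kt})=\mathcal V_t-\rho_t\mathcal V_t=(1-\rho_t)\mathcal V_t$, both from \eqref{eq:app-V-rho}. The common term and the idiosyncratic terms are independent, and the $u_{kt}$ are independent across $k$. Hence
\[
\Var(A_t)=\rho_t\mathcal V_t\mathcal S_t^2+(1-\rho_t)\mathcal V_t\sum_k x_k^2\gamma_{kt}^2 ,
\]
which is the stated expression with $Q_t$.

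For the covariance, I would compute $\Cov(m_{kt},\theta_t)$. Lemma~\ref{lem:app-posterior} gives $\theta_t=m_{kt}+r_{kt}$, where the posterior residual $r_{kt}$ is independent of $m_{kt}$. So $\Cov(m_{kt},\theta_t)=\Var(m_{kt})=\mathcal V_t$ for every $k$. By linearity, $\Cov(A_t,\theta_t)=\sum_k x_k\gamma_{kt}\mathcal V_t=\mathcal V_t\mathcal S_t$.

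The only delicate point is this covariance step. It uses the independence of $r_{kt}$ from $m_{kt}$, which holds for each $k$ separately. It does not require any joint orthogonality of $\theta_t$ with the collection of all posteriors, so linearity summed over $k$ suffices. I would also check that $\Var(u_{kt})=(1-\rho_t)\mathcal V_t$ matches the explicit scaling $u_{kt}=(\tau_y/T)\eta_{kt}$, that is, that $\tau_y^2/(T^2\tau_y)=\tau_y/T^2$ agrees with $(1-\rho_t)\mathcal V_t=\tau_y/T^2$. It does, so the decomposition is consistent.
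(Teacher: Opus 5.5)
Your proof is correct and follows essentially the same route as the paper: substitute the common-intercept affine rule into $A_t$, split each posterior into the common part $w_t:=m_{gt}+\widetilde h_t$ plus independent idiosyncratic terms $u_{kt}$, and read off the mean and variance from independence (you may want to rename your common component, since $c_t$ already denotes the cost parameter entering $\alpha_t$). The only difference is the covariance step: you obtain $\Cov(m_{kt},\theta_t)=\mathcal V_t$ provider by provider from posterior-residual orthogonality and sum by bilinearity, whereas the paper simply asserts $\Cov(w_t,\theta_t)=\mathcal V_t$ and $\Cov(u_{kt},\theta_t)=0$; your version supplies the justification the paper leaves implicit.
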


\begin{proof}
By Lemma~\ref{lem:app-posterior} write $m_{kt}=w_t+u_{kt}$, where the common
component $w_t:=m_{gt}+\widetilde h_t$ satisfies $\Var(w_t)=\rho_t\mathcal V_t$ and
$\Cov(w_t,\theta_t)=\mathcal V_t$, and the $u_{kt}$ are mean-zero, mutually
independent, independent of $w_t$ and of $\theta_t$, with
$\Var(u_{kt})=(1-\rho_t)\mathcal V_t$ and $\Cov(u_{kt},\theta_t)=0$. Since
$a_{kt}=\alpha_t+\gamma_{kt}m_{kt}$ and $\sum_kx_k=1$,
\[
A_t=\alpha_t+\mathcal S_t\,w_t+\sum_{k}x_k\gamma_{kt}u_{kt},
\qquad \mathcal S_t=\sum_k x_k\gamma_{kt}.
\]
Taking expectations gives $\E[A_t]=\alpha_t$. Since $\Cov(u_{kt},\theta_t)=0$,
$\Cov(A_t,\theta_t)=\mathcal S_t\Cov(w_t,\theta_t)=\mathcal V_t \mathcal S_t$. By independence of
$w_t$ and $\{u_{kt}\}$,
$\Var(A_t)=\mathcal S_t^2\Var(w_t)+\sum_k x_k^2\gamma_{kt}^2\Var(u_{kt})
=\mathcal V_t\bigl(\rho_t \mathcal S_t^2+(1-\rho_t)Q_t\bigr)$.
These coincide with \eqref{eq:app-cov}--\eqref{eq:app-var}.
\end{proof}

\begin{proposition}[Concentration raises expected consumer loss]
\label{prop:welfare-concentration}
Under the affine delegated equilibrium and Assumption~\ref{ass:welfare}, expected
consumer loss decomposes as
\begin{equation}
\E[\mathcal L_t]
=\;L_0\;+\;
\underbrace{\omega_1\upsilon\,\mathcal V_t \mathcal S_t}_{\textnormal{(I) state amplification}}
\;+\;
\underbrace{\tfrac{\kappa}{2}\,\omega_1^2\,\mathcal V_t\bigl(\rho_t \mathcal S_t^2+(1-\rho_t)Q_t\bigr)}
_{\textnormal{(II) price variance}},
\label{eq:welfare-decomp}
\end{equation}
where $L_0:=\omega_0D_0+\omega_1D_0\alpha_t$ is independent of $x$. Terms
\textnormal{(I)} and \textnormal{(II)} equal $\Cov(P_t,D_t)$ and
$\tfrac{\kappa}{2}\Var(P_t)$, respectively. The mean price
$\E[P_t]=\omega_0+\omega_1\alpha_t$ is invariant to $x$, and along any
majorization-increasing (concentration-increasing) change in $x$ both
\textnormal{(I)} and \textnormal{(II)} weakly increase, so $\E[\mathcal L_t]$
weakly increases; the increase is strict along the dominant-provider family
$x(x_D)$ for interior $x_D\in(1/K,1)$.
\end{proposition}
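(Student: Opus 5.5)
The plan is to compute $\E[\mathcal L_t]$ directly from the affine price map (W1), the demand specification (W2) and the loss in (W3), and then apply Lemma~\ref{lem:moments-welfare} and Proposition~\ref{prop:app-concentration} term by term.

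First, by (W1) and Lemma~\ref{lem:moments-welfare}, $\E[P_t]=\omega_0+\omega_1\alpha_t$. Since $\alpha_t$ in \eqref{eq:app-affine-intercept} does not depend on $x$, this establishes the mean-price invariance claim. Next, write $\E[P_tD_t]=\E[P_t]\E[D_t]+\Cov(P_t,D_t)$. The first product is $(\omega_0+\omega_1\alpha_t)D_0=L_0$. For the covariance, (W1)--(W2) give $\Cov(P_t,D_t)=\omega_1\upsilon\Cov(A_t,\theta_t)$, because the constants drop out. Lemma~\ref{lem:moments-welfare} then yields $\omega_1\upsilon\mathcal V_t\mathcal S_t$, which is term (I).

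For the risk penalty, $\E[(P_t-\E P_t)^2]=\Var(P_t)=\omega_1^2\Var(A_t)$. Substituting the variance formula from Lemma~\ref{lem:moments-welfare} gives term (II), and summing the pieces gives \eqref{eq:welfare-decomp}. One point needs checking here: (W2) is written with $\theta_t$ while the model uses $\theta_{rt}$, and $\Cov(D_t,\theta_t)=\upsilon/\tau_\theta$ is only a consistency remark. The covariance actually used is $\Cov(A_t,\theta_t)$, which is already supplied by the lemma.

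For monotonicity, fix $(\tau_g,\tau_y)$, so $\mathcal V_t$ and $\rho_t$ do not vary. Term (I) is a nonnegative constant $\omega_1\upsilon\mathcal V_t$ times $\mathcal S_t$, and $\mathcal S_t$ weakly increases by Proposition~\ref{prop:app-concentration}(ii). Term (II) is $\tfrac{\kappa}{2}\omega_1^2\Var(A_t)$ with $\kappa\ge0$, and $\Var(A_t)$ weakly increases by part (iii). Hence $\E[\mathcal L_t]=L_0+\text{(I)}+\text{(II)}$ weakly increases.

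The main obstacle is the strictness claim along $x(x_D)$. Proposition~\ref{prop:app-concentration} gives strict increases in $\mathcal S_t$ and $\Var(A_t)$ for interior $x_D$, but term (I) is strict only if $\upsilon>0$, and term (II) only if $\kappa>0$. If $\upsilon=\kappa=0$, the loss equals $L_0$ and is constant. I would therefore state strictness under the proviso that $\upsilon>0$ or $\kappa>0$, since at least one positive weight is needed to carry the strict increase from Proposition~\ref{prop:app-concentration} into the loss. The restriction $x_D<1$ matters only for the open-interval phrasing; I would also check that the strictness argument in Proposition~\ref{prop:app-concentration} genuinely covers $x_D\in(1/K,1)$. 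That argument rests on strict convexity of $f$ and $g^2$ combined with strict majorization, and it holds there.
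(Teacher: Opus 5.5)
Your proof is correct and follows the paper's argument essentially step for step. You get mean invariance from the $x$-free intercept $\alpha_t$, split $\E[P_tD_t]=L_0+\omega_1\upsilon\Cov(A_t,\theta_t)$ and $\tfrac{\kappa}{2}\Var(P_t)=\tfrac{\kappa}{2}\omega_1^2\Var(A_t)$ via Lemma~\ref{lem:moments-welfare}, and then invoke parts (ii)--(iii) of Proposition~\ref{prop:app-concentration} for monotonicity. Your proviso on strictness is also right and corrects the paper: its proof infers a strict increase from $\upsilon,\kappa\ge0$ alone, which fails when $\upsilon=\kappa=0$ (the paper itself notes just after the proof that then $\E[\mathcal L_t]=L_0$ is constant), so the strict claim needs $\upsilon>0$ or $\kappa>0$.
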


\begin{proof}
First, by Proposition~\ref{prop:short-model},
$\alpha_t=(\psi_t+\beta_tc_t)/(2\beta_t-d_t)$ is independent of $x$, and
$\E[m_{kt}]=0$; hence $\E[A_t]=\alpha_t$ and $\E[P_t]=\omega_0+\omega_1\alpha_t$
are independent of $x$.

Second, using (W1)--(W2), $\E[\theta_t]=0$, and
Lemma~\ref{lem:moments-welfare},
\[
\E[P_tD_t]
=\E\bigl[(\omega_0+\omega_1A_t)(D_0+\upsilon\theta_t)\bigr]
=\omega_0D_0+\omega_1D_0\alpha_t+\omega_1\upsilon\,\Cov(A_t,\theta_t)
=L_0+\omega_1\upsilon\,\mathcal V_t\mathcal S_t,
\]
and $\omega_1\upsilon\,\mathcal V_t\mathcal S_t=\Cov(P_t,D_t)$.

Third, by (W3) and Lemma~\ref{lem:moments-welfare},
\[
\tfrac{\kappa}{2}\E\bigl[(P_t-\E P_t)^2\bigr]
=\tfrac{\kappa}{2}\Var(P_t)
=\tfrac{\kappa}{2}\omega_1^2\Var(A_t)
=\tfrac{\kappa}{2}\omega_1^2\mathcal V_t\bigl(\rho_t\mathcal S_t^2+(1-\rho_t)Q_t\bigr).
\]
Adding the two terms gives \eqref{eq:welfare-decomp}.

Finally, by Proposition~\ref{prop:app-concentration}(ii)--(iii), under a
majorization-increasing change in $x$ (with $\mathcal V_t,\rho_t$ fixed) both
$\mathcal S_t$ and $\Var(A_t)=\mathcal V_t(\rho_t\mathcal S_t^2+(1-\rho_t)Q_t)$ are weakly increasing,
strictly so along $x(x_D)$ for interior $x_D$. Since
$\omega_1>0$ and $\upsilon,\kappa,\mathcal V_t\ge0$, terms (I) and (II) are weakly
(strictly) increasing while $L_0$ is constant; hence $\E[\mathcal L_t]$ is weakly
(strictly) increasing.
\end{proof}

Proposition \ref{prop:welfare-concentration} shows that all welfare consequences of concentration are driven by 
the two second moments in \eqref{eq:welfare-decomp}.
If $\upsilon=\kappa=0$ then $\E[\mathcal L_t]=L_0$ is invariant to $x$: because the
mean action is invariant (Proposition~\ref{prop:app-concentration}(i)), provider
concentration is welfare-neutral for a risk-neutral consumer facing inelastic
demand and linear pricing.  Channel~(I) only needs demand to co-move with scarcity ($\upsilon>0$): concentration makes the clearing
price track scarcity more tightly ($\Cov(P_t,\theta_t)=\omega_1\mathcal V_t\mathcal S_t$
rises), so consumers pay more precisely when they consume more. Channel~(II)
additionally requires a convex/risk-averse consumer standard ($\kappa>0$): concentration raises
aggregate price variance, and specifically its common component
$\rho_t\mathcal V_t\mathcal S_t^2$.

Combining \eqref{eq:app-var}--\eqref{eq:app-disp} yields the identity
\[
\Var(A_t)=\mathcal V_tB_{2t}-\E[\Delta_t^2],
\qquad
\mathcal V_tB_{2t}=\sum_{k}x_k\Var(a_{kt}),
\]
so Channel~(II) equals
$\tfrac{\kappa}{2}\omega_1^2\bigl(\sum_k x_k\Var(a_{kt})-\E[\Delta_t^2]\bigr)$. That is,
the aggregate price risk borne by consumers is the share-weighted average
provider-level bid variance net of cross-provider dispersion $\E[\Delta_t^2]$.
Concentration raises the aggregate by reallocating variance from the diversifying
cross-provider dispersion into the common component. This is the counterpart, in
the auction setting, of the cross-seller price-correlation channel in
\citet{AleksenkoMiklosThal2026}. Channel~(I) is the corresponding counterpart of their
across-state price-dispersion channel.

\section{Empirical Appendix}\label{app:empirics}

\subsection{Synchronisation Detail by Market Family}\label{app:sync-details}

This appendix reports the synchronisation and concentration results for the
market families summarised in Section~\ref{subsec:aem-synchronization}.
Table~\ref{tab:sync-market-details-appendix} gives the pre- and post-ERI
pair-type correlations underlying the SPSR-minus-SPDR estimates of
Table~\ref{tab:sync-market-effects}, for the three market families outside
energy discharge: energy charge, raise FCAS, and lower FCAS. Reading the
levels alongside the differences shows where a relative increase reflects a
rise in same-region co-movement and where it reflects a flat or falling
same-provider different-region baseline; in FCAS the different-region
correlation is already high before the reform and changes little after it.
Table~\ref{tab:concentration-sync-variance-markets-appendix} repeats the
upper-band-price concentration exercise of
Table~\ref{tab:concentration-sync-variance} across the same market families.
The association between installed provider HHI and synchronisation is
concentrated in energy discharge, while the association with bid-price
variance is present in FCAS as well.

\begin{table}[!htbp]\centering
\caption{Detailed synchronisation correlations outside energy discharge}
\label{tab:sync-market-details-appendix}
\scriptsize
\setlength{\tabcolsep}{4pt}
\begin{tabular}{llcccccc}
\toprule
Market family & Outcome & SPSR pre & SPSR post & SPDR pre & SPDR post & $\Delta$ gap & DiD estimate \\
\midrule
Energy charge & Bid-band index & 0.332 & 0.378 & 0.174 & 0.135 & 0.085 & +0.085$^{***}$ (0.016) \\
 & Upper-band price & 0.352 & 0.415 & 0.091 & 0.057 & 0.097 & +0.087 (0.070) \\
 & Lower-band price & 0.180 & 0.287 & 0.088 & 0.103 & 0.091 & +0.091$^{***}$ (0.025) \\
 & Rebid timing & 0.216 & 0.354 & 0.196 & 0.252 & 0.081 & +0.081$^{***}$ (0.017) \\
\addlinespace[2pt]
Raise FCAS & Bid-band index & 0.153 & 0.270 & 0.116 & 0.081 & 0.151 & +0.151$^{***}$ (0.019) \\
 & Upper-band price & 0.156 & 0.230 & 0.074 & 0.066 & 0.081 & +0.081$^{***}$ (0.021) \\
 & Lower-band price & 0.299 & 0.298 & 0.231 & 0.123 & 0.107 & +0.107$^{***}$ (0.022) \\
 & Rebid timing & 0.325 & 0.411 & 0.318 & 0.317 & 0.087 & +0.087$^{***}$ (0.027) \\
\addlinespace[2pt]
Lower FCAS & Bid-band index & 0.265 & 0.397 & 0.212 & 0.120 & 0.224 & +0.224$^{***}$ (0.017) \\
 & Upper-band price & 0.061 & 0.268 & 0.120 & 0.032 & 0.295 & +0.295$^{***}$ (0.021) \\
 & Lower-band price & 0.328 & 0.352 & 0.276 & 0.208 & 0.092 & +0.092$^{***}$ (0.020) \\
 & Rebid timing & 0.337 & 0.410 & 0.315 & 0.316 & 0.072 & +0.072$^{***}$ (0.027) \\
\bottomrule
\end{tabular}
\begin{minipage}{0.94\linewidth}
\vspace{0.15cm}
\footnotesize \textit{Notes:} Entries are raw Pearson correlations, not Fisher-transformed correlations. Pre/post columns are simple averages of daily pair-type means. $\Delta$ gap is the post-minus-pre change in the SPSR--SPDR correlation gap. The DiD estimate is from group-by-day regressions with pair-type and trading-date fixed effects. Standard errors, in parentheses, are clustered by trading date. $^{***}p<0.01$, $^{**}p<0.05$, $^{*}p<0.10$.
\end{minipage}
\end{table}

\begin{table}[!htbp]\centering
\caption{Provider concentration, synchronisation, and variance across market families}
\label{tab:concentration-sync-variance-markets-appendix}
\small
\setlength{\tabcolsep}{3.6pt}
\begin{tabular}{lccccr}
\toprule
 & \multicolumn{2}{c}{Synchronisation} & \multicolumn{2}{c}{Market-action variance} & \\
\cmidrule(lr){2-3}\cmidrule(lr){4-5}
Market family & Local sync & SPSR sync & Local sync & Provider HHI & $N$ \\
\midrule
Energy discharge & +0.161$^{***}$ (0.021) & +0.169$^{***}$ (0.018) & +0.183 (0.882) & +1.084$^{***}$ (0.378) & 738 \\
Energy charge & -0.017 (0.057) & -0.021 (0.055) & -0.581 (0.784) & -0.947$^{**}$ (0.443) & 258 \\
Raise FCAS & -0.026$^{*}$ (0.016) & +0.005 (0.015) & +4.049$^{***}$ (0.943) & +1.666$^{***}$ (0.320) & 951 \\
Lower FCAS & -0.065$^{***}$ (0.015) & -0.040$^{***}$ (0.015) & +1.016 (0.993) & +2.203$^{***}$ (0.384) & 957 \\
\bottomrule
\end{tabular}
\begin{minipage}{0.94\linewidth}
\vspace{0.15cm}
\footnotesize \textit{Notes:} The outcome is the upper-band price in each bid-stack family. The first two columns report coefficients from regressions of local synchronisation and SPSR synchronisation on installed provider HHI. The next two columns report coefficients from regressions of within-day market-action variance on local sync and installed provider HHI. Provider HHI is scaled so one unit equals a ten percentage point HHI increase. Market-action variance is reported in millions of squared AUD/MWh. All specifications include trading-date fixed effects, omit region fixed effects because installed HHI is fixed within region, and cluster standard errors by trading date. Columns using synchronisation outcomes are weighted by SPSR pair-days; variance columns are weighted by market intervals. $^{***}p<0.01$, $^{**}p<0.05$, $^{*}p<0.10$.
\end{minipage}
\end{table}

\subsection{State-Responsive Bidding Checks}\label{app:state-responsive-checks}

This appendix gives the specification behind the common-state result in
Section~\ref{subsec:aem-synchronization} and reports the companion own-state
check.
Let \(z_{rt}\) denote public regional battery scarcity, measured as one minus
the regional SoC share and standardised within region-by-trading period using
the pre-ERI window. We split this signal into a tight-state component
\(T_{rt}=\max\{z_{rt},0\}\) and a loose-state component
\(L_{rt}=\max\{-z_{rt},0\}\). For each evening window \(h\), we estimate
\begin{equation}
Y_{irt}
=
\alpha_i+\gamma_{rd}+\mu_{rp}
+\sum_{m\neq -1}
\left(\delta^T_{mh}T_{rt}+\delta^L_{mh}L_{rt}\right)
\ind\{m(t)=m\}
+G_h(\mathrm{SoC}_{it},m(t))
+\varepsilon_{irt},
\label{eq:common-state-event-study}
\end{equation}
where \(Y_{irt}\) is a quantity-weighted bid-price outcome, \(\alpha_i\) are
battery fixed effects, \(\gamma_{rd}\) are region-by-date fixed effects,
\(\mu_{rp}\) are region-by-period fixed effects, and
\(G_h(\mathrm{SoC}_{it},m(t))\) collects own-SoC controls and their event-study
interactions. The reported object is the post-ERI average of
\(\delta^T_{mh}-\delta^L_{mh}\): the change in the loading of bids on scarce
states relative to abundant states.

The full common-state table by window and band group, including the results
summarised in Section~\ref{subsec:aem-synchronization}, is
Table~\ref{tab:common_state_window} in the main text.

The own-SoC specification replaces the public scarcity loading in
\eqref{eq:common-state-event-study} with phase-specific slopes on the
battery's own reconstructed state of charge, again interacted with event
months. This exercise validates that the same bid-stack outcomes respond to the
dynamic state that determines the opportunity cost of stored energy.

\begin{table}[!htbp]
\centering
\caption{Post-ERI own-SoC bid conditioning across markets}
\label{tab:soc_multimarket}
\scriptsize
\setlength{\tabcolsep}{4pt}
\begin{tabular}{llccc}
\toprule
Market family & Phase & All bands & Upper bands & Lower bands \\
\midrule
Energy discharge & Lead-in & 3,288$^{***}$ (849) & 4,960$^{***}$ (829) & -953$^{***}$ (126) \\
 & Peak-early & 3,947$^{***}$ (1,022) & 6,003$^{***}$ (1,051) & -264$^{***}$ (82) \\
 & Peak-late & -2,284$^{*}$ (1,182) & -1,247 (1,481) & 231 (196) \\
 & Post-peak & -2,342$^{**}$ (1,075) & -718 (1,387) & 417 (317) \\
\addlinespace
Energy charge & Lead-in & 992$^{***}$ (272) & 1,235$^{**}$ (496) & -125$^{**}$ (57) \\
 & Peak-early & -193 (221) & -867 (860) & -164$^{***}$ (54) \\
 & Peak-late & -68 (366) & -615 (1,063) & 353$^{***}$ (96) \\
 & Post-peak & -2 (339) & 854 (904) & 344$^{***}$ (79) \\
\addlinespace
Raise FCAS & Lead-in & 6,657$^{***}$ (746) & 6,692$^{***}$ (817) & -4$^{***}$ (1) \\
 & Peak-early & 5,964$^{***}$ (832) & 7,011$^{***}$ (906) & -3$^{***}$ (1) \\
 & Peak-late & -1,015 (1,220) & 1,045 (1,293) & -3$^{***}$ (1) \\
 & Post-peak & -1,525 (1,237) & 353 (1,269) & -3$^{***}$ (1) \\
\addlinespace
Lower FCAS & Lead-in & 6,186$^{***}$ (776) & 5,571$^{***}$ (674) & -3$^{***}$ (1) \\
 & Peak-early & 5,236$^{***}$ (869) & 3,677$^{***}$ (732) & -1 (1) \\
 & Peak-late & -312 (1,092) & -164 (1,415) & -2$^{**}$ (1) \\
 & Post-peak & -1,770$^{*}$ (1,046) & -1,020 (1,141) & -2$^{***}$ (1) \\
\bottomrule
\end{tabular}
\begin{minipage}{0.95\linewidth}
\vspace{0.2em}
\scriptsize \textit{Notes:} Entries are post-ERI average changes in the slope of bid prices on own SoC, in AUD/MWh for a 0-to-1 SoC change. Specifications include battery fixed effects and region-by-period-by-date fixed effects. Standard errors are clustered by trading date, matching Regression 3.
\end{minipage}
\end{table}

Table~\ref{tab:soc_multimarket} shows strong own-state responsiveness in the
same windows in which the common-state result is economically relevant. For
energy discharge, the post-ERI slope of upper bid prices on own SoC is
\(4{,}960\) AUD/MWh in the lead-in window and \(6{,}003\) AUD/MWh in the
early-peak window. Lower-band discharge prices move in the opposite direction
over the same intervals.

\subsection{Bellman Estimation and Deviation Valuation}\label{app:bellman-estimation}

This appendix describes the dynamic value-function estimates used to price
the opportunity cost of state-of-charge changes in the recleared deviation
analysis. The implementation builds on the
battery and FCAS optimisation implementation of
\citet{KarimiArpanahiPourmousaviMahdavi2024},
\citet{PrakashBruceMacGill2025}, \citet{Butters2021}, and
\citet{AEMO2021FCASNEMDE}.

\paragraph{State space}
Time within a trading day is indexed by $\tau\in\{1,\ldots,T\}$ with
$\Delta=5/60$~hours. For each region $r$ we estimate a mean price profile
$\bar p_{r,h}$ on the trailing window, where $h\equiv h(\tau)$ is the
hour-of-day bin. The realised regional price decomposes as
\begin{equation}
p_{rt}=\bar p_{r,h(t)}+\varepsilon_{rt},
\qquad
\varepsilon_{rt}\in\{e_{r1},\ldots,e_{rN_p}\},
\label{eq:bellman-price-decomp}
\end{equation}
with the residual $\varepsilon_{rt}$ discretised into $N_p$ price bins by
empirical quantiles of the same window; we write $k_t$ for the active
bin index. The exogenous transition kernel for the discretised residual
price is
\begin{equation}
k_{t+1}\sim
\widehat P_{r,b(t),z_t,\sigma_t}(k_t,\cdot),
\label{eq:bellman-kernel}
\end{equation}
where $b(t)\in\{\text{morning},\text{peak1},\text{peak2}\}$ is the
time-of-day block, $z_t$ is a three-level regional residual-demand state,
and $\sigma_t$ is a four-level aggregate battery-SoC signal
(low/mid/high SoC tertile, plus an unconditional fallback). Transitions
are estimated non-parametrically from the trailing window separately for
each $(r,b,z,\sigma)$ group, with a fallback chain that successively
pools $\sigma$ and then $z$ when group support is thin; the fallback
status of each visited group is recorded in the transition diagnostics
below.

The endogenous state is the own state of charge $S_{i,t}\in[0,\bar S_i]$.
We write dispatch in signed form $q_{it}$, with $q_{it}>0$ for discharge
and $q_{it}<0$ for charge, and decompose $q_{it}=q_{it}^{+}-q_{it}^{-}$
where $q_{it}^{+}:=\max\{q_{it},0\}$ and $q_{it}^{-}:=\max\{-q_{it},0\}$.
Using the same calibrated asset parameters as the SoC reconstruction in
Appendix~\ref{app:soc-model}, the stock-flow recursion is
\begin{equation}
S_{i,t+1}
=
\rho_i\,S_{it}-\kappa_{it}\,\Delta
-\frac{q_{it}^{+}\,\Delta}{\eta_i^{\mathrm{dis}}}
+q_{it}^{-}\,\eta_i^{\mathrm{ch}}\,\Delta,
\qquad
0\le S_{i,t+1}\le \bar S_i.
\label{eq:bellman-soc}
\end{equation}
The parasitic drain $\kappa_{it}$ is split into a constant aux-load
component $\alpha_i$ and an FCAS-attributable component $\beta_i F_{it}$
that scales with expected enabled FCAS MW at the operating point, using
the typical-utilisation decomposition documented in
\citet{KarimiArpanahiPourmousaviMahdavi2024}.

\paragraph{Flow payoff and FCAS headroom values}
Conditional on the operating point $q_{it}$ and the next-period state
$S_{i,t+1}$, the per-interval flow payoff under the
energy-and-FCAS market scope is
\begin{equation}
\pi_i(S_{it},S_{i,t+1},k_t,\tau)
=
q_{it}\,p_{rt}\,\Delta
-c_i\,|q_{it}|\,\Delta
+\sum_{m}a^{m}_{ibzk}\,H^{m}_i(q_{it},S_{i,t+1})\,\Delta,
\label{eq:bellman-flow}
\end{equation}
where $c_i$ is a throughput (wear) cost in AUD/MWh, $H^{m}_i(q,S')$ is
the physical raise- or lower-headroom that the battery can deliver for
FCAS service $m$ at $(q,S')$ subject to the trapezium and energy-duration
constraints of Appendix~\ref{app:fcas-trapezium}, and $a^{m}_{ibzk}$ is
the estimated FCAS revenue per MWh of physical headroom for service $m$
in the $(b,z,k)$ group. The $a^{m}_{ibzk}$ coefficients are estimated on
the trailing window from the NEMDE/FCAS clearing logic of
\citet{AEMO2021FCASNEMDE} and
\citet{KarimiArpanahiPourmousaviMahdavi2024}, using a hierarchical
fallback chain that backs off from the full
$\{\text{battery}\times\text{block}\times z\times k\}$ group to coarser
pools when within-group observations are thin. Optional shrinkage towards
the parent group and upper-tail winsorisation are available but are
disabled in the production configuration reported here.

\paragraph{Bellman recursion and shadow value}
For each battery $i$, signal state $\sigma$, residual-demand state $z$,
and trading day, the value function $V_{i\tau}^{\sigma z}$ is obtained
by backward induction on
\begin{equation}
V_{i\tau}^{\sigma z}(S,k)
=
\max_{S'\in \Gamma_i(S)}
\left\{
\pi_i(S,S',k,\tau)
+
\delta
\sum_{\ell}
\widehat P_{r,b(\tau),\sigma,z}(k,\ell)\,
V_{i,\tau+1}^{\sigma z}(S',\ell)
\right\},
\label{eq:bellman-recursion}
\end{equation}
with $\delta=1$ within the trading day (intraday discounting is
negligible at 5-minute resolution) and a terminal value
$V_{i,T+1}^{\sigma z}(\cdot,\cdot)$ taken from an overnight-value profile
estimated on the trailing window. The feasibility correspondence
$\Gamma_i(S)$ enforces the SoC bounds and the asset's discharge and
charge power limits.

The shadow value of stored energy is the derivative of $V$ with respect
to own SoC,
\begin{equation}
\nu_{it}
:=
\frac{\partial V_{i\tau}^{\sigma_t z_t}(S,k_t)}{\partial S}
\bigg|_{S=S_{it}},
\qquad
\widehat m_{it}
:=
\frac{\max\{\nu_{it},0\}}{\eta_i^{\mathrm{dis}}}+c_i.
\label{eq:bellman-shadow-value}
\end{equation}
The non-negativity floor reflects the fact that storage value is never
negative for a battery with the option to wait; the discharge-efficiency
adjustment converts the SoC shadow value into a per-MWh marginal cost
of discharge at the meter; and adding $c_i$ folds in the throughput
cost. The quantity $\widehat m_{it}$ is the per-MWh dynamic
opportunity cost used in the deviation reclearing of
Section~\ref{subsec:bellman-value} and in the conduct test of
Section~\ref{subsec:conduct-test}.

\paragraph{Rolling estimation}
The Bellman is re-estimated each week on a 90-day trailing window. For
anchor date $a$, the estimation window is $[a-90,\,a-1]$ and the target
week of evaluated intervals is $[a,\,a+6]$. Within an anchor, we
sequentially construct the regional mean price profile, residual-price
bin edges, transition kernel with group fallbacks, FCAS headroom-value
coefficients, and an overnight terminal value; we then solve the
backward induction \eqref{eq:bellman-recursion} and evaluate the value
function only at the target-week intervals required by the deviation
panel. The production configuration uses $N_p=11$ residual-price bins,
$N_S=41$ own-SoC grid points, and $201$ candidate next-period SoC
controls per interval; sensitivity tests at $N_S=21$ and $501$ controls
shift $\widehat m_{it}$ by less than $11$ AUD/MWh, well below its
typical magnitude. The estimated value functions currently cover
\mbox{2025-04-01 13:05} through \mbox{2026-02-02 21:55}, from $44$
successful weekly anchors and three failed anchors at
\mbox{2026-02-03}, \mbox{2026-02-10}, and \mbox{2026-02-17}.
Table~\ref{tab:bellman-coverage} reports the coverage by month and
asset.

\begin{table}[h]
\centering
\caption{Bellman rolling-pipeline coverage}
\label{tab:bellman-coverage}
\begin{tabular}{@{} l r @{}}
\toprule
Item & Value \\
\midrule
Weekly anchors attempted & 47 \\
Weekly anchors successful & 44 \\
Weekly anchors failed & 3 \\
Date range of evaluated intervals & 2025-04-01 13:05 -- 2026-02-02 21:55 \\
Regions covered & 4 \\
Batteries (DUIDs) covered & 51 \\
Battery-intervals evaluated & 1,379,979 \\
Battery-intervals pre-ERI / post-ERI & 335,552 / 1,044,427 \\
Share of intervals with non-missing $\widehat m_{it}$ & 100.0\% \\
Share of intervals with $\lambda_{it}=0$ (option-value floor) & 13.8\% \\
\bottomrule
\end{tabular}

\vspace{4pt}
{\footnotesize \textit{Notes:} Coverage of the rolling weekly Bellman estimation. Failed anchors: 2026-02-03, 2026-02-10, 2026-02-17. $\lambda_{it}=0$ indicates intervals at which the implied shadow value is non-positive and the option-value floor binds; $\widehat m_{it}$ is the corresponding throughput cost $c_i$.}
\end{table}

Table~\ref{tab:bellman-soc-calibration} reports the asset-level
calibration parameters used by the DP, alongside SoC RMSE and bias on
the calibration window. These parameters are inherited from the SoC
reconstruction in Appendix~\ref{app:soc-model}; the table is included
here so that the Bellman appendix is self-contained.

\footnotesize
\begin{longtable}{@{} l l r r r r r r r r @{}}
\caption{Bellman asset and SoC calibration parameters by battery}
\label{tab:bellman-soc-calibration} \\
\toprule
DUID & Region & $\bar S_i$ & $\bar P_i^{\mathrm{dis}}$ & $\bar P_i^{\mathrm{ch}}$ & $\eta^c_i$ & $\eta^d_i$ & $\kappa_i$ & RMSE & Bias \\
 & & (MWh) & (MW) & (MW) & & & (MW) & (\% $\bar S_i$) & (\% $\bar S_i$) \\
\midrule
\endfirsthead
\toprule
DUID & Region & $\bar S_i$ & $\bar P_i^{\mathrm{dis}}$ & $\bar P_i^{\mathrm{ch}}$ & $\eta^c_i$ & $\eta^d_i$ & $\kappa_i$ & RMSE & Bias \\
 & & (MWh) & (MW) & (MW) & & & (MW) & (\% $\bar S_i$) & (\% $\bar S_i$) \\
\midrule
\endhead
\midrule
\multicolumn{10}{r@{}}{\footnotesize\itshape Continued on next page} \\
\endfoot
\bottomrule
\endlastfoot
\texttt{BHB1} & NSW1 & 50.0 & 49.7 & 45.2 & 0.999 & 0.999 & 0.257 & 0.1 & -0.0 \\
\texttt{CAPBES1} & NSW1 & 199.9 & 100.0 & 91.6 & 0.999 & 0.999 & 0.318 & 0.1 & -0.0 \\
\texttt{DPNTB1} & NSW1 & 46.5 & 25.0 & 25.0 & 0.923 & 0.923 & 0.000 & 0.1 & -0.0 \\
\texttt{LDBESS1} & NSW1 & 1086.2 & 75.1 & 74.7 & 0.952 & 0.952 & 5.000 & 0.0 & -0.0 \\
\texttt{LIMBESS1} & NSW1 & 534.6 & 50.0 & 50.8 & 0.999 & 0.999 & 1.904 & 0.1 & -0.0 \\
\texttt{ORABESS1} & NSW1 & 1660.0 & 43.2 & 48.0 & 0.889 & 0.889 & 0.313 & 0.0 & 0.0 \\
\texttt{QBYNB1} & NSW1 & 15.2 & 10.0 & 9.3 & 0.994 & 0.994 & 0.047 & 0.1 & 0.0 \\
\texttt{QPSFB2} & NSW1 & 40.0 & 3.0 & 19.0 & 0.900 & 0.900 & 0.000 & 0.0 & -0.0 \\
\texttt{RESS1} & NSW1 & 117.1 & 60.0 & 60.0 & 0.917 & 0.917 & 0.165 & 0.0 & 0.0 \\
\texttt{RIVNB2} & NSW1 & 126.4 & 65.0 & 65.0 & 0.922 & 0.922 & 0.185 & 0.1 & -0.0 \\
\texttt{SMTHBES1} & NSW1 & 130.0 & 65.0 & 65.0 & 0.999 & 0.999 & 0.589 & 0.1 & -0.0 \\
\texttt{WALGRV1} & NSW1 & 78.0 & 49.2 & 46.2 & 0.961 & 0.961 & 0.095 & 0.1 & -0.0 \\
\texttt{WTAHB1} & NSW1 & 1154.0 & 638.0 & 299.7 & 0.927 & 0.927 & 1.732 & 0.1 & -0.0 \\
\texttt{BBATTERY1} & QLD1 & 95.6 & 50.0 & 50.0 & 0.985 & 0.985 & 0.257 & 0.0 & -0.0 \\
\texttt{BRNDBES1} & QLD1 & 410.0 & 205.0 & 205.0 & 0.980 & 0.980 & 0.888 & 0.1 & -0.0 \\
\texttt{CHBESS1} & QLD1 & 197.2 & 100.0 & 99.9 & 0.954 & 0.954 & 0.452 & 0.1 & -0.0 \\
\texttt{GREENB1} & QLD1 & 400.0 & 199.5 & 199.3 & 0.979 & 0.979 & 1.816 & 0.0 & -0.0 \\
\texttt{KEPBG1} & QLD1 & 4.0 & 2.0 & 0.0 & 0.949 & 0.949 & 0.000 &  &  \\
\texttt{SNB01} & QLD1 & 544.3 & 260.0 & 248.3 & 0.901 & 0.901 & 0.373 & 0.1 & 0.0 \\
\texttt{SWANBBF1} & QLD1 & 410.5 & 248.8 & 236.1 & 0.896 & 0.896 & 0.536 & 0.1 & -0.0 \\
\texttt{TARBESS1} & QLD1 & 600.0 & 300.0 & 299.2 & 0.999 & 0.999 & 2.970 & 0.1 & -0.0 \\
\texttt{ULPBESS1} & QLD1 & 298.0 & 155.0 & 153.2 & 0.999 & 0.999 & 3.976 & 0.1 & -0.0 \\
\texttt{WANDB1} & QLD1 & 150.0 & 90.1 & 74.9 & 0.977 & 0.977 & 0.976 & 0.1 & -0.0 \\
\texttt{WDBESS1} & QLD1 & 400.0 & 254.9 & 253.6 & 0.999 & 0.999 & 0.968 & 0.2 & -0.1 \\
\texttt{WDBESS2} & QLD1 & 510.0 & 255.0 & 255.0 & 0.999 & 0.999 & 2.289 & 0.1 & -0.0 \\
\texttt{ADPBA1} & SA1 & 6.0 & 6.1 & 6.0 & 0.999 & 0.999 & 0.002 & 0.5 & -0.4 \\
\texttt{BLYTHB1} & SA1 & 400.0 & 199.9 & 199.1 & 0.999 & 0.999 & 2.293 & 0.1 & -0.0 \\
\texttt{BOWWBA1} & SA1 & 2.0 & 2.0 & 2.0 & 0.999 & 0.999 & 0.000 & 0.7 & -0.6 \\
\texttt{BUNGAMB1} & SA1 & 300.0 & 30.0 & 25.4 & 0.999 & 0.999 & 4.354 & 0.0 & -0.0 \\
\texttt{CBWWBA1} & SA1 & 2.0 & 2.0 & 2.0 & 0.999 & 0.999 & 0.000 & 0.6 & -0.5 \\
\texttt{CGBESS01} & SA1 & 120.0 & 20.1 & 20.2 & 0.999 & 0.999 & 2.297 & 0.0 & -0.0 \\
\texttt{DALNTH1} & SA1 & 11.7 & 10.0 & 6.0 & 0.927 & 0.927 & 0.149 & 0.1 & 0.0 \\
\texttt{HPR1} & SA1 & 169.5 & 80.0 & 80.6 & 0.980 & 0.980 & 0.865 & 0.0 & -0.0 \\
\texttt{HVWWBA1} & SA1 & 4.0 & 4.0 & 4.0 & 0.999 & 0.999 & 0.000 & 0.6 & -0.5 \\
\texttt{LBB1} & SA1 & 43.4 & 24.1 & 24.3 & 0.951 & 0.951 & 0.000 & 0.0 & -0.0 \\
\texttt{LGAPBS1} & SA1 & 10.0 & 9.0 & 13.0 & 0.898 & 0.898 & 0.086 & 0.0 & 0.0 \\
\texttt{MANNUMB1} & SA1 & 200.0 & 99.8 & 99.8 & 0.958 & 0.958 & 0.343 & 0.1 & -0.0 \\
\texttt{TB2B1} & SA1 & 41.0 & 40.5 & 38.6 & 0.999 & 0.999 & 0.339 & 0.1 & -0.0 \\
\texttt{TEMPB1} & SA1 & 285.0 & 111.0 & 111.0 & 0.999 & 0.999 & 2.019 & 0.1 & -0.0 \\
\texttt{TIB1} & SA1 & 250.0 & 152.7 & 184.0 & 0.993 & 0.993 & 2.338 & 0.1 & -0.0 \\
\texttt{BALB1} & VIC1 & 28.4 & 30.0 & 30.0 & 0.959 & 0.959 & 0.071 & 0.1 & -0.0 \\
\texttt{BULBES1} & VIC1 & 31.9 & 19.9 & 20.0 & 0.952 & 0.952 & 0.022 & 0.1 & -0.0 \\
\texttt{GANNB1} & VIC1 & 37.2 & 25.3 & 25.3 & 0.928 & 0.928 & 0.000 & 0.1 & -0.0 \\
\texttt{HBESS1} & VIC1 & 150.0 & 149.3 & 141.0 & 0.999 & 0.999 & 1.175 & 0.1 & -0.0 \\
\texttt{KESSB1} & VIC1 & 370.0 & 185.0 & 185.0 & 0.758 & 0.758 & 0.000 & 0.4 & -0.1 \\
\texttt{LVES1} & VIC1 & 200.0 & 100.0 & 100.0 & 0.999 & 0.999 & 1.084 & 0.1 & -0.0 \\
\texttt{MREHA1} & VIC1 & 400.0 & 200.0 & 200.0 & 0.974 & 0.974 & 0.152 & 0.1 & -0.0 \\
\texttt{MREHA2} & VIC1 & 400.0 & 199.7 & 199.8 & 0.964 & 0.964 & 0.457 & 0.1 & -0.0 \\
\texttt{MREHA3} & VIC1 & 800.0 & 200.0 & 200.0 & 0.992 & 0.992 & 1.074 & 0.0 & -0.0 \\
\texttt{PIBESS1} & VIC1 & 11.5 & 5.0 & 5.0 & 0.952 & 0.952 & 0.009 & 0.1 & -0.0 \\
\texttt{RANGEB1} & VIC1 & 400.0 & 200.0 & 200.0 & 0.943 & 0.943 & 0.000 & 0.1 & -0.0 \\
\texttt{VBB1} & VIC1 & 450.0 & 221.0 & 233.3 & 0.933 & 0.933 & 0.000 & 0.2 & 0.1 \\
\end{longtable}
\normalsize
{\footnotesize \textit{Notes:} Per-battery parameters used in the Bellman recursion. $\bar S_i$ is the energy capacity used in the DP, $\bar P_i^{\mathrm{dis}}$ and $\bar P_i^{\mathrm{ch}}$ the discharge and charge power limits, $\eta^c_i$ and $\eta^d_i$ the charge and discharge efficiencies, and $\kappa_i$ the calibrated parasitic drain in MW (which the DP decomposes into a constant aux-load component and an FCAS-attributable component scaling with expected enabled FCAS MW). RMSE and bias are evaluated on the un-anchored post-ERI calibration window and expressed as a percentage of $\bar S_i$.}

Table~\ref{tab:bellman-value-profile} reports the distribution of
$\widehat m_{it}$ across battery-intervals, summarised by time block,
own-SoC bin, and aggregate market-SoC bin. Consistent with the dynamic
arbitrage logic in Section~\ref{sec:battery-bidding}, the median rises
in the evening-peak block and falls when both own and market SoC are
high; the lowest values cluster in intervals with high own SoC and
high market SoC, where storage value is small and the
option-value floor frequently binds.

\begin{table}[h]
\centering
\caption{$\widehat m_{it}$ summary by block and SoC bin}
\label{tab:bellman-value-profile}
\small
\begin{tabular}{@{} l l r r r @{}}
\toprule
 & & \multicolumn{3}{c}{Market SoC bin} \\
\cmidrule(lr){3-5}
Time block & Own SoC & Low & Mid & High \\
\midrule
Peak 1 (09--15) & Low & 95 [48, 148] & 74 [33, 113] & 58 [19, 86] \\
Peak 1 (09--15) & Mid & 58 [19, 94] & 35 [1, 79] & 9 [1, 46] \\
Peak 1 (09--15) & High & 5 [1, 37] & 1 [1, 38] & 1 [1, 18] \\
Peak 2 (15--22) & Low & 151 [117, 212] & 163 [115, 269] & 97 [74, 140] \\
Peak 2 (15--22) & Mid & 131 [101, 174] & 132 [89, 193] & 74 [37, 120] \\
Peak 2 (15--22) & High & 95 [53, 135] & 93 [37, 149] & 38 [1, 88] \\
\bottomrule
\end{tabular}
\normalsize

\vspace{4pt}
{\footnotesize \textit{Notes:} Median $\widehat m_{it}$ in AUD/MWh with 25th and 75th percentiles in brackets, computed across all battery-intervals in the rolling evaluation panel. Own SoC is the battery's lagged stored-energy share; market SoC is the regional fleet's stored-energy share at the same interval (sum of asset-level SoC over the region, divided by the regional capacity). Bins are low/mid/high tertiles. Time block follows the Bellman conditioning convention (morning $=$ 06:00--09:00, peak 1 $=$ 09:00--15:00, peak 2 $=$ 15:00--22:00; the rolling evaluation panel only retains peak 1 and peak 2 intervals).}
\end{table}

\paragraph{Transition diagnostics}
Table~\ref{tab:bellman-transition-diagnostics} reports kernel quality
pooled across anchors. Row-sum errors on the unconditional kernel are
machine-zero; the table also reports the minimum group counts in the
unconditional, signal-conditioned, residual-demand-conditioned, and
fully-conditioned versions of the kernel, which together govern the
fallback chain used in solving \eqref{eq:bellman-recursion}.

\begin{table}[h]
\centering
\caption{Transition kernel diagnostics}
\label{tab:bellman-transition-diagnostics}
\small
\begin{tabular}{@{} l r r r r r @{}}
\toprule
 & & & \multicolumn{3}{c}{Share with row support $n_{r,b,\sigma,z,k}\ge$} \\
\cmidrule(lr){4-6}
Conditioning of $\widehat P_{r,b,\sigma,z}(k,\cdot)$ & Intervals & Share & 30 & 100 & 1{,}000 \\
\midrule
Signal $+$ residual demand ($\sigma+z$) & 485,138 & 35.2\% & 93.8\% & 54.6\% & 0.2\% \\
Residual demand only ($z$, $\sigma$ pooled) & 894,841 & 64.8\% & 98.2\% & 92.6\% & 8.6\% \\
\bottomrule
\end{tabular}
\normalsize

\vspace{4pt}
{\footnotesize \textit{Notes:} Transition-kernel diagnostics evaluated at the $(r, b, \sigma, z, k)$ states actually visited by the rolling evaluation panel (total 1,379,979 battery-intervals). The conditioning level used at solve time is $\sigma+z$ when the aggregate-SoC signal is named (low, mid, high) and $z$-only otherwise. ``Row support'' is the number of within-window 5-min observations the kernel row was estimated from. The maximum row-sum error across visited rows is 2.2e-16 (95th percentile 1.1e-16); the share of evaluated intervals with non-missing $\widehat m_{it}$ is 100.0\%.}
\end{table}

\paragraph{FCAS diagnostics}
Table~\ref{tab:bellman-fcas-diagnostics} reports the distribution of
the estimated FCAS headroom-value coefficients $a^{m}_{ibzk}$ across
services, together with the share of groups whose coefficients are set
by a pooled-fallback estimator rather than the fully conditioned group.
These objects are first-order for the Bellman because they govern how
much shadow value $\widehat m_{it}$ is attributable to energy arbitrage
versus FCAS option value.

\begin{table}[h]
\centering
\caption{FCAS headroom-value diagnostics}
\label{tab:bellman-fcas-diagnostics}
\small
\begin{tabular}{@{} l r r r r r @{}}
\toprule
 & & \multicolumn{3}{c}{Headroom value (AUD/MWh of headroom)} & Pooled \\
\cmidrule(lr){3-5}
Service & Cells & Median & p75 & p95 & fallback \\
\midrule
Raise (all) & 180,675 & 0.20 & 1.02 & 4.35 & 11.2\% \\
Raise reg & 180,675 & 0.09 & 0.58 & 2.35 & 11.2\% \\
Raise cont & 180,675 & 0.01 & 0.21 & 2.34 & 11.2\% \\
Lower (all) & 180,675 & 0.11 & 0.63 & 4.47 & 11.2\% \\
Lower reg & 180,675 & 0.01 & 0.21 & 1.11 & 11.2\% \\
Lower cont & 180,675 & 0.03 & 0.23 & 3.82 & 11.2\% \\
\bottomrule
\end{tabular}
\normalsize

\vspace{4pt}
{\footnotesize \textit{Notes:} Distribution of estimated FCAS headroom-value coefficients $a^{m}_{ibzk}$ across $(d, b, z, k)$ cells, pooled across the 44 successful weekly anchors. The coefficient enters the flow payoff as $a^{m}_{ibzk}\,H^{m}_i\,\Delta$ and is measured in AUD per MWh of physical headroom (MW of headroom $\times$ interval length in hours). ``Cells'' is the count of (battery, block, residual-demand state, residual-price bin) cells with a finite estimate. ``Pooled fallback'' is the share of cells whose value is set by a coarser hierarchical-fallback estimator (region-by-block-by-bin, or region-by-block) rather than the fully conditioned cell, before any optional shrinkage or upper-tail winsorization. ``Raise (all)'' and ``Lower (all)'' aggregate over regulation and contingency services within each direction.}
\end{table}

\paragraph{Deviation menu}
Table~\ref{tab:deviation-menu} reports the full set of operators used in
the broad deviation panel of Section~\ref{subsec:bellman-value}. Each
operator is a clearing-relative reallocation that takes the bid
band-group structure described in the main text as given. Shift sizes
are fixed shares of the relevant offered MW and are capped by available
movable MW in the source band.

\begin{table}[h]
\centering
\caption{Deviation menu for the incentive-compatibility diagnostic}
\label{tab:deviation-menu}
\small
\begin{tabular}{@{} l p{0.47\linewidth} r r @{}}
\toprule
Family & Deviation & Selected obs. & Evaluable obs. \\
\midrule
Within energy & Discharge energy $8$--$10\to4$--$7$ (10\%) & 800 & 800 \\
Within energy & Discharge energy $8$--$10\to1$--$3$ (3.5\%) & 800 & 800 \\
Within energy & Charge energy $4$--$7\to1$--$3$ (10\%) & 800 & 800 \\
Within FCAS & Raise FCAS $8$--$10\to1$--$3$ (2\%) & 800 & 647 \\
Within FCAS & Lower FCAS $8$--$10\to1$--$3$ (2.5\%) & 800 & 638 \\
Cross-market & Raise FCAS $8$--$10\to$ discharge energy $4$--$7$ (10\%) & 800 & 422 \\
Cross-market & Raise FCAS $8$--$10\to$ discharge energy $1$--$3$ (3.5\%) & 800 & 418 \\
Cross-market & Lower FCAS $8$--$10\to$ discharge energy $1$--$3$ (3.5\%) & 800 & 405 \\
Cross-market & Discharge energy $8$--$10\to$ raise FCAS $1$--$3$ (3.5\%) & 800 & 647 \\
Cross-market & Discharge energy $8$--$10\to$ lower FCAS $1$--$3$ (3.5\%) & 800 & 632 \\
Cross-market & Raise FCAS $8$--$10\to$ charge energy $1$--$3$ (10\%) & 800 & 324 \\
Cross-market & Lower FCAS $8$--$10\to$ charge energy $1$--$3$ (10\%) & 800 & 344 \\
Cross-FCAS & Raise FCAS $8$--$10\to$ lower FCAS $1$--$3$ (2\%) & 800 & 681 \\
Cross-FCAS & Lower FCAS $8$--$10\to$ raise FCAS $1$--$3$ (2.5\%) & 800 & 611 \\
\midrule
Total & & 11,200 & 8,169 \\
\bottomrule
\end{tabular}
\normalsize

\vspace{4pt}
{\footnotesize \textit{Notes:} One observation is a focal battery by five-minute interval by deviation type. Selected observations are the fixed-budget counterfactual-reclearing target bank. Evaluable observations require a successful exact bid-stack edit and reclear, and a valid Bellman match for the focal battery. Shift percentages are applied to the relevant offered MW and capped by available MW in the source band.}
\end{table}

\paragraph{Continuation values for deviation reclearing}
For a sampled stack deviation $d$ at battery-interval $(i,t)$, exact
reclearing produces a counterfactual energy operating point and FCAS
enablement, hence a counterfactual next-period state of charge
$S^{cf}_{i,t+1}$. We define the deviation in continuation value as
\begin{equation}
\Delta V_{itd}
:=
\E_{\ell\mid k_t}\!\left[V_{i,t+1}(S^{cf}_{i,t+1},\ell)\right]
-
\E_{\ell\mid k_t}\!\left[V_{i,t+1}(S^{base}_{i,t+1},\ell)\right],
\label{eq:bellman-deltav}
\end{equation}
with the expectation taken under the same kernel
$\widehat P_{r,b(t),\sigma_t,z_t}(k_t,\cdot)$ used in solving
\eqref{eq:bellman-recursion}. The deviation's effect on owner-level
total payoff is the recleared change in current energy and FCAS profit
plus $\Delta V_{itd}$, and this is the object summarised by the
incentive-compatibility diagnostic in
Section~\ref{subsec:bellman-value}.

As a robustness alternative, $\Delta V_{itd}$ can be evaluated at the
contemporaneous residual-price bin $\ell=k_t$ rather than averaged under
the kernel. This is the natural counterpart to the information set
available at the deviation interval, but it departs from the kernel used
to solve \eqref{eq:bellman-recursion}.
Table~\ref{tab:bellman-deviation-diagnostics} reports both conventions
side by side together with the contribution of $\Delta V_{itd}$ to the
total recleared deviation gain.

\begin{table}[h]
\centering
\caption{Deviation-valuation diagnostics}
\label{tab:bellman-deviation-diagnostics}
\small
\begin{tabular}{@{} l r r r r r r @{}}
\toprule
 & & $|\Delta V|/$ & \multicolumn{2}{c}{$|\Delta V_{\text{exp}}-\Delta V_{\text{bin}}|$} & \multicolumn{2}{c}{IC-sign flip share} \\
\cmidrule(lr){4-5}\cmidrule(lr){6-7}
Deviation family & Deviations & $|G^O|$ (med.) & Median & p90 & $\tau=1$ & $\tau=100$ \\
 & & & (AUD) & (AUD) & & \\
\midrule
Within energy & 2,400 & 70.7\% & 0.00 & 5.41 & 1.42\% & 0.08\% \\
Within FCAS & 1,285 & 0.0\% & 0.00 & 0.00 & 0.00\% & 0.08\% \\
Cross-market & 3,192 & 0.0\% & 0.00 & 1.40 & 1.32\% & 0.13\% \\
Cross-FCAS & 1,292 & 0.0\% & 0.00 & 0.00 & 0.00\% & 0.00\% \\
\bottomrule
\end{tabular}
\normalsize

\vspace{4pt}
{\footnotesize \textit{Notes:} Diagnostics for the continuation-value contribution to the recleared deviation gain $G^O_{itd}$. Sample is focal-asset rows from the four broad deviation panels with valid Bellman conditioning. Column three is the median ratio $|\Delta V_{itd}|/|G^O_{itd}|$, capped at one. Columns four and five compare $\Delta V_{itd}$ evaluated under the kernel-expectation convention used in the main results (`exp') against the contemporaneous-bin convention (`bin'). The last two columns report the share of deviations whose IC classification changes across conventions at tolerance $\tau$ AUD.}
\end{table}

The two conventions are economically indistinguishable on the realised
deviation panel: the median absolute discrepancy in $\Delta V_{itd}$ is
zero AUD across all four panels and the $90$th-percentile discrepancy
does not exceed $\$6$. The share of deviations whose IC classification
flips between conventions is below $1.5\%$ at the $1$~AUD tolerance and
near zero at the $100$~AUD tolerance. The IC pre/post pattern reported in
Table~\ref{tab:ic-violations-broad} is therefore not driven by the choice
of continuation convention.

\paragraph{Incentive-compatibility violations by provider concentration}
Table~\ref{tab:ic-violations-concentration} splits the pre/post-ERI IC
diagnostic of Table~\ref{tab:ic-violations-broad} by the focal
battery's provider-group parked share in the same interval (Tesla,
Fluence, Major-IOU, and Other), bucketed into $20$-percentage-point
bins. The summary statistic in each group is the share of focal
battery-intervals on which the maximum owner-level deviation gain
$M^O_{it}$ across the sampled broad deviation menu exceeds a dollar
threshold $\tau\in\{1,100,1000\}$. The pre/post drop in IC violation
shares holds inside every concentration bucket and at every
threshold, so the post-ERI improvement in incentive compatibility is
not driven by the changing concentration mix of the deviation panel
over time.

\begin{table}[h]
\centering
\caption{Incentive-compatibility violations by focal-provider parked
share, pre/post-ERI}
\label{tab:ic-violations-concentration}
\small
\begin{tabular}{@{} l l r r r r r @{}}
\toprule
Provider-group parked share & Period & Intervals & Mean share & \multicolumn{3}{c}{Share with $M^O_{it}>\tau$ at $\tau =$} \\
\cmidrule(lr){5-7}
 & & & & 1 AUD & 100 AUD & 1{,}000 AUD \\
\midrule
0--20\% & Pre-ERI & 631 & 9.1\% & 14.9\% & 2.2\% & 1.6\% \\
0--20\% & Post-ERI & 1,218 & 9.1\% & 8.0\% & 0.2\% & 0.0\% \\
20--40\% & Pre-ERI & 628 & 30.4\% & 22.6\% & 5.1\% & 1.8\% \\
20--40\% & Post-ERI & 1,373 & 28.7\% & 5.5\% & 0.7\% & 0.4\% \\
40--60\% & Pre-ERI & 603 & 48.9\% & 18.9\% & 3.8\% & 2.8\% \\
40--60\% & Post-ERI & 801 & 49.6\% & 9.1\% & 0.9\% & 0.1\% \\
60--80\% & Pre-ERI & 153 & 69.3\% & 15.7\% & 3.9\% & 3.9\% \\
60--80\% & Post-ERI & 375 & 69.5\% & 8.8\% & 1.1\% & 0.0\% \\
80--100\% & Pre-ERI & 167 & 86.3\% & 30.5\% & 28.1\% & 27.5\% \\
80--100\% & Post-ERI & 98 & 85.1\% & 4.1\% & 0.0\% & 0.0\% \\
\bottomrule
\end{tabular}
\normalsize

\vspace{4pt}
{\footnotesize \textit{Notes:} Provider-group parked share is the focal battery's provider-group share of regional parked discharge capacity in the sampled interval. Provider groups are Tesla, Fluence, Major-IOU, and Other. The table is restricted to rows for which this all-interval concentration measure is observed (8,169 of 8,169 evaluable deviation rows). One interval observation is a focal battery by five-minute interval. $M^O_{it}$ is the maximum owner-level gain across the sampled broad deviation menu for that battery-interval.}
\end{table}

\subsection{Software Provider Identification}\label{app:provider-id}

The empirical analysis requires a mapping from individual battery dispatch
units (DUIDs) to autobidder software providers. Because AEMO registration data
do not record which software platform operates a given battery, we assemble
this mapping from public sources. Table~\ref{tab:provider-mapping} reports the
complete result.

Each assignment carries one of two confidence levels. \emph{Confirmed}
($\bullet$) assignments rely on direct public evidence naming the software
platform for a specific battery: vendor investor-relations filings (e.g.,
Fluence 10-K and 10-Q filings listing contracted deployments), manufacturer
product pages identifying the deployed platform (e.g., W\"{a}rtsil\"{a} GEMS,
Energy Vault VaultOS), or asset-owner public statements naming the provider
(e.g., Iberdrola confirming Tesla Autobidder for its Australian portfolio).
\emph{Probable} ($\circ$) assignments rely on strong indirect evidence,
typically the combination of hardware identification (e.g., Tesla Megapack
units) with the manufacturer's known software-bundling pattern (e.g., Tesla's
standard Autobidder revenue-sharing arrangement), or a market participant's
known intermediary role for a portfolio of tolled assets (e.g.,
EnergyAustralia's intermediary function for several BESS assets). Batteries
with no public evidence linking them to any provider are marked as unknown and
excluded from all provider-level specifications.

Of the 60 registered battery DUIDs across the four NEM regions, 48 can be
linked to a software provider: 30 with confirmed evidence and 18 with probable
evidence. After consolidating the Kennedy gen-load pair into a single physical
asset, harmonising provider labels, and restricting to the set of usable
grid-scale batteries, the mapping yields 42 known-provider batteries served by
11 distinct providers across 21 provider-region combinations.

\begin{longtable}{@{} p{1.6cm} p{3.6cm} p{3.5cm} p{3.5cm} c @{}}
\caption{Battery-to-provider mapping}
\label{tab:provider-mapping} \\
\toprule
DUID & Station & Participant & Provider & C \\
\midrule
\endfirsthead
\toprule
DUID & Station & Participant & Provider & C \\
\midrule
\endhead
\midrule
\multicolumn{5}{r@{}}{\footnotesize\itshape Continued on next page} \\
\endfoot
\bottomrule
\endlastfoot
\midrule
\multicolumn{5}{@{}l}{\textbf{NSW1}} \\
\midrule
\texttt{BHB1} & Broken Hill Battery & AGL & AGL in-house & \textbullet \\
\texttt{LDBESS1} & Liddell BESS & AGL & AGL in-house & \textbullet \\
\texttt{DPNTB1} & Darlington Point ESS & EA/Edify & EA intermediary & $\circ$ \\
\texttt{RIVNB2} & Riverina ESS 2 & EA/Edify & EA intermediary & $\circ$ \\
\texttt{NESBESS1} & New England BESS & ACEN/NESF & Energy Vault VaultOS & \textbullet \\
\texttt{NESBESS2} & New England BESS & ACEN/NESF & Energy Vault VaultOS & \textbullet \\
\texttt{CAPBES1} & Capital Battery & GPG/Naturgy & Fluence Mosaic & \textbullet \\
\texttt{MLB01} & Mortlake BESS & Origin Energy & Fluence Mosaic & \textbullet \\
\texttt{ORABESS1} & Orana BESS & Akaysha/BlackRock & Fluence Mosaic & \textbullet \\
\texttt{QBYNB1} & Queanbeyan BESS & GPG/Naturgy & Fluence Mosaic & \textbullet \\
\texttt{WTAHB1} & Waratah Super Battery & Akaysha/BlackRock & Fluence Mosaic & \textbullet \\
\texttt{RESS1} & Riverina ESS 1 & Shell Energy & Shell Energy interm. & $\circ$ \\
\texttt{SMTHBES1} & Smithfield BESS & Iberdrola & Tesla Autobidder & $\circ$ \\
\texttt{WALGRV1} & Wallgrove BESS & Iberdrola & Tesla Autobidder & \textbullet \\
\texttt{ERB01} & Eraring BESS & Origin Energy & W\"{a}rtsil\"{a} GEMS & \textbullet \\
\texttt{LIMBESS1} & Limondale Battery & RWE & --- & --- \\
\texttt{QPSFB2} & Quorn Park Solar Hybrid & Potentia/Enel & --- & --- \\
\midrule
\multicolumn{5}{@{}l}{\textbf{QLD1}} \\
\midrule
\texttt{CHBESS1} & Chinchilla BESS & CS Energy & CS Energy in-house & $\circ$ \\
\texttt{GREENB1} & Greenbank BESS & CS Energy & CS Energy in-house & $\circ$ \\
\texttt{WANDB1} & Wandoan BESS & Vena Energy/AGL & Doosan GridTech & \textbullet \\
\texttt{BRNDBES1} & Brendale BESS & Akaysha & Fluence Mosaic & \textbullet \\
\texttt{ULPBESS1} & Ulinda Park BESS & Akaysha & Fluence Mosaic & \textbullet \\
\texttt{BBATTERY1} & Bouldercombe Battery & Genex Power & Tesla Autobidder & \textbullet \\
\texttt{SWANBBF1} & Swanbank BESS & CleanCo & Tesla Autobidder & $\circ$ \\
\texttt{TARBESS1} & Tarong BESS & Stanwell & Tesla Autobidder & $\circ$ \\
\texttt{WDBESS1} & Western Downs BESS & Neoen & Tesla Autobidder & \textbullet \\
\texttt{WDBESS2} & Western Downs BESS & Neoen & Tesla Autobidder & \textbullet \\
\texttt{KEPBG1} & Kennedy Energy Park & Windlab/Eurus & --- & --- \\
\texttt{KEPBL1} & Kennedy Energy Park & Windlab/Eurus & --- & --- \\
\texttt{SNB01} & Supernode BESS & Quinbrook & --- & --- \\
\texttt{SNB02} & Supernode BESS & Quinbrook & --- & --- \\
\midrule
\multicolumn{5}{@{}l}{\textbf{SA1}} \\
\midrule
\texttt{DALNTH1} & Dalrymple North BESS & AGL & AGL in-house & \textbullet \\
\texttt{TIB1} & Torrens Island BESS & AGL & AGL in-house & \textbullet \\
\texttt{TB2B1} & Tailem Bend 2 & Vena Energy & Doosan GridTech & \textbullet \\
\texttt{LGAPBS1} & Lincoln Gap WF Battery & Ratch/Nexif & Fluence Mosaic & $\circ$ \\
\texttt{MANNUMB1} & Mannum BESS & Epic Energy & Habitat Energy & \textbullet \\
\texttt{BLYTHB1} & Blyth BESS & Neoen & Neoen/NHOA & $\circ$ \\
\texttt{HPR1} & Hornsdale Power Reserve & Neoen & Tesla Autobidder & \textbullet \\
\texttt{LBB1} & Lake Bonney BESS & Iberdrola & Tesla Autobidder & \textbullet \\
\texttt{BUNGAMB1} & Bungama BESS & Amp Energy & W\"{a}rtsil\"{a} GEMS & \textbullet \\
\texttt{ADPBA1} & Adelaide Desal.\ Plant & SA Water & --- & --- \\
\texttt{BOWWBA1} & Bolivar Waste Water & SA Water & --- & --- \\
\texttt{CBWWBA1} & Christies Beach WWTP & SA Water & --- & --- \\
\texttt{CGBESS01} & Clements Gap BESS & Pacific Blue & --- & --- \\
\texttt{HVWWBA1} & Happy Valley WTP & SA Water & --- & --- \\
\texttt{TEMPB1} & Templers BESS & ZEN Energy/ZEBRE & --- & --- \\
\midrule
\multicolumn{5}{@{}l}{\textbf{VIC1}} \\
\midrule
\texttt{BALB1} & Ballarat BESS & EA/AusNet & EA intermediary & $\circ$ \\
\texttt{GANNB1} & Gannawarra ESS & EA/Edify & EA intermediary & $\circ$ \\
\texttt{HBESS1} & Hazelwood BESS & ENGIE/Eku Energy & Fluence Mosaic & \textbullet \\
\texttt{LVES1} & Latrobe Valley BESS & Tilt Renewables & Fluence Mosaic & \textbullet \\
\texttt{RANGEB1} & Rangebank BESS & Shell Energy/Eku & Fluence Mosaic & \textbullet \\
\texttt{TRGBESS1} & Terang BESS & FRV & Fluence Mosaic & \textbullet \\
\texttt{MOORABS1} & Moorabool Wind Farm & Goldwind & Goldwind in-house & $\circ$ \\
\texttt{KESSB1} & Koorangie ESS & Shell Energy & Shell Energy interm. & $\circ$ \\
\texttt{BULBES1} & Bulgana Green Power Hub & Neoen/HMC & Tesla Autobidder & $\circ$ \\
\texttt{MREHA1} & Melbourne REH A1 & Equis Energy & Tesla Autobidder & $\circ$ \\
\texttt{MREHA2} & Melbourne REH A2 & Equis Energy & Tesla Autobidder & $\circ$ \\
\texttt{MREHA3} & Melbourne REH A3 & SEC Energy & Tesla Autobidder & $\circ$ \\
\texttt{VBB1} & Victorian Big Battery & Neoen/HMC & Tesla Autobidder & \textbullet \\
\texttt{PIBESS1} & Phillip Island BESS & Mondo/AusNet & Ubi Platform & \textbullet \\
\end{longtable}

{\footnotesize \textit{Notes:} Column ``C'' denotes confidence:
$\bullet$~=~confirmed (direct public evidence naming the software platform),
$\circ$~=~probable (strong indirect evidence, typically hardware identification
combined with the manufacturer's known bundling pattern),  -- ~=~unknown
(excluded from provider-level specifications). ``EA'' abbreviates
EnergyAustralia. ``Participant'' is the asset owner or offtaker; ``Provider''
is the identified software platform operator. DUIDs follow AEMO's Dispatch
Unit Identifier convention. Multiple DUIDs sharing a station name (e.g.,
\texttt{WDBESS1}/\texttt{WDBESS2}) are separate dispatch units of the same
physical battery.}

\subsection{State-of-Charge Reconstruction}\label{app:soc-model}

We build a battery-level state-of-charge (SoC) series
at 5-minute frequency. Before the ERI reform on July~1,~2025, battery state
is not public, so SoC must be reconstructed from observable SCADA dispatch
data. After ERI, AEMO publishes end-of-interval energy storage fields with a
one-day lag, which we use both to calibrate the model and to anchor its
post-reform output. This subsection describes the stock-flow model, the
calibration procedure, and the validation results.

\paragraph{Model}
For each battery $i$ with nameplate energy capacity $E^{\max}_i$, we propagate
a 5-minute stock-flow recursion driven by SCADA charge and discharge
metering. Let $P^{\mathrm{chg}}_{it}$ and $P^{\mathrm{dis}}_{it}$ denote the
observed charging and discharging power (MW) at interval $t$, and let
$\Delta = 5/60$~h. The modelled state of charge evolves as
\begin{equation}
S_{i,t}
=
\mathrm{clip}\!\left(
\rho_i\, S_{i,t-1}
+ \eta_i^{\mathrm{ch}}\, P^{\mathrm{chg}}_{it}\,\Delta
- \frac{P^{\mathrm{dis}}_{it}\,\Delta}{\eta_i^{\mathrm{dis}}}
- \kappa_i\,\Delta,
\;\; 0,\;\; E^{\max}_i
\right),
\label{eq:soc-model}
\end{equation}
where $\eta_i^{\mathrm{ch}} \in [0.70, 0.99]$ is the charge efficiency,
$\eta_i^{\mathrm{dis}} \in [0.70, 0.99]$ the discharge efficiency,
$\rho_i \in [0.999, 1.0]$ a per-interval self-discharge factor, and
$\kappa_i \in [0.0, 2.0]$~MW an auxiliary parasitic load. The clip operator
enforces the physical bounds $S_{i,t} \in [0, E^{\max}_i]$.

\paragraph{Calibration}
The four parameters $(\eta_i^{\mathrm{ch}}, \eta_i^{\mathrm{dis}}, \rho_i, \kappa_i)$ are calibrated
separately for each battery by minimising the root mean squared error between
$S_{i,t}$ and the public post-ERI energy-storage field published by AEMO
(July~2025 through January~2026). The first seven days of each battery's
public series are excluded to avoid initial-condition bias. Optimisation uses
L-BFGS-B with the bounds stated above. Across the 42 known-provider
batteries, the calibrated charge efficiency has a median of $0.955$ (IQR
$[0.913, 0.977]$), the discharge efficiency a median of $0.990$ (most
batteries at the upper bound), the self-discharge factor a median of $0.9997$
(implying negligible inter-interval leakage), and the auxiliary load a median
of $0.20$~MW.

\paragraph{SoC series construction}
The calibrated model produces three SoC objects for each battery-interval.
\emph{Pre-ERI} (before July~1,~2025): $S_{i,t}$ is the pure model
propagation from \eqref{eq:soc-model}, initialised at half capacity. Because
no public anchor exists, pre-reform SoC is entirely model-driven.
\emph{Post-ERI}: the model is re-anchored daily at midnight to the previous
trading day's publicly released end-of-interval energy storage, then
propagated forward through the day using \eqref{eq:soc-model}. This daily
anchoring prevents drift and ensures that the modelled series reflects the
information set available to market participants at the start of each trading
day.

\paragraph{Empirical SoC variables}
We define the common-state variable $\mathrm{CommonSoC}_{rt}$ as the leave-one-out regional SoC share: for battery $i$ in region $r$ at interval $t$,
\[
\mathrm{CommonSoC}_{rt}^{(-i)}
:=
\frac{\sum_{j \in r,\, j \neq i} S_{j,t-288}}
     {\sum_{j \in r,\, j \neq i} E^{\max}_j},
\]
where the lag of 288~intervals corresponds to one trading day. The own-state
control $\mathrm{SoC}_{it} := S_{i,t-288} / E^{\max}_i$ is the
corresponding battery-level lagged share. Both variables are predetermined
with respect to current-interval bids.

\begin{table}[h]
\centering
\caption{SoC model validation: known-provider batteries}
\label{tab:soc-validation}
\small
\begin{tabular}{@{} l ccccc @{}}
\toprule
 & P10 & P25 & Median & P75 & P90 \\
\midrule
\multicolumn{6}{@{}l}{\textit{Panel A: Un-anchored full window (Jul 2025 -- Jan 2026)}} \\[3pt]
Pearson $\rho$ & 0.643 & 0.924 & 0.956 & 0.983 & 0.988 \\
$R^2$ & 0.393 & 0.846 & 0.910 & 0.964 & 0.974 \\
RMSE (\% of $E^{\max}$) & 3.1 & 5.3 & 7.2 & 10.3 & 15.0 \\
MAE (\% of $E^{\max}$) & 2.0 & 3.4 & 4.6 & 7.2 & 11.4 \\[6pt]
\multicolumn{6}{@{}l}{\textit{Panel B: Quasi-holdout (anchored Nov 1, propagated to Jan 2026)}} \\[3pt]
Pearson $\rho$ & 0.601 & 0.930 & 0.979 & 0.989 & 0.993 \\
RMSE (\% of $E^{\max}$) & 3.2 & 4.5 & 6.9 & 9.6 & 12.2 \\
\bottomrule
\end{tabular}

\vspace{4pt}
{\footnotesize \textit{Notes:} Distribution is across 42 known-provider
batteries. Panel~A runs the calibrated stock-flow model from the first
post-ERI observation without daily re-anchoring; this exercise mimics the
pre-reform backcast. Panel~B anchors the model once at November~1,~2025 and
propagates forward without re-anchoring, testing multi-month drift.
Percentiles are unweighted across batteries. Of the 42~batteries, 33 have
un-anchored Pearson~$\rho \geq 0.90$ and 32 have $R^2 \geq 0.80$. The
lower-tail batteries (P10) are predominantly recently commissioned assets with
short calibration windows or near-dormant units with minimal cycling.}
\end{table}

\paragraph{Validation}
We assess model accuracy using two exercises on the post-ERI window where
public SoC is observed. The first (``un-anchored'') runs the calibrated model
from the first ERI observation \emph{without} daily re-anchoring, mimicking
the pre-reform backcast where no public anchor exists. The second
(``quasi-holdout'') anchors the model at a single date (November~1,~2025) and
propagates forward un-anchored through January~2026, testing multi-month
accuracy with parameters trained on the full July--January window.
Table~\ref{tab:soc-validation} reports the results for the 42
known-provider batteries used in the empirical analysis.

\subsection{Reclearing-Engine Validation}\label{app:reclearing-validation}

The deviation analysis relies on \texttt{nempy} \citep{Prakash2022nempy}
to reproduce AEMO's regional dispatch outcome from the same MMS-DB and
NEMDE-XML inputs that AEMO uses at clearing time. As a direct check that
this reproduction is accurate enough for our purposes, the deviation scan
also computes a baseline clearing for every focal interval -- the engine
is run on the observed bid stack with no counterfactual edit. Comparing
this baseline price to AEMO's published regional reference price (RRP)
for the same interval is therefore an end-to-end test of the input
pipeline plus the optimiser.

Table~\ref{tab:reclearing-validation} reports the resulting price
discrepancy on the $3{,}956$ unique (interval, region) pairs in the
deviation panel. Pooled across all intervals the median absolute price
discrepancy is \$$0.05$ per MWh, and $72.5\%$ of intervals are
reproduced within \$$1$ of the published RRP, $89.5\%$ within \$$5$, and
$97.2\%$ within \$$20$. Replication is essentially exact across all four
NEM regions and across both the pre- and post-ERI windows. Accuracy
degrades, as expected, in price-cap and other extreme-price intervals
where AEMO's full constraint set is hardest to reproduce from public
inputs: among intervals with $\mathrm{RRP}>\$1{,}000$ the share within \$$20$ falls to $73\%$. These intervals are rare in the deviation panel,
are not concentrated in the evening-peak windows the conduct test
focuses on, and produce engine prices that differ from RRP by amounts
that are still small relative to the price level itself.

\begin{table}[h]
\centering
\caption{Reclearing-engine validation against AEMO's published RRP}
\label{tab:reclearing-validation}
\small
\begin{tabular}{@{} l r r r r r @{}}
\toprule
 & & & \multicolumn{3}{c}{Share with $|\Delta p|\le$} \\
\cmidrule(lr){4-6}
Cut & Intervals & Median $|\Delta p|$ & \$1 & \$5 & \$20 \\
\midrule
\multicolumn{6}{@{}l}{\textit{Panel A: pooled}} \\
All intervals & 3,956 & 0.05 & 72.5\% & 89.5\% & 97.2\% \\
\midrule
\multicolumn{6}{@{}l}{\textit{Panel B: by period}} \\
Pre-ERI & 1,335 & 0.02 & 83.1\% & 94.5\% & 97.8\% \\
Post-ERI & 2,621 & 0.11 & 67.1\% & 87.0\% & 96.9\% \\
\midrule
\multicolumn{6}{@{}l}{\textit{Panel C: by region}} \\
NSW1 & 769 & 0.05 & 71.1\% & 86.3\% & 94.1\% \\
QLD1 & 946 & 0.01 & 80.5\% & 91.3\% & 97.1\% \\
SA1 & 1,295 & 0.24 & 66.3\% & 89.7\% & 98.1\% \\
VIC1 & 946 & 0.03 & 74.2\% & 90.1\% & 98.4\% \\
\midrule
\multicolumn{6}{@{}l}{\textit{Panel D: by observed RRP magnitude (AUD/MWh)}} \\
$\le 0$ & 852 & 0.01 & 85.9\% & 95.5\% & 99.5\% \\
$0$--$50$ & 510 & 0.06 & 76.5\% & 87.6\% & 96.3\% \\
$50$--$200$ & 2,091 & 0.06 & 69.4\% & 89.7\% & 98.1\% \\
$200$--$1{,}000$ & 437 & 0.44 & 56.8\% & 81.9\% & 92.9\% \\
$>1{,}000$ & 66 & 0.00 & 72.7\% & 72.7\% & 72.7\% \\
\bottomrule
\end{tabular}
\normalsize

\vspace{4pt}
{\footnotesize \textit{Notes:} Replication accuracy of the nempy reclearing engine. For each unique (interval, region) cell in the deviation panel we feed nempy the same MMS-DB and NEMDE-XML inputs that AEMO used at clearing time and run it without any counterfactual bid edit; we then compare the engine's reproduced regional reference price to AEMO's published RRP for the same interval. $\Delta p$ is the engine price minus the published RRP, in AUD/MWh. Panel D bins the observed RRP to show where replication is tight (sub-cap intervals) and where it degrades (price-cap intervals, where AEMO's full constraint set is harder to reproduce from the public inputs).}
\end{table}

\subsection{Conduct-Test Sample Balance}\label{app:conduct-sample-balance}

Table~\ref{tab:conduct-sample-balance} reports the numerical
balance check accompanying
Figure~\ref{fig:conduct-sample-balance} in the main text. For
each variable the table reports means and medians across three
groups of Tesla- and Fluence-managed battery-intervals in
evening-peak hours (16:00--22:00). The \emph{Full} columns
cover all such intervals in the panel window. The
\emph{Eligible} columns restrict to intervals satisfying the
four structural pre-conditions under which the conduct test
has power: (i) the focal provider's near-margin set contains
more than one owner ($\mathrm{HHI}^{\mathrm{own}}_{\mathrm{within\,prov}}<1$);
(ii) the focal provider has positive near-margin share
($s^{\mathrm{prov}}>0$); (iii) the focal battery offers
positive MW within one band of the regional clearing price
(push-down feasibility); and (iv) the focal battery is itself
at or near the margin in the interval
(its marginal-position score exceeds $0.5$), which is the structural condition
that makes a push-down deviation actually change dispatch.
The \emph{Selected} columns restrict to the focal
battery-intervals that appear in the conduct-test panel. The
final column is the Kolmogorov--Smirnov $p$-value comparing
the Eligible and Selected distributions. The systematic
Full-to-Eligible step in clearing band, marginal-dispatch share,
and price reflects the structural restriction the test
requires; the Eligible-to-Selected step is small on
outcome-relevant variables, consistent with the panel being
a stratified draw on the structural pre-conditions rather
than on outcomes.

\begin{table}[h]
\centering
\caption{Conduct-test sample balance: full evening-peak panel,
eligibility set, and selected conduct panel}
\label{tab:conduct-sample-balance}
\small
\setlength{\tabcolsep}{4pt}
\begin{tabular}{lccccccc}
\toprule
 & \multicolumn{2}{c}{Full} & \multicolumn{2}{c}{Eligible} & \multicolumn{2}{c}{Selected} & \\
\cmidrule(lr){2-3}\cmidrule(lr){4-5}\cmidrule(lr){6-7}
Variable & Mean & Median & Mean & Median & Mean & Median & K--S $p$ \\
\midrule
Regional reference price (AUD/MWh) & 120 & 104 & 155 & 131 & 135 & 120 & $<\,0.001$ \\
Regional battery dispatched (MW) & 158 & 70.0 & 325 & 273 & 359 & 320 & $<\,0.001$ \\
Regional battery offered (MW) & 1,542 & 1,615 & 1,501 & 1,615 & 1,583 & 1,615 & $<\,0.001$ \\
Regional spare capacity (MW) & 1,384 & 1,325 & 1,176 & 1,099 & 1,223 & 1,178 & $<\,0.001$ \\
Active batteries in regional stack & 10.6 & 11.0 & 10.6 & 11.0 & 10.9 & 12.0 & $<\,0.001$ \\
Focal offered MW & 151 & 150 & 168 & 200 & 180 & 200 & $<\,0.001$ \\
Focal MW within $\pm 1$ band of clearing & 110 & 100 & 89.1 & 65.0 & 138 & 150 & $<\,0.001$ \\
Clearing band (1--10) & 6.09 & 6.00 & 3.47 & 3.00 & 3.44 & 3.00 & $<\,0.001$ \\
Focal-provider near-margin share $s^{\mathrm{prov}}$ & 0.403 & 0.434 & 0.441 & 0.482 & 0.435 & 0.451 & $<\,0.001$ \\
Owner HHI within focal provider & 0.594 & 0.500 & 0.513 & 0.483 & 0.457 & 0.385 & $<\,0.001$ \\
Focal dispatch share at margin & 0.162 & 0.000 & 0.896 & 1.00 & 0.978 & 1.00 & $<\,0.001$ \\
Focal near-offer share & 0.094 & 0.083 & 0.084 & 0.057 & 0.127 & 0.123 & $<\,0.001$ \\
\midrule
$N$ & \multicolumn{2}{c}{433,224} & \multicolumn{2}{c}{47,401} & \multicolumn{2}{c}{15,882} & \\
\bottomrule
\end{tabular}
\end{table}

\subsection{Conduct Estimator: Numerical Results}\label{app:conduct-table}

Table~\ref{tab:conduct-lambda-concentration} reports the point
estimates, cluster-subsampling confidence sets, and band sample
sizes underlying the $Q$-curves shown in
Figure~\ref{fig:conduct-Q-curves}. Rows are indexed by
focal-provider coalition, market, ERI period, and
$10$-percentage-point $s^{\mathrm{prov}}$ band; bands with fewer
than $30$ deviations are omitted, as in the figure. The $95\%$
confidence set is constructed by cluster subsampling at the $(i,t)$
level (Section~\ref{subsec:conduct-test}); $Q$-reduction is
$1-Q_n(\widehat\lambda)/Q_n(0)$, equal to the depth of the
corresponding curve in Figure~\ref{fig:conduct-Q-curves} expressed
as a percentage.

\begin{table}[h]
\centering
\caption{Conduct estimator by coalition, market, ERI period, and
focal-provider near-margin share band}
\label{tab:conduct-lambda-concentration}
\small
\begin{tabular}{@{} l l r r r r c r @{}}
\toprule
Period & $s^{\mathrm{prov}}$ band & Deviations & ($i$,$t$) cells & Mean $s^{\mathrm{prov}}$ & $\widehat\lambda$ & 95\% CS & $Q$-reduction \\
\midrule
\multicolumn{8}{@{}l}{\textbf{Panel A: Tesla Autobidder -- Energy deviations}} \\
\midrule
Pooled & 0--10\% & 148 & 54 & 0.08 & 1.00 & [0.42,\,1.00] & 1.0\% \\
 & 10--20\% & 441 & 160 & 0.15 & 1.00 & [0.90,\,1.00] & 0.1\% \\
 & 20--30\% & 873 & 273 & 0.24 & 1.00 & [0.87,\,1.00] & 0.5\% \\
 & 30--40\% & 1,840 & 594 & 0.36 & 1.00 & [0.95,\,1.00] & 25.2\% \\
 & 40--50\% & 8,065 & 2,721 & 0.46 & 1.00 & [0.99,\,1.00] & 27.8\% \\
 & 50--60\% & 9,111 & 3,064 & 0.55 & 1.00 & [0.99,\,1.00] & 13.8\% \\
 & 60--70\% & 3,197 & 1,070 & 0.63 & 1.00 & [0.95,\,1.00] & 20.1\% \\
 & 70--80\% & 120 & 38 & 0.73 & 0.31 & [0.00,\,0.97] & 9.2\% \\
\midrule
\multicolumn{8}{@{}l}{\textbf{Panel B: Tesla Autobidder -- FCAS deviations}} \\
\midrule
Pooled & 20--30\% & 194 & 70 & 0.24 & 0.00 & [0.00,\,0.55] & $<0.1\%$ \\
 & 30--40\% & 331 & 113 & 0.36 & 1.00 & [0.69,\,1.00] & $<0.1\%$ \\
 & 40--50\% & 151 & 48 & 0.46 & 0.00 & --- & $<0.1\%$ \\
 & 50--60\% & 72 & 24 & 0.55 & 0.00 & --- & $<0.1\%$ \\
 & 60--70\% & 36 & 11 & 0.65 & 0.00 & --- & $<0.1\%$ \\
\midrule
\multicolumn{8}{@{}l}{\textbf{Panel C: Fluence Mosaic -- Energy deviations}} \\
\midrule
Pooled & 20--30\% & 2,861 & 948 & 0.28 & 1.00 & [0.93,\,1.00] & 0.5\% \\
 & 30--40\% & 10,059 & 3,328 & 0.34 & 0.74 & [0.70,\,0.78] & 6.5\% \\
 & 40--50\% & 3,244 & 1,025 & 0.46 & 1.00 & [0.99,\,1.00] & 18.0\% \\
 & 50--60\% & 2,655 & 869 & 0.55 & 1.00 & [0.98,\,1.00] & 12.0\% \\
 & 60--70\% & 683 & 223 & 0.62 & 1.00 & [0.90,\,1.00] & 10.6\% \\
 & 70--80\% & 148 & 60 & 0.77 & 0.00 & [0.00,\,0.94] & $<0.1\%$ \\
 & 80--90\% & 133 & 60 & 0.81 & 0.00 & --- & $<0.1\%$ \\
\midrule
\multicolumn{8}{@{}l}{\textbf{Panel D: Fluence Mosaic -- FCAS deviations}} \\
\midrule
Pooled & 20--30\% & 121 & 41 & 0.28 & 0.00 & [0.00,\,0.83] & $<0.1\%$ \\
 & 30--40\% & 388 & 127 & 0.34 & 0.04 & [0.00,\,0.52] & $<0.1\%$ \\
 & 40--50\% & 276 & 86 & 0.46 & 0.00 & --- & $<0.1\%$ \\
 & 50--60\% & 272 & 87 & 0.54 & 0.00 & --- & $<0.1\%$ \\
 & 60--70\% & 81 & 25 & 0.63 & 0.00 & --- & $<0.1\%$ \\
\bottomrule
\end{tabular}
\normalsize

\vspace{4pt}
{\footnotesize \textit{Notes:} Conduct estimator $\widehat\lambda=\arg\min_{\lambda\in[0,1]}Q_n(\lambda)$ from~\eqref{eq:conduct-estimator}, fit separately by focal-provider coalition, market, period, and 10-percentage-point bands of the focal provider's near-margin MW share $s^{\mathrm{prov}}_{k_i,r_i,t}$. The sample is the conduct-test deviation panel of Section~\ref{subsec:conduct-test}, restricted to cells with at least two distinct owners inside the focal provider's near-margin set; cells with fewer than 30 rows or with no cross-cluster identifying variation are omitted. ``$Q$-reduction'' is $1-Q_n(\widehat\lambda)/Q_n(0)$, the share of the unrationalised deviation evidence at $\lambda=0$ that the conduct hypothesis at $\widehat\lambda$ accounts for; a $Q$-reduction below $1\%$ means the test is essentially flat in $\lambda$ and $\widehat\lambda$ is uninformative. 95\% confidence sets are obtained by cluster subsampling over $(i,t)$ cells and are reported as ``---'' when the cell is too small to subsample reliably. Coalitions with empty same-provider, different-owner partner sets in the region (Major-IOU and Other providers) are omitted because $\Delta\Pi_{k_i,-o_i,t}(d)\equiv 0$ leaves $\widehat\lambda$ structurally unidentified.}
\end{table}

\begin{figure}[h]
\centering
\caption{Conduct objective $Q_n(\lambda)/Q_n(0)$ by coalition, market, and near-margin share band}
\label{fig:conduct-Q-curves}
\includegraphics[width=\textwidth]{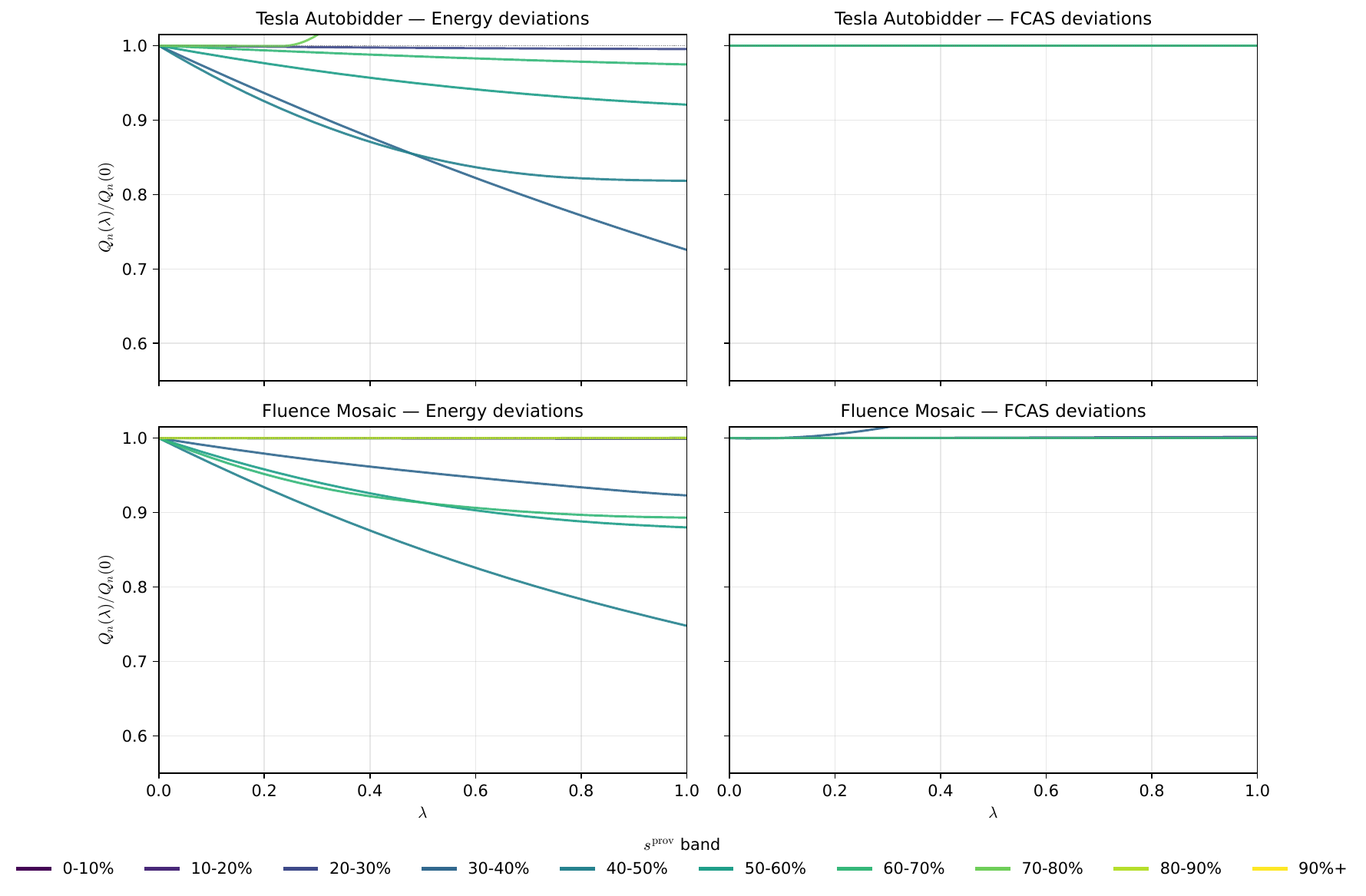}

\vspace{2pt}
{\footnotesize \textit{Notes:} Each curve plots
$Q_n(\lambda)/Q_n(0)$ for one combination of $s^{\mathrm{prov}}$
band and the pooled period within the coalition-market panel, on
the conduct-test deviation sample of
Section~\ref{subsec:conduct-test} restricted to multi-owner
intervals; colour indexes the $s^{\mathrm{prov}}$ band. Bins with
fewer than $30$ deviations are omitted. The FCAS panels (right
column) are uniformly flat: no FCAS curve in either Tesla or
Fluence is visually distinguishable from
$Q_n(\lambda)/Q_n(0)\equiv 1$. Boundary estimates of
$\widehat\lambda$ paired with near-zero $Q$-reduction are
uninformative and read as flat curves.}
\end{figure}

\subsection{Robustness: Diversity-Weighted Concentration Moderator}\label{app:conduct-Cit-robust}

The main-text fit uses the focal provider's near-margin MW share
$s^{\mathrm{prov}}_{k_i,r_i,t}$ as the concentration moderator. As
a robustness check, we re-fit the conduct estimator using the
diversity-weighted moderator
\[
C_{it}\;=\;s^{\mathrm{prov}}_{k_i,r_i,t}\,
\bigl(1-\mathrm{HHI}^{\mathrm{own}}_{k_i,r_i,t}\bigr),
\]
where $(1-\mathrm{HHI}^{\mathrm{own}})$ is the Gini-Simpson
diversity index of owner shares within the focal provider's
near-margin set. This moderator collapses the headline-share
moderator and the multi-owner sample restriction into a single
$[0,1]$-bounded index that is zero in single-owner-provider intervals
(where the cross-owner spillover channel is mechanically absent)
and rises with both larger provider near-margin presence and
greater owner dispersion within the provider.

Table~\ref{tab:conduct-lambda-Cit} reports the per-band fit using
$C_{it}$ binned into $10$-percentage-point bands. The identifying
regime is at $C_{it}\in[0.10,0.50]$ with the strongest evidence at
$C_{it}\approx 0.25$--$0.35$; under balanced two- and three-owner
splits these values correspond to focal-provider near-margin shares
of roughly $40$--$70\%$, the same regime identified by the main-text
$s^{\mathrm{prov}}$ binning. The FCAS panels are uniformly null.

\begin{table}[h]
\centering
\caption{Robustness: conduct estimator by diversity-weighted
moderator $C_{it}$}
\label{tab:conduct-lambda-Cit}
\small
\begin{tabular}{@{} l l r r r r c r @{}}
\toprule
Period & $C_{it}$ band & Deviations & ($i$,$t$) cells & Mean $C_{it}$ & $\widehat\lambda$ & 95\% CS & $Q$-reduction \\
\midrule
\multicolumn{8}{@{}l}{\textbf{Panel A: Tesla Autobidder -- Energy deviations}} \\
\midrule
Pooled & 0--10\% & 2,906 & 934 & 0.03 & 0.00 & [0.00,\,0.01] & $<0.1\%$ \\
 & 10--20\% & 1,091 & 357 & 0.17 & 1.00 & [0.93,\,1.00] & 2.5\% \\
 & 20--30\% & 11,223 & 3,756 & 0.26 & 1.00 & [1.00,\,1.00] & 27.9\% \\
 & 30--40\% & 8,411 & 2,838 & 0.34 & 1.00 & [1.00,\,1.00] & 14.8\% \\
 & 40--50\% & 1,610 & 541 & 0.42 & 0.62 & [0.35,\,0.89] & 0.7\% \\
\midrule
\multicolumn{8}{@{}l}{\textbf{Panel B: Tesla Autobidder -- FCAS deviations}} \\
\midrule
Pooled & 0--10\% & 490 & 167 & 0.04 & 1.00 & [0.79,\,1.00] & $<0.1\%$ \\
 & 10--20\% & 92 & 29 & 0.17 & 0.00 & --- & $<0.1\%$ \\
 & 20--30\% & 321 & 106 & 0.24 & 0.00 & --- & $<0.1\%$ \\
 & 30--40\% & 84 & 28 & 0.35 & 0.00 & --- & $<0.1\%$ \\
 & 40--50\% & 30 & 9 & 0.44 & 0.00 & --- & $<0.1\%$ \\
\midrule
\multicolumn{8}{@{}l}{\textbf{Panel C: Fluence Mosaic -- Energy deviations}} \\
\midrule
Pooled & 0--10\% & 1,810 & 583 & 0.01 & 1.00 & [0.94,\,1.00] & $<0.1\%$ \\
 & 10--20\% & 5,517 & 1,848 & 0.18 & 1.00 & [0.99,\,1.00] & 1.1\% \\
 & 20--30\% & 9,495 & 3,126 & 0.23 & 0.75 & [0.69,\,0.80] & 7.4\% \\
 & 30--40\% & 4,443 & 1,428 & 0.34 & 1.00 & [0.97,\,1.00] & 16.0\% \\
 & 40--50\% & 159 & 51 & 0.41 & 1.00 & [0.32,\,1.00] & 33.0\% \\
\midrule
\multicolumn{8}{@{}l}{\textbf{Panel D: Fluence Mosaic -- FCAS deviations}} \\
\midrule
Pooled & 0--10\% & 137 & 42 & 0.03 & 0.00 & --- & $<0.1\%$ \\
 & 10--20\% & 231 & 77 & 0.17 & 0.00 & [0.00,\,0.62] & $<0.1\%$ \\
 & 20--30\% & 469 & 152 & 0.24 & 0.50 & [0.18,\,0.81] & 0.8\% \\
 & 30--40\% & 354 & 113 & 0.34 & 0.00 & [0.00,\,0.41] & $<0.1\%$ \\
\bottomrule
\end{tabular}
\normalsize

\vspace{4pt}
{\footnotesize \textit{Notes:} Robustness fit of the conduct estimator using the diversity-weighted moderator $C_{it}=s^{\mathrm{prov}}_{k_i,r_i,t}\,(1-\mathrm{HHI}^{\mathrm{own}}_{k_i,r_i,t})$ instead of the headline focal-provider near-margin share $s^{\mathrm{prov}}_{k_i,r_i,t}$. $(1-\mathrm{HHI}^{\mathrm{own}})$ is the Gini-Simpson diversity index of owner shares within the provider's near-margin set; cells with a single-owner provider are excluded by construction because $C_{it}=0$ there. Conduct estimator $\widehat\lambda=\arg\min_{\lambda\in[0,1]}Q_n(\lambda)$ from~\eqref{eq:conduct-estimator}, fit separately by focal-provider coalition, market, period, and 10-percentage-point bands of $C_{it}$. ``$Q$-reduction'' is $1-Q_n(\widehat\lambda)/Q_n(0)$. 95\% confidence sets are obtained by cluster subsampling over $(i,t)$ cells and are reported as ``---'' when the cell is too small to subsample reliably.}
\end{table}

\subsection{Back-of-Envelope Cost of Identified Conduct}\label{app:conduct-cost}

This appendix gives the construction behind the back-of-envelope
estimate in Section~\ref{subsec:conduct-test}. We report two
estimates: a consumer-side price-difference estimate that uses the
recleared regional clearing price under an owner-only counterfactual
deviation and is the headline figure in the main text, and a 
battery-side diagnostic that considers only the dollar transfer on
deviations whose owner-portfolio gain flips sign once same-provider
spillovers are internalised. The construction follows the
markup logic of \citet{CalderWangKim2026}, adapted to a
uniform-price electricity auction in which the consumer side is the
regional clearing price times regional demand rather than a
demand-system markup.

\paragraph{Price-difference estimate (headline)}
For each focal battery $i$ at interval $t$ inside the identifying
range of Section~\ref{subsec:conduct-test}, define the owner-only
counterfactual deviation
\[
d^{0}_{it}\;=\;\arg\max_{d\in\{10\%,25\%,50\%\}}
\Delta\Pi_{o_i,t}(d)
\quad\text{subject to}\quad
\Delta\Pi_{o_i,t}(d)>0,
\]
that is, the discrete energy-push deviation that maximises the focal
owner's portfolio gain under the owner-only hypothesis. The
reclearing engine returns, for that deviation, a counterfactual
regional clearing price $p^{\mathrm{cf}}_{rt}(d^0_{it})$ together
with the observed baseline price $p^{\mathrm{obs}}_{rt}$. The
implied price difference
\[
\Delta p_{rt}\;=\;p^{\mathrm{obs}}_{rt}-p^{\mathrm{cf}}_{rt}(d^0_{it})
\]
is the increase in the regional clearing price that the focal
owner's coordinated restraint preserves. Because the NEM clears each
region at a single price in each interval, we collapse to the
region-interval level by selecting, among focal batteries with an
owner-only counterfactual in $(r,t)$, the focal whose $d^0$ yields
the largest non-negative price difference, and multiply once by
regional demand:
\[
\mathrm{Cost}_{rt}\;=\;\Delta p_{rt}\times \mathrm{DemandMW}_{rt}\times \tfrac{5}{60},
\]
in units of AUD. This avoids double-counting the same regional
price difference across focal batteries. The sample-window cost is
the sum of $\mathrm{Cost}_{rt}\cdot w_{rt}$ over identified
region-intervals, with $w_{rt}$ the inverse-inclusion-probability
weight of the representative $(i,t)$ observation, extrapolating
from the sample to the universe of evening-peak intervals over the
$305$-day window. The annualised row scales the weighted total by
$365/305\simeq 1.20$.

\begin{table}[h]
\centering
\caption{Consumer-side cost of identified same-provider conduct
(price-difference estimate)}
\label{tab:conduct-cost-wedge}
\footnotesize
\begin{tabular}{@{} l r r r r r r @{}}
\toprule
 & ($r$,$t$) & \multicolumn{3}{c}{$\Delta p$ (AUD/MWh)} & Sample & Annual \\
\cmidrule(lr){3-5}
Coalition & cells & Median & Mean & $p_{90}$ & (AUD) & (AUD/yr) \\
\midrule
Fluence Mosaic & 457 & 1.29 & 2.80 & 6.44 & \$1.98m & \$2.36m \\
Tesla Autobidder & 439 & 1.28 & 2.59 & 6.02 & \$2.58m & \$3.09m \\
Total & 896 & 1.28 & 2.70 & 6.24 & \$4.56m & \$5.45m \\
\bottomrule
\end{tabular}
\normalsize

\vspace{4pt}
{\footnotesize \textit{Notes:} Sample comprises the $19$ focal
batteries ($12$ Tesla Autobidder, $7$ Fluence Mosaic) of the conduct
panel, restricted to evening-peak intervals (16:00--22:00) in the
identifying bands (energy market, focal-provider near-margin share
$s^{\mathrm{prov}}$ in the identifying regime of
Table~\ref{tab:conduct-lambda-concentration}, $Q$-reduction $\ge 1\%$
and CS excluding zero). $(r,t)$ counts region-intervals with a
positive recleared price difference under the owner-only
counterfactual $d^{0}_{it}$. Price-difference percentiles are taken
over the same set. The sample-window cost is the
inverse-inclusion-probability-weighted total
$\sum_{(r,t)} \Delta p_{rt}\cdot \mathrm{DemandMW}_{rt}\cdot\tfrac{5}{60}\cdot w_{rt}$;
the annualised cost multiplies by $365/305\simeq 1.20$.}
\end{table}

The median price difference is small (about \$$1.28$/MWh) but the
distribution is fat-tailed (mean \$$2.70$/MWh, $90$th percentile around
\$$6.24$/MWh, with a maximum of approximately \$$59$/MWh). The
annualised total is \$$5.5$m, with Tesla accounting for \$$3.1$m and
Fluence for \$$2.4$m. The
estimate is bounded below the true cost of coordination for two
reasons: only deviations on the discrete $\{10\%,25\%,50\%\}$ MW grid
in the identifying bands contribute, and the per-$(r,t)$
price difference is the unilateral-counterfactual price difference,
which understates the price effect that fully coordinated
same-provider bidding could deliver.

\paragraph{Battery-side transfer (diagnostic)}
A narrower diagnostic isolates the dollar magnitude on the
\emph{sign-flipping} subset of the deviation panel. Define the
left-on-table set
\[
\mathcal{L}^\star_{(i,t)}\;=\;
\arg\max_{d}\Delta\Pi_{o_i,t}(d)
\quad\text{s.t.}\quad
\Delta\Pi_{o_i,t}(d)>0,\;
\Delta\Pi_{o_i,t}(d)+\Delta\Pi_{k_i,-o_i,t}(d)<0,
\]
that is, the $(i,t)$ intervals in which same-provider weighting
flips the sign of the focal owner's best deviation gain. On
$\mathcal{L}^\star$ the focal owner foregoes $\Delta\Pi_{o_i,t}$ in
portfolio profit, same-provider partners retain
$|\Delta\Pi_{k_i,-o_i,t}|$ of inframarginal revenue, and other
(different-provider) regional batteries retain
$|\Delta\Pi^{\mathrm{nei}}_t|$.
Table~\ref{tab:conduct-cost-boe} reports these magnitudes by
coalition.

\begin{table}[h]
\centering
\caption{Battery-side dollar transfer on sign-flipping deviations
(diagnostic)}
\label{tab:conduct-cost-boe}
\footnotesize
\begin{tabular}{@{} l r r r r r @{}}
\toprule
Coalition & $\mathcal{L}^\star$ cells & Focal foregone & Same-prov.\ retained & Other-batt.\ retained & Battery transfer \\
\midrule
\multicolumn{6}{@{}l}{\textit{Sample totals (305-day window)}} \\
Fluence Mosaic & 307 & \$10.2k & \$18.4k & \$11.1k & \$29.5k \\
Tesla Autobidder & 401 & \$18.3k & \$35.7k & \$16.2k & \$51.9k \\
Total & 708 & \$28.6k & \$54.2k & \$27.3k & \$81.5k \\
\midrule
\multicolumn{6}{@{}l}{\textit{Annualised ($\times1.20$)}} \\
Fluence Mosaic & & \$12.2k & \$22.1k & \$13.3k & \$35.4k \\
Tesla Autobidder & & \$21.9k & \$42.7k & \$19.4k & \$62.1k \\
Total & & \$34.2k & \$64.8k & \$32.7k & \$97.5k \\
\bottomrule
\end{tabular}
\normalsize

\vspace{4pt}
{\footnotesize \textit{Notes:} As Table~\ref{tab:conduct-cost-wedge}
but restricted to the sign-flipping subset $\mathcal{L}^\star$ ($708$ region-intervals in the sample) and tabulating
battery-side dollar magnitudes rather than the consumer-side price
effect. The diagnostic is intentionally narrower than
Table~\ref{tab:conduct-cost-wedge}: it considers only the discrete
deviations that flip sign between the two hypotheses, and excludes
both the consumer cost on non-battery inframarginal generation and
the magnitude-shrinking contributions from deviations that remain
profitable under both hypotheses. The annualised row scales sample
totals by $365/305\simeq 1.20$.}
\end{table}

The two estimates measure different objects. The price-difference
estimate in Table~\ref{tab:conduct-cost-wedge} is the consumer-side
cost of the higher clearing price across all regional inframarginal
MW in the identified region-intervals, while the second estimate in Table~\ref{tab:conduct-cost-boe} is the dollar size of
the battery-side transfer on the strict sign-flipping subset
$\mathcal{L}^\star$. The two differ by roughly two orders of
magnitude because the price-difference estimate applies to the
broader regional inframarginal generation (not just the same-provider
fleet), and because it counts all region-intervals with a positive
price difference while the second transfer diagnostic is limited to
$\mathcal{L}^\star$.

\subsection{Robustness of the Conduct Test}\label{app:conduct-robustness}

This appendix collects five robustness exercises for the conduct test of
Section~\ref{subsec:conduct-test}. The first three address alternative
interpretations of the conduct finding -- an unobserved owner-level cost, a
mechanical explanation based on the marginal distribution of same-provider
spillovers, and whether the effect is specific to the provider coalition rather
than to any similarly placed set of near-margin batteries. The final two probe
the construction of the test itself -- whether the finding survives when each
bid is dated to its actual submission time, and whether the estimator recovers
internalisation in a positive control where theory requires it.

\subsubsection{Omitted owner-level cost bounds}\label{app:omitted-owner-cost}

The conduct test treats a positive owner-portfolio gain
$\Delta\Pi_{o_i,t}(d)>0$ as a contradiction of owner-level conduct.
A natural concern is that the estimated owner payoff omits private
components of the owner's objective: contract positions, tolling
constraints, degradation or warranty shadow costs, risk limits, or
private forecasts. This appendix quantifies how large such an
omitted owner-level component would have to be to absorb the
conduct evidence.

Write
\[
G^O_{itd}\equiv \Delta\Pi_{o_i,t}(d),\qquad
G^K_{itd}\equiv \Delta\Pi_{k_i,-o_i,t}(d),
\]
for the focal owner's regional portfolio gain and the same-provider,
different-owner spillover from deviation $d$. The conduct estimator
asks whether $G^O_{itd}+\lambda G^K_{itd}\le 0$. Suppose instead
that the owner-only objective includes an omitted non-negative
deviation cost $\Omega_{itd}$. The owner-only inequality becomes
\[
G^O_{itd}-\Omega_{itd}\le 0 .
\]
With no structure on $\Omega_{itd}$, the minimum cost needed to
rationalise an owner-only violation is simply
$\max\{G^O_{itd},0\}$, which is tautological. We therefore report
two more disciplined benchmarks.

First, consider a band-level linear omitted cost $c$ in AUD/MWh of
extra focal discharge. Let
\[
x^+_{itd}
=
\max\{\Delta q^{\mathrm{disp}}_{itd},0\}\times \tfrac{5}{60}
\]
be the recleared increase in focal discharge, in MWh, under the
deviation. For each coalition-concentration band
$\mathcal C$, define
\[
Q^{O}_{n,\mathcal C}(c)
=
\frac{1}{n_{\mathcal C}}
\sum_{(i,t,d)\in\mathcal C}
w_{itd}
\max\{G^O_{itd}-c x^+_{itd},0\}^{2}.
\]
The first statistic is
\[
c^Q_{\mathcal C}
=
\inf\left\{
c\ge 0:
Q^{O}_{n,\mathcal C}(c)\le
Q_{n,\mathcal C}(\widehat\lambda)
\right\}.
\]
This is the smallest common throughput-style owner cost that would
reduce the owner-only contradiction objective to the value achieved
by the estimated same-provider conduct parameter. It is deliberately
favourable to the omitted-cost explanation, because it only has to
match the aggregate squared-violation improvement rather than
rationalise every individual deviation.

Second, for each focal battery-interval $(i,t)$, define the
interval-specific omitted-cost rate
\[
\mu^\star_{it}
=
\sup_{d:G^O_{itd}>0}
\frac{G^O_{itd}}{x^+_{itd}}.
\]
This is the smallest AUD/MWh charge on extra focal discharge that
would rationalise all sampled owner-profitable deviations for that
battery-interval under owner-only conduct. If any owner-profitable
sampled deviation has $x^+_{itd}=0$, then a discharge-denominated
omitted cost cannot rationalise that deviation and
$\mu^\star_{it}=\infty$.

The table reports two internal benchmarks. The column $c^Q/11$
divides the band-level cost by the $11$ AUD/MWh maximum movement
in $\widehat m_{it}$ across the Bellman grid/control sensitivity
checks of Appendix~\ref{app:bellman-estimation}; a value above one
means the required omitted owner cost exceeds anything that Bellman
discretisation can plausibly produce. The last two columns report
the share of owner-violating battery-intervals for which
$\mu^\star_{it}$ exceeds \$$100$/MWh and \$$1{,}000$/MWh, the same
materiality cutoffs used in the Bellman incentive-compatibility
diagnostics.

\begin{table}[h]
\centering
\caption{Omitted owner-cost bounds for conduct-identifying energy
deviations}
\label{tab:conduct-omitted-owner-wedge}
\scriptsize
\begin{tabular}{@{} l l l r r r r r r r @{}}
\toprule
Coalition & Period & $s^{\mathrm{prov}}$ & ($i,t$) & $c^{Q}$ & $c^{Q}/11$ & Med. $\mu^\star$ & p90 fin. $\mu^\star$ & $\mu^\star>100$ & $\mu^\star>1{,}000$ \\
 & & band & cells & & & & & share & share \\
\midrule
Tesla & Pooled & 30--40\% & 594 & 16 & 1.4 & 25 & 77 & 36.4\% & 32.0\% \\
Tesla & Pooled & 40--50\% & 2,721 & -- & -- & 24 & 92 & 79.7\% & 78.2\% \\
Tesla & Pooled & 50--60\% & 3,064 & 59 & 5.4 & 27 & 105 & 77.1\% & 74.5\% \\
Tesla & Pooled & 60--70\% & 1,070 & -- & -- & 20 & 79 & 57.7\% & 55.0\% \\
Fluence & Pooled & 30--40\% & 3,328 & 8 & 0.7 & 24 & 89 & 71.7\% & 69.4\% \\
Fluence & Pooled & 40--50\% & 1,025 & 28 & 2.6 & 21 & 70 & 48.8\% & 45.9\% \\
Fluence & Pooled & 50--60\% & 869 & 15 & 1.4 & 23 & 121 & 62.2\% & 54.1\% \\
Fluence & Pooled & 60--70\% & 223 & 11 & 1.0 & 14 & 67 & 62.5\% & 59.4\% \\
\bottomrule
\end{tabular}
\normalsize

\vspace{4pt}
{\footnotesize \textit{Notes:} The table reports the omitted-owner-cost exercise for conduct-identifying energy-push cells: $Q$-reduction at $\widehat\lambda$ is at least one percent and the cluster-subsampling confidence set excludes zero. $c^Q$ is the smallest cell-level omitted owner cost in AUD/MWh of extra focal discharge that would reduce the owner-only squared-violation objective to $Q_n(\widehat\lambda)$. The column $c^Q/11$ benchmarks this cost against the $11$ AUD/MWh maximum movement in $\widehat m_{it}$ across Bellman grid/control sensitivity checks reported in Appendix~\ref{app:bellman-estimation}. $\mu^\star_{it}$ is the battery-interval linear wedge in AUD/MWh of extra focal discharge needed to rationalise all sampled owner-profitable deviations in that battery-interval under owner-only conduct. Quantiles of $\mu^\star_{it}$ are over finite values. Threshold shares are among owner-violating battery-intervals and count $\mu^\star_{it}=\infty$ as above threshold; the $100$ and $1{,}000$ AUD/MWh thresholds match the materiality cutoffs used in the Bellman IC diagnostics. Uniform weights match the conduct estimator.}
\end{table}

For Tesla Autobidder, two of the four identifying bands
($40$--$50\%$ and $60$--$70\%$) require an infinite uniform charge.
On these bands the positive-own-gain mass is supported by deviations
with no corresponding focal-discharge exposure, so no per-MWh charge
on extra discharge can reduce the moment objective below
$Q_n(\widehat\lambda)$. The remaining two Tesla bands require
\$$16$/MWh ($c^Q/11=1.4$) at $30$--$40\%$ and \$$59$/MWh
($c^Q/11=5.4$) at $50$--$60\%$. For Fluence Mosaic, the three corner
bands at $\widehat\lambda=1$ require \$$11$, \$$15$, and \$$28$/MWh
($c^Q/11$ from $1.0$ to $2.6$); only the $30$--$40\%$ interior
band, the weakest in the identifying set, sits below the Bellman
sensitivity benchmark at \$$8$/MWh ($c^Q/11=0.7$).

The battery-interval benchmark $\mu^\star_{it}$ delivers the same
verdict. Across the eight identifying bands, the share of
owner-violating battery-intervals individually requiring an omitted
cost above \$$100$/MWh ranges from $37\%$ (Tesla $30$--$40\%$) to
$80\%$ (Tesla $40$--$50\%$), and the share above \$$1{,}000$/MWh
ranges from $32\%$ to $78\%$. The bands with infinite $c^Q$ have
the highest battery-interval shares: a substantial fraction of the
deviations the conduct test flags are not rationalisable by any
finite per-MWh discharge cost.

The omitted-cost alternative therefore cannot account for the
conduct evidence at any reasonable benchmark. The two Tesla bands
with infinite required charge are decisive on their own, and the
remaining bands need omitted costs an order of magnitude above the
Bellman discretisation margin. The single band where the implied
charge sits below the benchmark, Fluence $30$--$40\%$, is also the
band where $\widehat\lambda$ is interior at $0.74$ rather than at
the corner, and where the conduct signal is weakest.

The bounds in this appendix are denominated in AUD per MWh of 
extra focal discharge and therefore do not cover owner objectives 
denominated in price exposure, such as inframarginal generation or 
contract positions that gain when the regional price stays high. 
The matched-placebo exercise of Appendix~\ref{app:conduct-matched-placebo} 
addresses these directly: any objective that scales with the 
recleared price change would be proxied equally well by a 
capacity-matched different-provider coalition, 
yet the same-provider coalition fits strictly better in every draw.

\subsubsection{Permutation placebo}\label{app:conduct-permutation-placebo}

This appendix tests whether the per-deviation pairing between the
focal owner's deviation gain $G^O_{itd}$ and the same-provider,
different-owner spillover $G^K_{itd}$ has identifying content beyond
the marginal distribution of $G^K$. We draw a placebo null by,
within each provider coalition (Tesla / Fluence), randomly
reassigning $G^K_{itd}$ across focal deviations. This shuffle
preserves the marginal distribution of the spillover within
coalition but destroys its per-deviation joint distribution with
the focal's deviation gain. On each of $B=300$ draws we refit the
binned $\widehat\lambda$ by $(\mathrm{coalition}, s^{\mathrm{prov}})$
band and record the resulting point estimate and $Q$-reduction.
We report empirical $p$-values with the standard plus-one
finite-sample correction,
$p=(1+\#\{b: \mathrm{stat}_b\ge \mathrm{stat}_{\mathrm{actual}}\})/(B+1)$.

\begin{table}[h!]
\centering
\caption{Permutation-null vs actual for the conduct estimator, by
$(\mathrm{coalition}, s^{\mathrm{prov}})$ band}
\label{tab:conduct-permutation-placebo}
\begin{tabular}{llrrrrrrr}
\toprule
 & & & \multicolumn{3}{c}{$\widehat\lambda$} & \multicolumn{3}{c}{$Q$-reduction (\%)} \\
\cmidrule(lr){4-6}\cmidrule(lr){7-9}
Coalition & $s^{\mathrm{prov}}$ (\%) & $n$ & Actual & Null p95 & $p$ & Actual & Null p95 & $p$ \\
\midrule
Tesla   & $<\!30$  & 1{,}462 & 1.00 & 1.00 & 0.641 &  0.4 &  2.35 & 0.485 \\
        & 30--40   & 1{,}840 & 1.00 & 1.00 & 0.578 & 25.2 &  5.12 & 0.003 \\
        & 40--50   & 8{,}065 & 1.00 & 1.00 & 0.837 & 27.8 &  4.19 & 0.003 \\
        & 50--60   & 9{,}111 & 1.00 & 1.00 & 0.983 & 13.8 &  3.61 & 0.003 \\
        & 60--70   & 3{,}197 & 1.00 & 1.00 & 0.771 & 20.1 &  1.49 & 0.003 \\
        & $>\!70$  &     123 & 0.31 & 1.00 & 0.495 &  9.2 & 44.17 & 0.282 \\
\midrule
Fluence & $<\!30$  & 2{,}876 & 1.00 & 0.66 & 0.023 &  0.5 &  0.37 & 0.037 \\
        & 30--40   &10{,}059 & 0.74 & 0.17 & 0.003 &  6.5 &  0.63 & 0.003 \\
        & 40--50   & 3{,}244 & 1.00 & 0.26 & 0.003 & 18.0 &  0.78 & 0.003 \\
        & 50--60   & 2{,}655 & 1.00 & 0.28 & 0.003 & 12.0 &  0.92 & 0.003 \\
        & 60--70   &     683 & 1.00 & 0.46 & 0.007 & 10.6 &  3.44 & 0.003 \\
        & $>\!70$  &     281 & 0.00 & 0.65 & 1.000 &  0.0 &  5.79 & 1.000 \\
\bottomrule
\end{tabular}

\vspace{2pt}
{\footnotesize \textit{Notes:} $B=300$ permutation draws shuffling
$G^K_{itd}$ across focal deviations within coalition. ``Null p95''
is the $95$th percentile of the null statistic across draws; $p$
is the empirical upper-tail $p$-value with plus-one finite-sample
correction.}
\end{table}

For Fluence Mosaic, both $\widehat\lambda$ and the $Q$-reduction
lie outside the null distribution in every identifying band
($30$--$70\%$) at $p\le 0.007$. For Tesla Autobidder, the
$\widehat\lambda$ corner is preserved under shuffling because
Tesla's spillover marginal is sufficiently mean-negative that
random pairings still pin $\lambda$ at one. The $\widehat\lambda$
statistic alone has no power against the marginal-distribution
explanation for Tesla; the $Q$-reduction comparison does. The
actual $Q$-reduction lies at $13.8$--$27.8\%$ across the four
Tesla identifying bands while the null $95$th percentile sits at
$1.5$--$5.1\%$ and the empirical $p$-value is $0.003$ uniformly.
Outside the identifying range
($s^{\mathrm{prov}}<30\%$ and $s^{\mathrm{prov}}>70\%$) the actual
$Q$-reduction is at most $0.5\%$. The Tesla comparisons lie inside
the null distribution; the one nominal rejection, Fluence below
$30\%$, concerns a fit improvement an order of magnitude smaller
than in the identifying bands.

\begin{figure}[h!]
\centering
\caption{Permutation placebo: actual vs null}
\label{fig:conduct-permutation-placebo}
\includegraphics[width=\textwidth]{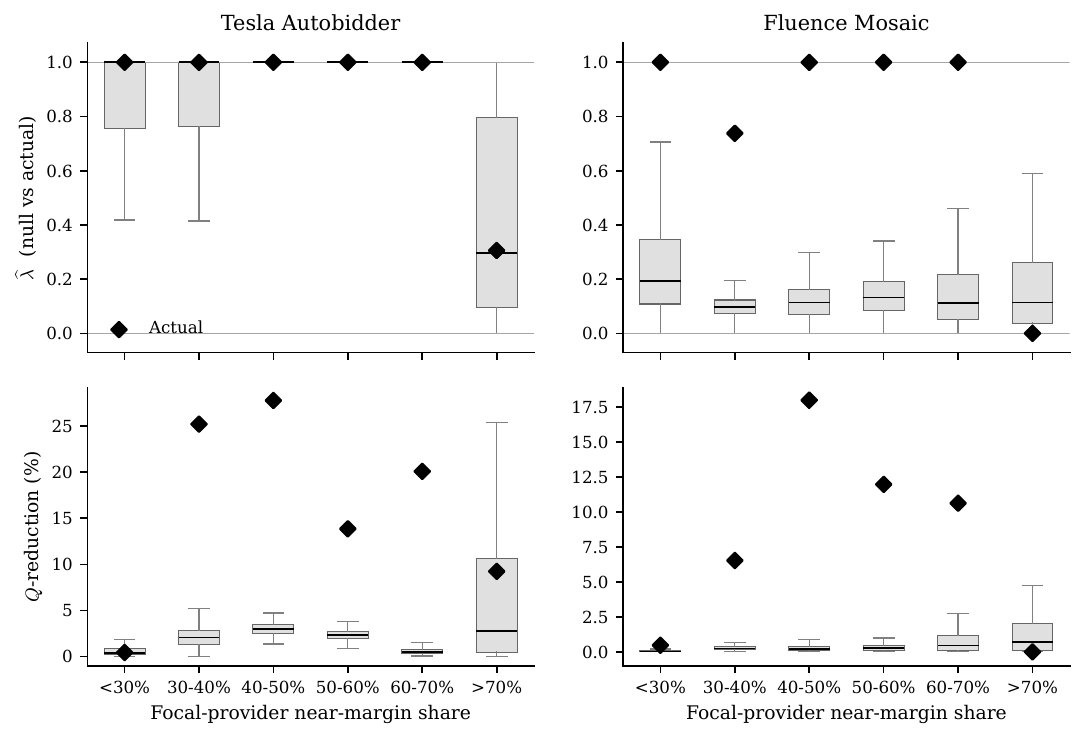}

\vspace{2pt}
{\footnotesize \textit{Notes:} Box-plots of the $B=300$ permutation
null for $\widehat\lambda$ (top row) and $Q$-reduction (bottom
row), with the actual point estimate overlaid as a solid diamond.
Within each coalition (Tesla / Fluence) the same-provider spillover
$G^K_{itd}$ is randomly reassigned across focal deviations on each
draw; whiskers extend to the full range of the null.}
\end{figure}

\subsubsection{Provider-specificity: matched placebo coalitions}\label{app:conduct-matched-placebo}

The permutation placebo shows that the per-deviation pairing matters, but not
that provider identity does -- shuffling also destroys the matching on price
impact and coalition size. This exercise tests provider identity directly.
Using the recleared profit change that each deviation imposes on every regional
battery, we form placebo coalitions from the different-provider, different-owner
batteries near the margin in the same region-interval, matched to the actual
same-provider coalition on battery count, and separately on both count and
installed capacity. We then compare the actual coalition's conduct fit
improvement to a null of $B=500$ matched placebo coalitions.
Table~\ref{tab:conduct-matched-placebo} reports the result. For both providers
the actual same-provider fit improvement exceeds that of every count- and
capacity-matched placebo draw ($p\le 0.010$). Capacity matching raises the
placebo baseline -- as it must, since the matched batteries are then comparably
exposed to the price change -- yet the same-provider coalition still rationalises
the observed restraint significantly better. The signal is therefore specific to
the provider coalition, not a generic consequence of any similarly sized and
similarly exposed set of near-margin batteries losing revenue when the price
falls.

\begin{table}[h]
\centering
\caption{Provider-specificity: matched placebo coalitions}
\label{tab:conduct-matched-placebo}
\begin{tabular}{l r r cc cc}
\toprule
 & & Same-provider & \multicolumn{2}{c}{Count-matched placebo} & \multicolumn{2}{c}{Capacity-matched placebo} \\
\cmidrule(lr){4-5}\cmidrule(lr){6-7}
Coalition & $n$ & $Q$-reduction & null mean [p95] & $p$ & null mean [p95] & $p$ \\
\midrule
Tesla & 21{,}815 & 19.6\% & 10.5 [13.2] & 0.002 & 13.2 [14.3] & 0.002 \\
Fluence & 16{,}151 & 9.7\% & 3.2 [6.7] & 0.010 & 2.1 [4.8] & 0.002 \\
\bottomrule
\end{tabular}

\vspace{4pt}
{\footnotesize \textit{Notes:} For each focal battery-interval in the identifying range ($s^{\mathrm{prov}}\in[0.30,0.70]$, multi-owner), the placebo coalition is drawn from the different-provider, different-owner batteries near the margin in the same region-interval, matched to the actual same-provider coalition on battery count (``count-matched'') or on both count and installed capacity via a Gaussian kernel (``capacity-matched''). Entries are the conduct fit improvement $R=1-Q_n(\widehat\lambda)/Q_n(0)$; the null summarises 500 placebo draws and $p$ is the fraction with $R\ge$ the actual same-provider value (plus-one correction). The same-provider coalition rationalises the observed restraint significantly better than matched different-provider coalitions with the same mechanical price exposure.}
\end{table}

\subsubsection{Lead-time split}\label{app:conduct-leadtime}

The reclearing engine values each deviation at the realised market outcome,
whereas the no-profitable-deviation restriction concerns the bidder's expected
profit given the information available when the operative bid was submitted. An
apparent owner-level violation could therefore reflect hindsight rather than
restraint. We date each interval's operative quantity bid to its submission
timestamp and split the identifying-range estimator by the lead time to
dispatch. The bulk of operative rebids are lodged close to dispatch (median
$7.6$ minutes). Table~\ref{tab:conduct-leadtime} shows that the
same-provider $Q$-reduction is concentrated in the bids submitted
within fifteen minutes of dispatch, where five-minute predispatch
forecasts are accurate and the realised outcome closely approximates
the bidder's information set, and vanishes for bids lodged more than
an hour ahead. For Tesla the share of owner-profitable deviations
falls with lead time in the same way. A hindsight artefact would produce the opposite pattern,
strongest for the longest-lead bids.

\begin{table}[h]
\centering
\caption{Conduct estimator by quantity-rebid lead time}
\label{tab:conduct-leadtime}
\begin{tabular}{llrrrr}
\toprule
Coalition & Rebid lead time & Deviations & Own-profitable (\%) & $\widehat\lambda$ & $Q$-reduction \\
\midrule
Tesla & $<15$ min & 20{,}125 & 42.8 & 1.00 & 19.8\% \\
 & $15$--$60$ min & 803 & 14.1 & 1.00 & 2.6\% \\
 & $>60$ min & 1{,}285 & 10.7 & 0.00 & 0.0\% \\
\midrule
Fluence & $<15$ min & 13{,}602 & 35.0 & 0.96 & 11.2\% \\
 & $15$--$60$ min & 2{,}649 & 28.6 & 1.00 & 4.9\% \\
 & $>60$ min & 390 & 36.4 & 1.00 & 0.0\% \\
\bottomrule
\end{tabular}

\vspace{4pt}
{\footnotesize \textit{Notes:} Conduct estimator $\widehat\lambda=\arg\min_{\lambda\in[0,1]}Q_n(\lambda)$ with a single same-provider, different-owner peer term, fit on the evening-peak energy deviation panel restricted to the identifying near-margin range $s^{\mathrm{prov}}\in[0.30,0.70]$ with at least two owners inside the focal provider's near-margin set. Rows split the sample by the number of minutes between the operative quantity rebid and dispatch. ``Own-profitable'' is the share of deviations with a positive recleared owner-portfolio gain; ``$Q$-reduction'' is $1-Q_n(\widehat\lambda)/Q_n(0)$. Under a hindsight alternative the signal would concentrate in long-lead bids; instead it concentrates in the near-dispatch bids, where predispatch foresight is accurate.}
\end{table}

\subsubsection{Same-owner positive control}\label{app:conduct-same-owner}

An owner should fully internalise the price effect of a deviation on its own
other batteries, so the conduct weight on same-owner, different-battery profits
should equal one. As a positive control we estimate that weight -- rather than
imposing it, as the headline test does -- on the battery-intervals in which the
focal owner operates at least one other battery in the region.
Table~\ref{tab:conduct-same-owner} reports $\widehat\lambda_O=1.00$ for Tesla
Autobidder with a $12$ percent improvement in fit, confirming that the
estimator detects joint-profit internalisation when it is present by
construction. The control is uninformative for Fluence Mosaic, whose owners
rarely operate more than one battery in the same region, so same-owner
spillovers are almost never present.

\begin{table}[h]
\centering
\caption{Same-owner positive control}
\label{tab:conduct-same-owner}
\begin{tabular}{lrrrr}
\toprule
Coalition & Battery-intervals & With same-owner peer & $\widehat\lambda_O$ & $Q$-reduction \\
\midrule
Tesla & 25{,}253 & 73.2\% & 1.00 & 12.4\% \\
Fluence & 21{,}424 & 11.8\% & 1.00 & 0.3\% \\
\bottomrule
\end{tabular}

\vspace{4pt}
{\footnotesize \textit{Notes:} Positive control estimating the weight $\widehat\lambda_O$ the focal owner places on the recleared profit effect on its \emph{own} other batteries in the region, $\arg\min_{\lambda_O}\sum w[\Delta\Pi^{\mathrm{focal}}+\lambda_O\Delta\Pi^{\text{same owner}}]_+^2$, on the evening-peak energy deviation panel restricted to focal battery-intervals with at least one same-owner peer in the region. Economic theory requires $\lambda_O=1$. ``With same-owner peer'' is the share of each coalition's battery-intervals that have such a peer.}
\end{table}

\subsection{Installed vs Near-Margin Provider Concentration}
\label{app:installed-vs-nearmargin}

The conduct test in Section~\ref{subsec:conduct-test} uses the
focal provider's per-interval near-margin MW share
$s^{\mathrm{prov}}_{k,r,t}$. 
Installed-capacity shares are the more familiar way to express 
a provider's footprint, and are the basis for the static index 
reported in Section~\ref{subsec:empirical-panel}. This
appendix documents that the two concentration objects are tightly
linked in our sample, so the conduct test's identifying near-margin
range can be mapped back to a range of installed-capacity shares.

Figure~\ref{fig:installed-vs-nearmargin} plots, for each region,
the per-interval near-margin share against the installed-capacity
provider share for every provider with non-zero installed
capacity. For each provider-region pair we mark the median
per-interval near-margin share and the $10$--$90\%$ band. Three
patterns are visible. First, the median per-interval near-margin
share tracks the installed share closely: for the two headline
providers, the median near-margin share lies within roughly $5$
percentage points of the installed share in every region. Second,
per-interval dispersion around the installed-share mean reflects
which batteries are dispatchable near the margin in that interval,
and varies across provider-region pairs from tight (QLD1 Tesla:
inter-decile range $\approx 17$ percentage points) to wide (NSW1
Fluence: inter-decile range $\approx 47$ percentage points).
Third, providers with installed shares above roughly $20\%$ spend
the majority of evening-peak intervals in the conduct-test
identifying near-margin band $s^{\mathrm{prov}}\in[0.30, 0.70]$:
$98\%$ of Tesla's QLD1 intervals, $80\%$ of Fluence's VIC1
intervals, $74\%$ of Tesla's VIC1 intervals, and $61\%$ of
Fluence's NSW1 intervals. The conduct test's identifying
near-margin range thus maps on average to installed-capacity shares
of roughly $20$--$60\%$.

\begin{figure}[h!]
\centering
\caption{Installed-capacity vs per-interval near-margin provider share}
\label{fig:installed-vs-nearmargin}
\includegraphics[width=\textwidth]{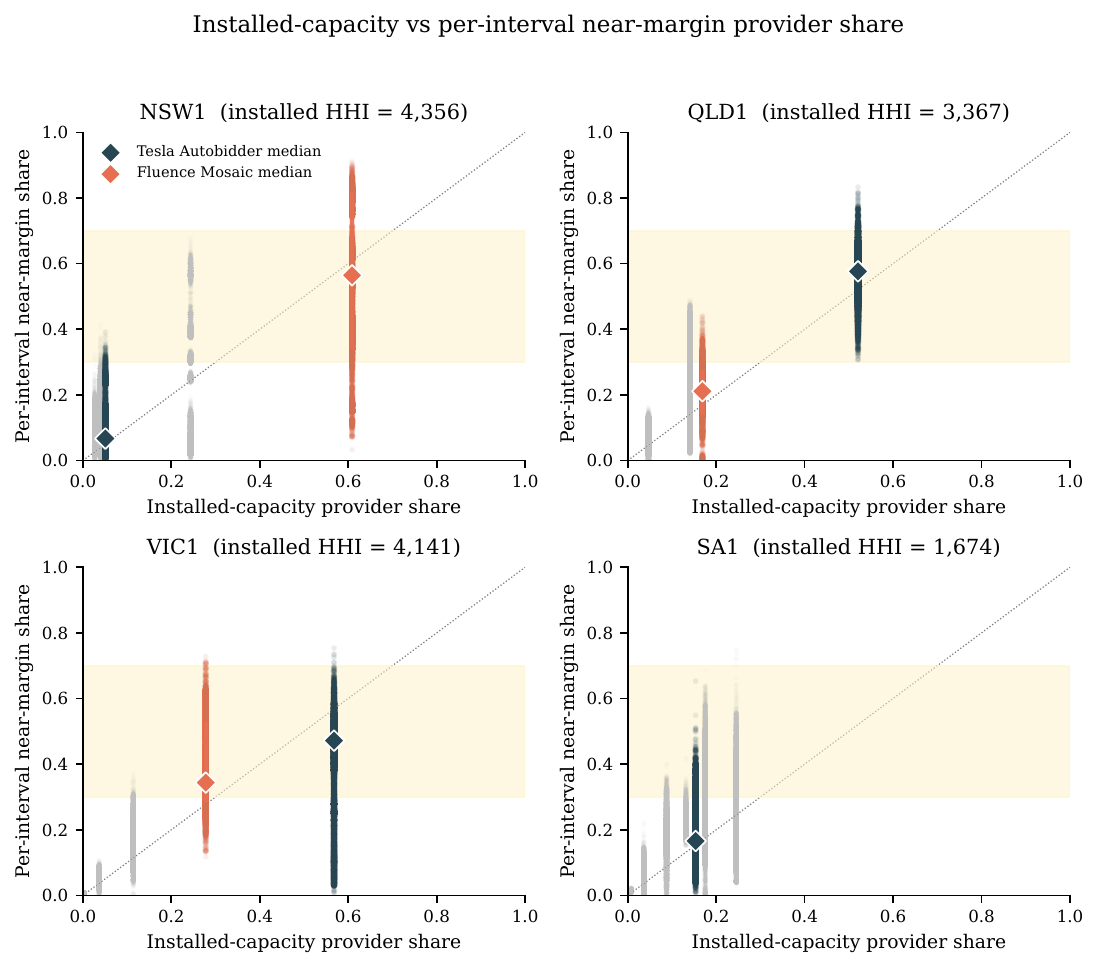}

\vspace{2pt}
{\footnotesize \textit{Notes:} One panel per NEM region.
Horizontal axis: provider's MW-weighted installed-capacity share.
Vertical axis: per-interval near-margin MW share averaged over all
evening-peak intervals in the sample. Coloured diamonds mark the
median per-interval near-margin share for Tesla Autobidder (dark
teal) and Fluence Mosaic (orange); the vertical line spans the
$10$--$90\%$ inter-decile range. Grey points are all other
providers in that region. The dashed $45^{\circ}$ line marks
equality of installed and near-margin shares. The shaded
horizontal band marks the conduct-test identifying range
$s^{\mathrm{prov}}\in[0.30, 0.70]$.}
\end{figure}

\subsection{FCAS Trapezium and Technical Constraints}\label{app:fcas-trapezium}

FCAS enablement is feasible only conditional on the battery's energy operating
point. This appendix gives the compact version of the AEMO trapezium constraint
used to interpret the bid problem and the counterfactual reclearing exercise
\citep{AEMO2021FCASNEMDE,KarimiArpanahiPourmousaviMahdavi2024}. Let
\[
x_{it}:=q_{it}-r_{it}
\]
denote the battery's net energy operating point in interval $t$, where
$x_{it}>0$ denotes discharge and $x_{it}<0$ denotes charging. For FCAS service
$j$, let $f_{ijt}$ denote enabled FCAS MW. AEMO validates $f_{ijt}$ against a
service-specific trapezium in the energy operating point. Figure~\ref{fig:fcas-trapezium}
shows both the service-level validation object and the joint capacity-allocation
logic emphasized by \citet{KarimiArpanahiPourmousaviMahdavi2024}: energy,
regulation FCAS, and contingency FCAS all draw on the same physical headroom.

\begin{figure}[t]
\centering
\caption{Schematic FCAS trapezium constraints}
\label{fig:fcas-trapezium}
\begin{subfigure}[t]{0.47\textwidth}
\centering
\resizebox{\linewidth}{!}{%
\begin{tikzpicture}[font=\footnotesize,line cap=round,line join=round,
    x=0.96cm,y=0.74cm]
        \draw[->] (-2.75,0) -- (2.85,0) node[below right] {$x_{it}$};
        \draw[->] (-2.65,-0.05) -- (-2.65,2.75) node[above left] {$f_{ijt}$};
        \draw[thick,fill=gray!14]
            (-2.25,0) -- (-1.05,2.05) -- (1.05,2.05) -- (2.25,0) -- cycle;
        \draw[dashed,black!50] (-1.05,0) -- (-1.05,2.05);
        \draw[dashed,black!50] (1.05,0) -- (1.05,2.05);
        \draw[dashed,black!50] (-2.25,2.05) -- (1.25,2.05);
        \node[below] at (-2.25,0) {ENMIN};
        \node[below] at (-1.05,0) {LBP};
        \node[below] at (1.05,0) {HBP};
        \node[below] at (2.25,0) {ENMAX};
        \node[above] at (-0.2,2.05) {MAXAVAIL};
        \node[below] at (-1.65,-0.42) {charge};
        \node[below] at (1.65,-0.42) {discharge};
\end{tikzpicture}
}
\caption{Service-level trapezium}
\label{subfig:fcas-trapezium-service}
\end{subfigure}
\hfill
\begin{subfigure}[t]{0.47\textwidth}
\centering
\resizebox{\linewidth}{!}{%
\begin{tikzpicture}[font=\footnotesize,line cap=round,line join=round,
    x={(0.96cm,-0.29cm)},y={(-0.54cm,-0.30cm)},z={(0cm,0.84cm)}]
    \coordinate (xmin) at (-2.2,0,0);
    \coordinate (lbp) at (-1.05,0,0);
    \coordinate (hbp) at (1.05,0,0);
    \coordinate (xmax) at (2.2,0,0);
    \coordinate (lr) at (-1.05,2.05,0);
    \coordinate (hr) at (1.05,2.05,0);
    \coordinate (lc) at (-1.05,0,2.05);
    \coordinate (hc) at (1.05,0,2.05);

    \draw[black!18] (-2.2,0,0) -- (-1.05,2.05,0) -- (1.05,2.05,0) -- (2.2,0,0);
    \draw[black!18] (-2.2,0,0) -- (-1.05,0,2.05) -- (1.05,0,2.05) -- (2.2,0,0);
    \draw[black!18] (-1.05,2.05,0) -- (-1.05,0,2.05);
    \draw[black!18] (1.05,2.05,0) -- (1.05,0,2.05);

    \fill[gray!16,opacity=0.92] (xmin) -- (lr) -- (hr) -- (xmax) -- cycle;
    \fill[gray!24,opacity=0.82] (xmin) -- (lc) -- (hc) -- (xmax) -- cycle;
    \fill[gray!34,opacity=0.82] (lr) -- (hr) -- (hc) -- (lc) -- cycle;
    \fill[gray!42,opacity=0.88] (xmin) -- (lr) -- (lc) -- cycle;
    \fill[gray!42,opacity=0.88] (hr) -- (xmax) -- (hc) -- cycle;

    \draw[thick] (xmin) -- (lr) -- (hr) -- (xmax);
    \draw[thick] (xmin) -- (lc) -- (hc) -- (xmax);
    \draw[thick] (lr) -- (lc) -- (hc) -- (hr) -- cycle;
    \draw[dashed,black!55] (0,2.05,0) -- (0,0,2.05);
    \draw[dashed,black!45] (0,0,0) -- (0,2.05,0);
    \draw[dashed,black!45] (0,0,0) -- (0,0,2.05);

    \draw[-{Latex[length=2mm]}] (-2.55,0,0) -- (2.65,0,0) node[below right] {$x_{it}$};
    \draw[-{Latex[length=2mm]}] (-2.2,0,0) -- (-2.2,2.45,0) node[below left] {$f^{\mathrm{reg}}_{it}$};
    \draw[-{Latex[length=2mm]}] (-2.2,0,0) -- (-2.2,0,2.58) node[above] {$f^{\mathrm{con}}_{it}$};

    \node[below] at (-.50,2.4,0) {fixed-$x$ split};
\end{tikzpicture}
}
\caption{Joint FCAS allocation}
\label{subfig:fcas-trapezium-joint}
\end{subfigure}
\vspace{2pt}
{\footnotesize \textit{Notes:} Panel~\subref{subfig:fcas-trapezium-service}
shows the service-specific NEM validation object: enabled FCAS MW must lie
below a piecewise-linear availability function of the energy operating
point. Panel~\subref{subfig:fcas-trapezium-joint} illustrates the joint
energy-FCAS constraint for a battery. The dashed triangle is a
fixed-operating-point cross-section: regulation and contingency services
are separate markets, but their enabled quantities draw on the same
physical headroom.}
\end{figure}

Write the AEMO trapezium parameters as
\[
\underline x_{ij}:=\mathrm{ENMIN}_{ij},\quad
\ell_{ij}:=\mathrm{LBP}_{ij},\quad
u_{ij}:=\mathrm{HBP}_{ij},\quad
\overline x_{ij}:=\mathrm{ENMAX}_{ij},\quad
\overline f_{ij}:=\mathrm{MAXAVAIL}_{ij}.
\]
The market trapezium imposes
\begin{equation}
0\le f_{ijt}\le T_{ij}(x_{it}),
\label{eq:fcas-trapezium-bound}
\end{equation}
where
\begin{equation}
T_{ij}(x)=
\begin{cases}
0, & x<\underline x_{ij},\\[2pt]
\overline f_{ij}\dfrac{x-\underline x_{ij}}{\ell_{ij}-\underline x_{ij}},
& \underline x_{ij}\le x<\ell_{ij},\\[9pt]
\overline f_{ij}, & \ell_{ij}\le x\le u_{ij},\\[2pt]
\overline f_{ij}\dfrac{\overline x_{ij}-x}{\overline x_{ij}-u_{ij}},
& u_{ij}<x\le \overline x_{ij},\\[9pt]
0, & x>\overline x_{ij}.
\end{cases}
\label{eq:fcas-trapezium-function}
\end{equation}
The flat part of the trapezium is the operating range in which the unit can
provide the full enabled quantity for service $j$. The two sloped sides capture
the fact that FCAS availability falls as the unit approaches the edge of its
feasible charge or discharge range.

For batteries, the market trapezium is not the only relevant restriction. Let
$F^R_{it}$ and $F^L_{it}$ denote the raise and lower response MW that must be
physically deliverable at the operating point. A compact representation of the
headroom constraints is
\begin{equation}
F^R_{it}\le \bar P_i^{\mathrm{dis}}-x_{it},
\qquad
F^L_{it}\le x_{it}+\bar P_i^{\mathrm{ch}},
\label{eq:fcas-power-headroom}
\end{equation}
where $\bar P_i^{\mathrm{dis}}$ and $\bar P_i^{\mathrm{ch}}$ are the discharge
and charge power limits. Raise FCAS requires upward headroom: the battery must
be able to increase net injection or reduce net load. Lower FCAS requires
downward headroom: the battery must be able to reduce net injection or increase
net load.

FCAS also uses scarce energy or scarce empty room when called. For a service
with delivery duration $\Delta_j$, a conservative energy-duration restriction is
\begin{equation}
\frac{F^R_{it}\Delta_j}{\eta_i^{\mathrm{dis}}}
\le
S_{it}-\underline S_i,
\qquad
F^L_{it}\eta_i^{\mathrm{ch}}\Delta_j
\le
\bar S_i-S_{it}.
\label{eq:fcas-energy-headroom}
\end{equation}
These inequalities show why FCAS belongs in the same dynamic problem as energy
dispatch. Raise services consume stored energy if activated; lower services
consume empty room. A change in the energy stack therefore changes the FCAS
quantities that can be feasibly enabled, and a change in FCAS enablement changes
the energy headroom left for dispatch. 

\end{document}